\newcommand{\thickhline}{%
    \noalign {\ifnum 0=`}\fi \hrule height 1pt
    \futurelet \reserved@a \@xhline
}
\newcolumntype{"}{@{\hskip\tabcolsep\vrule width 1pt\hskip\tabcolsep}}
\providecommand\m[1]{\ensuremath{#1}\xspace}
\renewcommand{\m}[1]{\ensuremath{#1}\xspace}
\newcommand{\trval}[1]{\m{\mathbf{#1}}}
	\newcommand{\limplies}{\Rightarrow}
	\newcommand{\lequiv}{\Leftrightarrow}
	\newcommand{\lrule}{\leftarrow}
	\newcommand{\cause}{\stackrel{c}{\lrule}}
	\newcommand{\ltrue}{\trval{t}}
	\newcommand{\lfalse}{\trval{f}}
	\newcommand{\lunkn}{\trval{u}}
	\newcommand{\ra}{\rightarrow}
	\newcommand{\Tr}{\ltrue}
	\newcommand{\Fa}{\lfalse}
	\newcommand{\Un}{\lunkn}
	\newcommand{\voc}{\m{\Sigma}}
	\newcommand{\struct}{\m{I}}
	\newcommand{\I}{\m{\mathcal{I}}}
	\newcommand{\theory}{\m{\mathcal{T}}}
	\newcommand{\PP}{\m{\mathcal{P}}}
	\NewDocumentCommand\inter{g+g}{%
	  \IfNoValueTF{#1}
	    {\struct}
	    {\m{#1^{#2}}}}
	\newcommand{\xxx}{\m{\overline{x}}}
	\newcommand{\ddd}{\m{\overline{d}}}
	\newcommand{\ttt}{\m{\overline{t}}}
	\renewcommand{\int}{\m{\mathbb{Z}}}
	\newcommand{\leqp}{\m{\leq_p}}
	\newcommand{\geqp}{\m{\geq_p}}
	\newcommand{\leqk}{\m{\leq_k}}
	\newcommand{\geqk}{\m{\geq_k}}
	\newcommand{\leqt}{\m{\leq_t}}
	\newcommand{\geqt}{\m{\geq_t}}
	\DeclareMathOperator\glb{glb}
	\DeclareMathOperator\lub{lub}
	\DeclareMathOperator\lfp{lfp}
	\NewDocumentCommand\subs{g+g}{%
	  \IfNoValueTF{#1}
	    {\m{/}}
	    {\m{#1/ #2}}}
	\newcommand{\logicname}[1]{\textsc{#1}\xspace}
	\newcommand{\idp}{\logicname{IDP}}
	\newcommand{\fodot}{\logicname{FO(\ensuremath{\cdot})}}
	\newcommand{\fo}{\FO}
\newcommand{\ouracronym}[3]{%
	\newacronym{#1}{#2}{#3}
	\expandafter\newcommand\csname #1\endcsname{\gls{#1}\xspace}%
}
	\def\ifenv#1{
	\def\@tempa{#1}%
	\def\@ttempa{#1*}%
	\ifx\@tempa\@currenvir
	\expandafter\@firstoftwo
	\else
	\expandafter\@secondoftwo
	\fi
	}
	\newcommand{\ddrule}[4]{\ensuremath{#1 \leftarrow #2 & \{#3\} & #4}}
	\newcommand{\drule}[2]{\ensuremath{#1 & \leftarrow & #2}}
	\newcommand{\darule}[4]{\ensuremath{#1 \leftarrow #2 & \{#3\} & #4}}
	\newcommand{\arule}[2]{\ensuremath{#1 \, &\leftarrow \, #2}}
	\newcommand{\LNDRule}[2]{
	\ifenv{array}
	{\drule{#1}{#2}}
	{ \ifenv{align}
		{\arule{#1}{#2}}
		{\ifenv{align*}
		{\arule{#1}{#2}}
		{ERROR: using LDRule in unsupported environment: \@currenvir}
		}
	}
	}
	\newcommand{\LDRule}[4]{
	\ifenv{array}
	{\ddrule{#1}{#2}{#3}{#4}}
	{ \ifenv{align}
		{\darule{#1}{#2}{#3}{#4}}
		{\ifenv{align*}
		{\darule{#1}{#2}{#3}{#4}}
		{ERROR: using LDRule in unsupported environment: \@currenvir}
		}
	}
	}
	\NewDocumentCommand\LRule{m+g+g+g}{%
		\IfNoValueTF{#2}%
		{#1.&}{%
		\IfNoValueTF{#3}
		{\LNDRule{#1}{#2.}}
		{\LDRule{#1}{#2.}{#3}{#4}}%
		}
	}
	\NewDocumentCommand\CLRule{m+g}{%
	\ifenv{array}
	{\cdrule{#1}{#2}}
	{ \ifenv{align}
		{\carule{#1}{#2}}
		{\ifenv{align*}
			{\carule{#1}{#2}}
			{ERROR: using CLRule in unsupported environment: \@currenvir}
		}
	}
	}
	\NewDocumentCommand\carule{m+g}{%
		\IfNoValueTF{#2}
			{\ensuremath{#1.}}
			{\ensuremath{#1 \, &\cause \, #2}}}
	\NewDocumentCommand\cdrule{m+g}{%
		\IfNoValueTF{#2}
			{\ensuremath{#1.}}
			{\ensuremath{#1 & \cause & #2}}}
	\newcommand{\algrule}[4]{
	\hbox{{#1}:}& 
	\quad #2 ~\longrightarrow~ #3 
	\hbox{~ if } #4\\
	}
	\newcommand{\AlgoRule}[4]{
	\ifenv{array}
	{\algrule{#1}{#2}{#3}{#4}}
		{ERROR: using AlgoRule in unsupported environment: \@currenvir}
	}
	\newcommand{\ignore}[1]{}
	\newcommand{\namedcomment}[3]{%
		\ifthenelse{\boolean{nocomments}}%
		{}%IF no comments, write nothing
		{%Otherwise
			\ifthenelse{\boolean{commentmargin}}%
				{ {\color{#3} \marginpar{\color{#3}\sc #2}#1}  }%Name in margin
				{  {\color{#3} {\sc #2}: #1}  }%Name not in margin
		}%
	}
	\newcommand{\mnamedcomment}[3]{\ifthenelse{\boolean{nocomments}}{}{{\marginpar{\tiny \color{#3}{\sc #2}:#1}}}}
\font\uwavefont=lasyb10 scaled 700
\def\spelling{\bgroup\markoverwith{\lower3.5\p@\hbox{\uwavefont\textcolor{Red}{\char58}}}\ULon}
\def\grammar{\bgroup\markoverwith{\lower3.5\p@\hbox{\uwavefont\textcolor{LimeGreen}{\char58}}}\ULon}
\def\phrasing{\bgroup\markoverwith{\lower3.5\p@\hbox{\uwavefont\textcolor{RoyalBlue}{\char58}}}\ULon}
\newcommand\remove{\bgroup\markoverwith{\textcolor{red}{\rule[0.5ex]{2pt}{0.4pt}}}\ULon}
\newcommand\setcitation[2]{%
	\csdef{mycommoncitation\text_uppercase:n{#1}}{#2}}
\newcommand\getcitation[1]{%
	\csuse{mycommoncitation\text_uppercase:n{#1}}}
\newcommand\refto[1]{%
      \ifcsname  mycommoncitation\text_uppercase:n{#1}\endcsname%
      \getcitation{#1}%
      \else%
      #1%
      \fi%
      }
\newcommand\mycite[1]{%
      \ifcsname mycommoncitation\text_uppercase:n{#1}\endcsname%
   \cite{\getcitation{#1}}%
  \else%
    \cite{#1}%
  \fi%
}	
\newcommand\mycitet[1]{%
      \ifcsname mycommoncitation\text_uppercase:n{#1}\endcsname%
   \citet{\getcitation{#1}}%
  \else%
    \citet{#1}
  \fi%
}	
\declaretheorem[style=plain,	name=Theorem,		numberwithin=section]{thm}
\declaretheorem[style=plain,	name=Theorem,		numberlike=thm]{theorem}
\declaretheorem[style=plain,	name=Proposition,	numberlike=thm]{proposition}
\declaretheorem[style=plain,	name=Lemma,		numberlike=thm]{lemma}
\declaretheorem[style=plain,	name=Lemma,		numbered=no]   {lem*}
\declaretheorem[style=definition,	name=Definition,	numberlike=thm]{definition}
\declaretheorem[style=definition,	qed=$\blacktriangle$,	numberlike=thm]{example}
\declaretheorem[style=definition,	qed=$\blacktriangle$,	numbered=no]{ex*}
\declaretheorem[style=remark,	name=Notation,		numbered=no]{nota*}
\declaretheorem[style=remark,	qed=$\blacktriangle$,	name=Note,		numbered=no]{note*}
\renewcommand\cite[1]{\citep{#1}}
\newcommand{\mc}[1]{\mathcal{#1}}
\definecolor{lavander}{cmyk}{0,0.48,0,0}
\definecolor{violet}{cmyk}{0.79,0.88,0,0}
\definecolor{burntorange}{cmyk}{0,0.52,1,0}
\def\lav{black!90}
\def\oran{black!30}
\tikzstyle{peers}=[draw,circle,violet,bottom color=\lav,
\tikzstyle{superpeers}=[draw,circle,burntorange, left color=\oran,
\tikzstyle{legendsp}=[rectangle, draw, burntorange, rounded corners,
\tikzstyle{legendp}=[rectangle, draw, violet, rounded corners, thin,
\tikzstyle{legend_general}=[rectangle, rounded corners, thin,
\newcommand{\T}{\m{\mc{T}}} %theory
\renewcommand{\I}{I}% Structure
\newcommand{\A}{\mc{A}} % Agents
\renewcommand{\fo}{\mc{L}} % FO Language
\newcommand{\dael}{\m{\mc{L}_d}} % dAEL language
\newcommand{\ael}{\m{\mc{L}_k}}
\newcommand{\pws}{\m{Q}} % Possible World Structure
\newcommand{\upws}{\m{\mc{Q}}} % distributed Possible World Structure
\newcommand{\Tt}{\m{\mc{T}}} % theory tuple
\newcommand{\daelr}{\m{\textnormal{dAEL}_R}} % restriction dAEL_R of dAEL
\newcommand{\ubp}{\m{\mc{B}}} % distributed Belief Pair
\newcommand{\leqpws}{\leq_K}
\newcommand{\lequpws}{\leq_K}
\newcommand{\leqbp}{\leq_{p}}
\declaretheorem[style=plain,name=Task,numberlike=thm]{task}
\newcommand{\upwsrevision}{\m{\mc{D}_{\mc{T}}}}
\newcommand{\ubprevision}{\m{\mc{D}^*_{\mc{T}}}}
\newcommand{\ubprevisionL}{\m{\mc{D}^c_{\mc{T}}}}
\newcommand{\ubprevisionU}{\m{\mc{D}^l_{\mc{T}}}}
\newcommand{\upwsstrevision}{S_{\mc{D}^*_{\mc{T}}}}
\newcommand\citetDMT[1]{DMT [\citeyear{#1}]\xspace}
\newcommand\citeDMT[1]{[DMT \citeyear{#1}]\xspace}
\newcommand{\upwsrev}[1]{\mc{D}_{#1}}
\newcommand{\upwsrevL}[1]{\mc{D}^c_{#1}}
\newcommand{\ubprev}[1]{\mc{D}^*_{#1}}
\newcommand{\upwsrevst}[1]{S_{\mc{D}^*_{#1}}}
\newcommand{\says}{\,\m{\mathit{says}}\,}
\newcommand{\say}{\m{\mathit{says}}}
\newcommand{\access}{\m{\mathit{access}}}
\newcommand{\delegate}{\m{\mathit{deleg\_{}to}}}
\newcommand{\revoke}{\m{\mathit{revoke}}}
\newcommand{\Terms}{\mathbb{T}}
\newcommand{\dummy}{\delta}
\renewcommand{\theory}{T}
\newcommand{\aproof}[1]{
\ifthenelse{\boolean{showproofs}}{#1}{}}
 \newcommand{\varass}{\m{a}}
 \newcommand\Apred{\m{\mathrm{Agt}}}
 \newcommand\vyes{\m{\mathrm{yes}}}
 \newcommand\vno{\m{\mathrm{no}}}
\renewcommand{\leqk}{\m{\leq_K}}
\renewcommand{\geqk}{\m{\geq_K}}
\begin{document}
\title{Distributed Autoepistemic Logic: \\ Semantics, Complexity, and Applications to Access Control}
 \author{Marcos Cramer\\ 
   TU Dresden\\ Dresden, Germany\\ marcos.cramer@tu-dresden.de\\ Tel.: +49 351 463 38426  \and
	Pieter Van Hertum\\ 
%    KUL/KBC???\\
   pietervanhertum@gmail.com \and
   Bart Bogaerts\\
   Vrije Universiteit Brussel (VUB)\\ Brussels, Belgium\\
   bart.bogaerts@vub.be \and
   Marc Denecker\\ KU Leuven\\ Leuven, Belgium\\ marc.denecker@kuleuven.be 
% 	Bart Bogaerts\\ Aalto University\\ Espoo, Finland\\ \small{bart.bogaerts@aalto.fi} \\ 
% 	Marc Denecker\\ KU Leuven\\ Leuven, Belgium\\ \small{marc.denecker@cs.kuleuven.be}}%\thispagestyle{plain}
}
\maketitle              % typeset the title of the contribution
% \blfootnote{}

%UNCOMMENT THIS TO SHOW PAGE NUMBERS
%\thispagestyle{plain}
%\pagestyle{plain}

\begin{abstract}

In this paper we define and study a multi-agent extension of autoepistemic logic (AEL) called \emph{distributed autoepistemic logic} (dAEL).
We define the semantics of dAEL using approximation fixpoint theory, an abstract algebraic framework that unifies different knowledge representation formalisms by describing their semantics as fixpoints of semantic operators.
We define 2- and 3-valued semantic operators for dAEL.
Using these operators, approximation fixpoint theory allows us to define a class of  semantics for dAEL, each based on different intuitions that are well-studied in the context of AEL.
We define a mapping from dAEL to AEL and identify the conditions under which the mapping preserves semantics, and furthermore argue that when it does not, the dAEL semantics is more desirable than the AEL-induced semantics since dAEL manages to contain inconsistencies.

The development of dAEL has been motivated by 
% One stand-out feature of our logic is that agents have full introspection in each others knowledge.
% This mutual full introspection between agents is motivated by 
an application in the domain of \emph{access control}.
% The development of dAEL is motivated from an applic
% The application in access control also motivates an extension of dAEL called dAEL(ID) that allows for inductive definitions in theories of agents.
We explain how dAEL can be fruitfully applied to this domain and discuss how well-suited the different semantics are for the application in access control.

\end{abstract}

% \todo{Move to right place (depending on journal). }

\section{Introduction}
% \todo{
% Check usage
% \say modality vs \says modality vs \says-modality vs \say-modality}

% \todo{four (abstract) or five (intro) semantics?}

% \todo{somewhere mention that this is an extended version of the ijcai (and the other one) paper.}

% this paper we propose an extension of autoepistemic logic (\emph{AEL}) \mycite{AEL} called \emph{distributed autoespistemic logic} (\emph{dAEL}) with multiple agents with full introspection into each other's knowledge. 
% The full introspection is motivated by an application of dAEL in access control. 
% This application also motivates an extension of dAEL called \emph{dAEL(ID)}, which extends the agent's language \emph{with inductive definitions}.

\emph{Access control} is concerned with methods to determine which principal (i.e.\ user or program) has the right to access a resource, e.g.\ the right to read or modify a file. 
Many logics have been proposed for distributed access control \cite{Abadi03,Gurevich07,Abadi08,Garg12,Genovese12}. 
Most of these logics use a modality $k \says$indexed by a principal $k$. 
\say-based access control logics are designed for systems in which different principals can issue statements that become part of the access control policy. 
$k \says \varphi$ is usually rendered as ``$k$ supports $\varphi$'', which can be interpreted to mean that $k$ has issued statements that -- together with additional information present in the system -- imply $\varphi$.
Different access control logics vary in their account of which additional information may be assumed in deriving the statements that $k$ supports.

In Section \ref{sec:motiv}, we argue that it is reasonable to assume that the statements issued by a principal are a complete characterization of what the agent supports.
This is similar to the ``All I know''-assumption \cite{ai/Levesque90} in autoepistemic logic (AEL) \cite{\refto{AEL},nonmon30/DeneckerMT11}, which states that an AEL theory is considered to be a complete characterization of what the agent knows. 
As such, one might wonder if AEL can be a suitable logic for representing access control policies. 

In order to illustrate the ``All I know''-assumption of AEL, consider the AEL theory $T_1$ containing only the formula $\neg K \neg p \rightarrow p$. This formula says that if it is not known that $p$ does not hold, then $p$ holds. The idea behind the ``All I know''-assumption is that a formula is considered to be known precisely when it is a consequence of the AEL theory under consideration. The formula $\neg p$ cannot be derived from $T_1$, so $\neg p$ is not known. Thus $\neg K \neg p$ is true. Taken together with the formula $\neg K \neg p \rightarrow p$, this implies $p$. So $p$ is a consequence of $T_1$. %So $K p$ holds in $T_1$. 
This can be contrasted with what happens in the theory $T_2$ that contains $\neg K \neg p \rightarrow p$ as well as $\neg p$. Due to the presence of $\neg p$ in the theory, $\neg p$ can be derived from $T_2$, so $\neg p$ is known. Therefore $K \neg p$ holds, i.e.\ $\neg K \neg p$ does not hold. Thus unlike in the case of $T_1$, $p$ cannot be derived from the theory. This example also illustrates the non-monotonicity of AEL: In $T_1$ we could derive $p$, but after adding the information $\neg p$ to the theory, we could no longer derive $p$. Intuitively, the formula $\neg K \neg p \rightarrow p$ says that as long as no claims implying $\neg p$ are included in the theory, $p$ may be concluded. 

Now consider an access control scenario in which a user $A$ wants to grant access to a file to a user $B$ but also wants to allow user $C$ to deny this access to user $B$. In this case, $A$ wants that the access right of user $B$ is derivable as long as user $C$ does not issue a denial of that access right. So the derivability of the access right behaves similarly as the derivability of $p$ in the AEL example in the previous paragraph. In this paper, we will make use of this similarity between AEL and reasoning about access rights to motivate the application of a modified version of AEL to access control.

A first restriction that prohibits the application of AEL to access control is that AEL is designed to only model the state of mind of a \emph{single agent}, while in the domain of access control, typically multiple agents are in play. 
An extension to AEL with multiple agents has been defined by \citet{VlaeminckVBD/KR2012}, but this extension requires a global stratification on the agents, i.e., an order on the agents, where agents higher in the order can only refer to knowledge/statements of agents lower in the order. 
This is undesirable for a distributed system; e.g., in Section \ref{sec:use-cases} we present situations where such an order simply does not exist. 
Therefore, we extend AEL to a truly distributed multi-agent setting, and name our extension \emph{distributed autoepistemic logic} (\emph{dAEL}).
We argue in Section \ref{sec:AC} that the proposed extension provides a good formal model of the \say-modality. 
% The application of access control also motivates an extension of dAEL with \emph{inductive definitions} called \emph{dAEL(ID)}. 

As the term ``autoepistemic logic'' suggests, AEL was designed to model 
(a single agent's) 
\emph{knowledge}, including knowledge derived from reasoning about knowledge. 
However, the formalism of AEL can be applied to model other modalities too. 
% In many papers on AEL, the terms ``knowledge'' and ``belief'' are used interchangeably, as the formalism can also be used to model belief (including belief derived from reasoning about belief). 
Note that claims about an agent's knowledge 
% or belief, 
are claims about that agent's internal state of mind. 
However, the formalism of AEL does not presuppose that its $K$ modality represents an internal state of mind of an agent. 
For example, we can interpret the $K$ modality to refer to the public commitments of an agent, i.e.\ interpret $K \phi$ to mean that the agent in question has publicly made statements that imply $\phi$, and as such identify $K$ with the \say modality. 
In what follows, we will keep the AEL terminology and refer to $K$ as ``knowledge'' without thereby implying that it represents an internal state of mind.

% In line with the terminology in the AEL literature, we use the terms ``knowledge'' and ``belief'' (and the verbs ``knows'' and ``believes'') interchangeably to refer to the $K$ modality of AEL, without thereby implying that it represents an internal state of mind.  
% Similarly, we use terms like ``positive introspection into other agents' knowledge'' to describe a certain formal behaviour of the modality, without implying that an agent knows the internal state of mind of another agent.

In dAEL, agents have full (positive and negative) introspection into other agents' knowledge. 
This is of course an unreasonable assumption when the $K$ modality  represents an internal state of mind like actual knowledge. % or belief.
It is, however, reasonable when $K \phi$ is interpreted to mean that an agent has (publicly) issued statements that imply $\phi$. 

% In Section \ref{sec:AC}, we discuss a possible application of dAEL in access control. 
% In this application, the assumption the agents have full introspection into other agents' knowledge is well-motivated. 
% This application also motivates extending dAEL with inductive definitions, which gives rise to dAEL(ID).

Section \ref{sec:motiv} gives some preliminary motivation for the design choices of dAEL based on the access control application that we have in mind.
Section \ref{s:preliminaries} contains preliminaries from AEL and approximation fixpoint theory. In Section \ref{sec:dael}, we first define the syntax of dAEL, then define a 2-valued and a 3-valued semantic operator for dAEL, and then show how approximation fixpoint theory can be applied to these operators to define a class of semantics for dAEL corresponding to equally-named,  well-known semantics for AEL. In Section \ref{sec:mapping}, we define a mapping from dAEL to AEL and show that for a subset of the logic defined by a consistency requirement, the mapping preserves all semantics. 
The class of theories in which the semantics coincide are, intuitively, those in which all agents have consistent knowledge; outside of this class,  we show that our new logic dAEL manages to contain inconsistencies within a single agent, while in AEL this is impossible. 
Next, in Section \ref{sec:AC}, we discuss some use cases of applying dAEL to access control, and in Section \ref{sec:compl}, we study complexity of inference in our logic. After discussing related work in Section \ref{sec:related}, we conclude the paper in Section \ref{sec:conclusion} with remarks about possible topics for future work.
%Before defining our logic, we present some preliminaries on AEL and approximation fixpoint theory.
%, an abstract algebraic framework in which all common semantics of AEL can be characterised.

\paragraph{Publication History}
A preliminary version of this paper was presented at the IJCAI conference \cite{ijcai/HertumCBD16}. 
The current paper extends the previous work with examples, proofs, a detailed account of the semantical relationship between dAEL and AEL and a more detailed discussion of the applicability of dAEL to access control. 
In contrast to the conference paper, the current version no longer defines how the construct of inductive definitions can be incorporated into dAEL, as we found that it complicated the logic without being necessary for our applications.

\section{Motivation}\label{sec:motiv}
% \bart{Why do we call this motivation ``preliminary''?}
Before defining the syntax and semantics of dAEL, we discuss some general features of the \say-based approach to access control to give some preliminary motivations for the formalism. %We will provide further motivation and a comparison to other

An \emph{access control policy} is a set of norms defining which principal is to be granted access to which resource under which circumstances.
Specialized logics called \emph{access control logics} were developed for representing policies and access requests and reasoning about them.
A general principle adopted by most logic-based approaches to access control is that access is granted if and only if it is logically entailed by the policy.
% the statement expressing that access is granted is logically entailed by the policy.

\subsection{\say-based access control logics and denial}
There is a large variety of access control logics, but most of them use a modality $k \says$ indexed by a principal $k$ \cite{Genovese12}.
\say-based access control logics are designed for systems in which different principals can issue statements that become part of the access control policy.
$k \says \phi$ is usually explained informally to mean that $k$ supports $\phi$ \cite{Abadi08,Garg12,Genovese12};
this means that $k$ has issued statements that -- together with additional information present in the system -- imply $\phi$.
Different access control logics vary in their account of which rules of inference and which additional information may be used in deriving  statements that $k$ supports from the statements that $k$ has explicitly issued.
For instance, if $\mathit{Alice}$ issues the statement $(\mathit{Bob}\says \mathit{ok}) \limplies \mathit{ok}$ and $\mathit{Bob}$ issues the statement $\mathit{ok}$, it is to be expected that also $\mathit{Alice}\says \mathit{ok}$ holds. 

% share the following two features \cite{Genovese12}:
% \begin{compactitem}
%  \item For every principal $A$, there is a model operator $A \says$. $A \says \phi$ can intuitively be read to mean that principal $A$ supports the statement $\phi$, either directly, e.g., if $A$ has signed a certificate stating $\phi$, or indirectly, e.g., if a program controlled by $A$ has made a statement that implies $\phi$.
%  \item
% \end{compactitem}

Many state-of-the-art \say-based access control logics, e.g., Binder Logic (BL) \cite{Garg12}, are designed for application in a system based on \emph{proof-carrying authorization}  \cite{ccs/AppelF99}. In such a system, an access request is always submitted together with a proof that establishes that the requester has access, and the task of the reference monitor is only to check the validity of this proof. If the proof is based on the assumption that some other principal $k$ supports some formula $\phi$, the proof will contain the signed certificate that establishes that $k$ has made statements implying $\phi$. In this way, assumptions of the form $k \says \phi$ can be discharged. However, there is no way to discharge of assumptions of the form $\neg k \says \phi$. If $k$ has not made any statements implying $\phi$, there is no way to prove this to the reference monitor by submitting some certificates issued by $k$, as the reference monitor can never be convinced that there are no other statements made by $k$, which have not been presented to the reference monitor.
% \bart{The previous is not entirely true. Probably it is true in these logics. However, if a principle is allowed (in an extension) to make statements of the form ``I only say ...''. Then a certificate for $\lnot k says phi$ can be presented as well} 
For this reason, many state-of-the-art \say-based access control logics do not provide the means for deriving statements of the form $\neg k \says \phi$ or $j \says \neg k \says \phi$ on the basis of the observation that $k$ has not issued any statements that could imply $\phi$. 

% \subsection{Modelling denial in a \say-based access control logic}
% \bart{I don't like this section split. The section starts with ``However''. Also: before reading this section, I was thinking ``why is there no discussion of the implications of these restrictions''. So I would just remove this section title}
% \marcos{The previous subsection describes existing work by others. This subsection describes something that this existing work does not capture, and is thus a motivation for extending/modifying it. This was my reason for the section split. However, I do agree that the second paragraph of the previous subsection is already a prelude to this section, so it could also be included in this subsection, to avoid a section cut in the middle of the motivational argument. However, moving the second paragraph of the previous subsection here leaves us with a very short one-paragraph subsection followed by a much longer subsection, which also does not seem a good way of splitting it. Merging the two subsections into one is another option, but this requires finding a good title for the merged subsection. Maybe ``\say-based access control logic and denial'' would do the job. What do you think?}
%BART: sounds good. I implemented it.
\label{sec:denial}
However, precisely formulas of this form make it possible to model access denials naturally in a \say-based access control logic, as illustrated in the following example, which we already briefly sketched in the introduction.
\begin{example}
\label{ex:postdoc}Suppose $A$ is a professor with control over a resource $r$, $B$ is a PhD student of $A$ who needs access to $r$, and $C$ is a postdoc of $A$ supervising $B$. $A$ wants to grant $B$ access to $r$, but wants to grant $C$ the right to deny $B$'s access to $r$, for example in case $B$ misuses her rights.
A natural way for $A$ to do this  using the \say-modality is to issue the statement $\neg C \says \neg \access(B,r) \Rightarrow \access(B,r)$. This should have the effect that $B$ has access to $r$ unless $C$ denies her access.
However, this effect can only be achieved if our logic allows $A$ to derive $\neg C \says \neg \access(B,r)$ from the fact that $C$ has not issued any statements implying $\neg \access(B,r)$.
\end{example}

Such denials can be realized in a system in various ways: One way is to have a central server where all statements belonging to the access control policy are stored, independently of who has issued them. In this case, the reference monitor can confirm that $C$ has not issued a statement implying $\neg \access(B,r)$ and thus grant $B$ access. This way the access control system is not truly distributed, even though the access control policy is still produced in a distributed way.

Such denials can also be realized in a truly distributed system if a certain degree of cooperativity of the principals with the reference monitor is assumed. Suppose for example that $C$ does not want to deny $B$ access right to $r$. In this case he will not issue any statement implying $\neg \access(B,r)$. Furthermore, it is reasonable to assume that he will be cooperative with the reference monitor in this respect: If the reference monitor asks $C$ whether he has issued statements implying $\neg \access(B,r)$, he will say no. If $C$ were not cooperative in this way, it would have the same effect as him stating $\neg \access(B,r)$, which goes against his goal of not denying $B$ access right. So given that the cooperativity needed here is in the interest of the concerning principals, we do not consider this cooperativity assumption problematic.
% \bart{Of course: this kind of approach is tricky in the sense that cycles may occur. How this kind of cycles should be treated is one of the questions we answer in this work}

Note that the applicability of our logic does not depend on how precisely the access control system is realized in practice.
% We will nevertheless assume it to be a truly distributed system. 
% This assumption also justifies our goal to minimize information flow between principals in the decision procedure we present in Section \ref{sec:decisionproc}.\bart{To check with Marcos: do we include a decision procedure?}

\subsection{Autoepistemic Logic}
\label{sec:autoepistemic}
The derivation of $\neg C \says \neg \access(B,r)$ described above, i.e., its derivation from the fact that $C$ has not issued any statements implying $\neg \access(B,r)$, is non-monotonic: If $C$ later issues a statement implying $\neg \access(B,r)$, the formula $\neg C \says \neg \access(B,r)$ can no longer be derived. In other words, adding a formula to the access control policy causes that something previously implied by the policy is no longer implied. Existing \say-based access control logics are monotonic and hence they cannot support the type of reasoning described above for modelling denial with the \say-modality.

In order to derive statements of the form $\neg k \says \phi$, we have to assume the statements issued by a principal to be a complete characterization of what the principal supports. This is similar to the motivation behind Moore's autoepistemic logic (AEL) to consider an agent's theory to be a complete characterization of what the agent knows \cite{mo85,ai/Levesque90,ijcai/Niemela91,nonmon30/DeneckerMT11}.
This motivates an application of AEL to access control.

However, AEL cannot model more than one agent. In order to extend it to the multi-agent case, one needs to specify how the knowledge of the agents interacts. To the best of our knowledge, all state-of-the-art access control logics allow $j \says (k \says \phi)$ to be derived from $k \says \phi$, as this is required for standard delegation to be naturally modelled using the \say-modality. In the knowledge terminology of AEL, this can be called mutual positive introspection between agents. In order to also model denial as described above, we also need mutual negative introspection, i.e., that $j \says \neg k \says \phi$ to be derived from $\neg k \says \phi$.

\subsection{Approximation Fixpoint Theory}

Several semantics have been proposed for AEL. Approximation fixpoint theory (AFT) (see Subsection \ref{ss:AFT}) is an algebraic framework that captures most of those. Furthermore, AFT provides us with a unified methodology for lifting AEL semantics to a distributed setting. What is required to apply AFT is to define semantic operators for our distributed version of AEL. 

Among the many semantics induced by AFT, we find, in line with the claims from \citet{nonmon30/DeneckerMT11}, the well-founded semantics to be best suited when applying dAEL to access control. Unlike other widely studied semantics of autoepistemic logic like the Moore's original expansion semantics, the Kripke-Kleene semantics and the stable semantics, the well-founded semantics is both \emph{grounded} \cite{ai/BogaertsVD15}, meaning that derivable formulas are supported by cycle-free justifications, and \emph{constructive} \cite{lpnmr/DeneckerV07}, meaning that the model can be characterized as the limit of a construction process. 
Both of these features are important for the access control application, as it means that agents can only access, or delegate control over a resource if there is a (non-cyclic) reason for it, and furthermore, that the provenance of this access can be traced back (by means of following the construction process). 
In Section \ref{sec:AC} we discuss application scenarios that illustrate these advantages of the well-founded semantics over other semantics.

% The choice of well-founded semantics over other possible AEL semantics will be motivated in section \ref{sec:use-cases} with reference to two application scenarios.

\ignore{
\subsection{Inductive definitions}
\label{sec:IDs}
Apart from extending AEL with the well-founded semantics to the multi-agent case, dAEL also incorporates inductive definitions, thus allowing principals to define access rights and other properties relevant for access control in an inductive way. Inductive (recursive) definitions are a common concept in all branches of mathematics. Inductive definitions in dAEL are intended to be understood in the same way as in the general purpose specification language \fodot of the \idp system \cite{WarrenBook/DeCatBBD14}. \citet{Denecker:CL2000} showed that in classical logics, adding definitions leads to a strictly more expressive language.

Because of their rule-based nature, formal inductive definitions also bear strong similarities in syntax and formal semantics with logic programs. A formal inductive definition could also be understood intuitively as a logic program which has arbitrary formulas in the body and which defines only a subset of the
predicates in terms of parameter predicates not defined in the definition.

Most of the semantics that have been proposed for logic programs can be adapted to inductive definitions. \citet{KR/DeneckerV14} have argued that the well-founded semantics correctly formalises our intuitive understanding of inductive definitions,
and hence that it is actually the \emph{right} semantics. Following them, we use the well-founded semantics for inductive definitions.

In Section \ref{sec:AC} we give an example policy to show the usefulness of including inductive definitions in dAEL. }

\section{Formal Preliminaries}
% \bart{I noticed that throughout the entire paper, we did not once use the notion of a variable assignment. 
% Which assumption do we use? Like standard in FO: variable assignment is a different mathematical object? Or ``structure interpets variables''? }
% \bart{I assume that it will be the standard case. Reason: otherwise pws get quite difficult. }
% \todo{Currently, I added a macro $\varass$ for variable assignment. It is implemented as $\alpha$, but $\alpha$ is also used for an ordinal number. }

% \todo{We are using a first-order verstion of AEL. Might need a reference}

\label{s:preliminaries}
% \bart{maybe section should contain something about the intuitive reading of a formula $K\varphi$}
We assume familiarity with the basic concepts of first-order logic. 
% As usual, a tuple $(p,s)$ of two truth values in $\{\ltrue,\lfalse\}$  with $p\leqt s$ is identified with a three-valued truth value where $p$ is viewed as an underapproximation of the value and $s$ as an overapproximation. Thus we identify true with $(\ltrue,\ltrue)$, false with $(\lfalse,\lfalse)$ and unknown with $(\lfalse,\ltrue)$. \bart{do we use identification somewhere?} \pieter{What the what do we do what now?}
We assume throughout this paper that a first-order vocabulary $\Sigma$ is fixed, use $\Terms$ for the set of terms over $\Sigma$ (which we call \emph{$\Sigma$-terms}) and $\fo$ for the language of standard first-order logic over $\Sigma$. Furthermore, we assume that $\Sigma$ is the disjoint union of $\Sigma_o$ and $\Sigma_s$, where $\Sigma_o$ represents a set of \emph{objective} symbols and $\Sigma_s$ a set of \emph{subjective} symbols. Symbols in $\Sigma_o$ could for instance be arithmetic symbols, equality,  or other symbols whose interpretation is shared among all involved agents. 
% and use $\fo$ for the language of standard first-order logic over $\Sigma$.
We assume that an infinite supply of variables is available and fixed throughout the paper. 
 A variable assignment $\varass$ assigns to each variable an object of a given domain. If $x$ is a variable and $d$ an element of the given domain, we use $\varass[x:d]$ for the variable assignment that assigns $d$ to $x$ and otherwise equals \varass. 
We consider the set of logical symbols of $\mc{L}$ to formally consist of $\land$, $\neg$ and $\forall$. The symbols $\lor$, $\Rightarrow$, $\Leftrightarrow$ and $\exists$ are, as usual, treated as abbreviations in the standard way:
\begin{align*}
 (\varphi \lor \psi) & = \lnot (\lnot \varphi \land \lnot \psi)\\
 (\varphi \limplies \psi) & = (\lnot \varphi \lor \psi)\\
 (\varphi \lequiv \psi) &= ((\varphi \limplies \psi) \land (\psi \limplies \varphi))\\
 \exists x: \varphi &= \lnot \forall x: \lnot \varphi. 
\end{align*}
Brackets may be dropped when this does not lead to ambiguity.

We use truth values $\ltrue$ for truth, $\lfalse$ for falsity and additionally, in a three-valued setting, we use $\lunkn$ for unknown.  
The truth order $<_t$ on truth values is induced
by $\lfalse<_t\lunkn<_t\ltrue$. The precision order $<_p$ on truth values is
induced by $\lunkn<_p\ltrue, \lunkn<_p\lfalse$.
We define $\ltrue^{-1}=\lfalse, \lfalse^{-1}=\ltrue$ and $\lunkn^{-1}=\lunkn$.

\subsection{Autoepistemic Logic} \label{ss:AEL}
The language $\ael$ of \emph{autoepistemic logic}~\cite{mo85}\footnote{Technically, Moore only defined the propositional fragment of the logic we define below. 
Here, we are interested in a first-order variant of Moore's logic. Also the extension with \emph{objective information} (the distinction between $\Sigma_o$ and $\Sigma_s$) was not part of the original presentation. }
is defined recursively using the standard rules for the syntax of first-order logic, augmented with one modal rule. This syntax is standard in modal logics.  The language is thus defined by 
\begin{align*}
 &P(\ttt) \in \ael &&\text{ if } P \text{ is an $n$-ary predicate in $\Sigma$ } \text{ and }\ttt\text{ an $n$-tuple of terms}\\
 &(\varphi \land \psi) \in \ael &&\text{ if } \varphi \in \ael \text{ and }\psi \in \ael\\
& \lnot \varphi \in \ael &&\text{ if } \varphi \in \ael \\ 
 &\forall x: \varphi \in \ael&&\text{ if } \varphi \in \ael \\
& K \varphi \in \ael &&\text{ if }  \varphi\in\ael
\end{align*}

% $$K(\psi) \in \ael \quad\text{ if }  \psi\in\ael$$	
% We call formulas in $\ael$ that do not contain any occurrences of $K$ \emph{objective}, and we refer to an atom of the form $K\varphi$ as a \textit{modal atom}.
% \pieter{I have added this again because Bart uses this in his preliminaries}
% 

An AEL theory $T$ is a set of sentences (that is, formulas without free occurrences of variables) in $\ael$. 
AEL uses the semantic concepts of standard modal logic. A \emph{structure} is defined as usual in first-order logic.  It formally represents a potential state of affairs of the world. 
We assume a domain $D$, shared by all structures, to be fixed throughout the paper. We also assume a $\Sigma_o$-structure $I_o$ is fixed, representing the shared knowledge among all involved agents (currently, there is only one, but in the next section, there will be multiple agents). 
% All structures are assumed to be structures over the domain $D$. 
% Furthermore, we assume that for each $d \in D$, there is a constant $c_d \in \Sigma$ and that $c_d$ is interpreted as $d$ in each interpretation. Slightly abusing notation, we use $d$ for $c_d$ in theories. 
% \bart{Do we need the previous sentences??? I'm not a fan of introducing constants for all domain elements. It makes the vocabulary dependent of the structure. Bad practice... Also fixing the domain is not always nice. So... if we can avoid: please! } 
% \bart{Later (see more comments), an objective structure will turn out to be necessary. 
% What if we introduce one obejctive structuer that interprets only constants/function symbols, but not predicates. That means that we can evaluate terms independently of the agent's state of mind. But it makes AEL more complex.... Is there a description of first-order AEL already somewhere???? Probably there is... 
% Now, for dAEL we NEED to be able to evaluate $K_t \psi$ ``objectively'', i.e., we need to be able to determine the value of $t$. That is ugly! How to solve it???}
A \emph{possible world structure} is a set of $\Sigma$-structures that coincide with $I_o$ on all symbols of $\Sigma_o$.
%It can be seen as a Kripke structure with the total accessibility relation. 
It contains all structures that are consistent with an agent's knowledge. 
Possible world structures are ordered with respect to the amount of knowledge they contain. 
%In this sense, 
Possible world structures that contain fewer structures possess more knowledge. %, or formally $\pws_1 \leqpws \pws_2$ holds if and only if $\pws_2 \subseteq \pws_1$.
Indeed, an agent $A$ ``knows'' that a certain claim holds if this claim holds in all worlds $A$ deems possible. Thus, in smaller possible world structures, more knowledge is present. 
Formally, given possible world structures $\pws_1$ and $\pws_2$, we define $\pws_1 \leqpws \pws_2$ to hold if and only if $\pws_2 \subseteq \pws_1$.
% OPTIE 1
%The satisfaction of a sentence $\varphi \in \ael$ with a  distributed possible world structure $\upws$ and a structure $I$ is defined by the standard recursive rules of satisfaction in first order logic (using $I$), augmented with one additional rule: 
%\begin{align*}
%&\pws,I \vDash K\varphi &&\text{ if } \pws,I'\vDash \varphi \text{ for every } I'\in \pws\\
% %&(K_{A} \varphi)^{\upws,I}=\Tr &&\text{ iff for each $J\in \pws$ }\varphi^{\upws,J}=\Tr
%\end{align*}

%OPTIE 2
The semantics of AEL is based on the standard S5 truth assignment \cite{Lewis32,hughescresswell.niml}.
The \emph{value} of a formula $\varphi \in \ael$ with respect to a possible world structure $\pws$, a structure $\I$ and a variable assignment \varass (denoted $\varphi^{\pws,I,\varass}$) is defined using the standard recursive rules for first-order logic augmented with one additional rule for the modal operation. Formally,  we define
\begin{align*} 
& (P(\ttt))^{\pws,I,\varass} &&= \left\{ \begin{array}{l}\ltrue \text{ if } \ttt^{I, \varass} \in P^I\\\lfalse\text{ otherwise}\end{array}\right. \\
 &(\neg \varphi)^{\pws,I,\varass}&&=(\varphi^{\pws,I,\varass})^{-1}\\
 &(\varphi\land \psi)^{\pws,I,\varass}&&=\glb_{{\leqt}} (\varphi^{\pws,I,\varass},\psi^{\pws,I,\varass})\\
 &(\forall x:\varphi)^{\pws,I,\varass}&&=\glb_{{\leqt}} \{\varphi^{\pws,I,\varass [x:d]}\mid d \in D\}\\
 &(K\varphi)^{\pws,I,\varass}&&=\left\{\begin{array}{ll}
                                  \Tr & \text{if } \varphi^{\pws,I',\varass}=\Tr \text{ for all $I'\in \pws$}\\
                                  \Fa & \text{otherwise}
                                 \end{array}\right.
%& \     && \quad  \text{where $\pws_i$ is the theory of the same agent as $T_i$}
\end{align*}

If $\varphi$ is an AEL sentence, it is easy to see that $\varphi^{\pws,I,\varass}$ is independent of $\varass$. In this case we use $\varphi^{\pws, I}$ to denote this value.

Moore proposed to formalise the intuition that an AEL theory $\theory$ expresses ``all the agent knows'' in semantic terms, as a  condition on the possible world structure $Q$  representing the agent's belief state. The condition is: a world $I$ is possible according to $Q$  if and only if $I$ satisfies $\theory$ given $Q$.
Formally, 
Moore defines that $Q$ is an {\em autoepistemic expansion} of $\theory$ if for every world $I$, it holds that $I\in Q$ if and only if $\theory^{Q,I}=\ltrue$. 

The above definition is essentially a fixpoint characterisation.
The underlying operator $D_\theory$ maps $Q$ to 
\[
	D_\theory(Q)=\{\struct\mid \theory^{Q,I}=\ltrue\}.
\]
Autoepistemic expansions are exactly the fixpoints of $D_\theory$;
% Clearly, $Q$ is an autoepistemic expansion of $T$ if and only if $Q$ is a fixpoint of $D_\theory$.  
they are the possible world structures that, according to Moore,  express candidate belief states  of an autoepistemic agent with knowledge base \theory.

Soon, researchers pointed out certain ``anomalies'' in the expansion semantics \cite{nato/HalpernM85,Konolige88}.  In the following years, many different semantics for AEL were proposed. It was only with the introduction of the abstract algebraical framework \emph{approximation fixpoint theory} (AFT) that a uniform view on those different semantics was obtained. 
We define several of the semantics of AEL later, after introducing the algebraical preliminaries on AFT.

\subsection{Logic Programming}\label{ss:lp}

When introducing approximation fixpoint theory in the next section, we will illustrate some of our definitions in the context of logic programming.  
We recall some preliminaries.
% We repeat some results below.
For presentational purposes, we here restrict our attention to propositional logic programs.
% However, AFT has been applied in a much broader context \cite{DeneckerBV12,tplp/PelovDB07,lpnmr/AnticEF13} and our results can also be applied in these extensions of logic programming.

\newcommand\head{\m{\mathit{head}}}
\newcommand\body{\m{\mathit{body}}}

Let \voc be an alphabet, i.e., a collection of symbols which are called \emph{atoms}. A \emph{literal} is an atom $p$ or the negation $\lnot q$ of an atom $q$.  
A logic program $\mathcal{P}$ is a set of \emph{rules} $r$ of the form 
\[h\lrule \phi,\] where
$h$ is an atom called the \emph{head} of $r$, denoted $\head(r)$, and $\phi$ is a propositional formula called the \emph{body} of $r$, denoted $\body(r)$.
An interpretation $\struct$ of the alphabet \voc is an element of $2^\voc$, i.e., a subset of $\voc$. 
The truth value (\ltrue or \lfalse) of a propositional formula $\varphi$ in a structure $\struct$, denoted $\varphi^\struct$ is defined as usual.
With a logic program \PP, we associate an immediate consequence operator \cite{jacm/EmdenK76} $T_\PP$ that maps a structure $\struct$ to 
	\[T_\PP(\struct) = \{p\mid \exists r \in \PP: \head(r)=p\land \body(r)^\struct=\ltrue\}.\] 
% 	where $\body(r)^\struct$ denotes the standard two-valued truth evaluation of propositional formulas in interpretations. 

\subsection{Approximation Fixpoint Theory}\label{ss:AFT}
% \bart{technically, the term 
% ice stands for a lattice with two orders. In AFT, often only the precision order is needed. If we only use the precision order, we might try to avoid the terminology bilattice (I often abuse this terminology, but its not ideal to do so).}

Approximation fixpoint theory (AFT) is an algebraic framework that allows unifying semantics of various fields of non-monotonic reasoning. 
It was originally defined to unify default logic, auto-epistemic logic, and logic programming  \cite{\refto{AFT},DeneckerMT03}, but nowadays also has applications in various other domains, including abstract argumentation \citep{journals/ai/Strass13,aaai/Bogaerts19}, extensions to deal with inconsistencies \citep{RR/BiJF14}, causality \citep{\refto{foc}},  higher-order logic programs \citep{tplp/CharalambidisRS18},  active integrity constraints \citep{ai/BogaertsC18}, stream reasoning \citep{ai/Antic20}, constraint languages for the semantic web \citep{iclp/BogaertsJ21}, as well as neuro-symbolic logic programming \cite{Antic}.

In this section, we recall the basics of lattice theory and approximation fixpoint theory by Denecker, Marek and Truszczy{\'n}ski [\citeyear{DeneckerMT00}] (further shortened as DMT).
% This section is based on the preliminaries of \citet{ijcai/BogaertsVD15}.

\paragraph{Lattices and Operators}

% In this section we repeat the basics of approximation fixpoint theory and its application in logic programming.
A \emph{complete lattice} ${\langle L,\leq\rangle}$ is a set $L$ equipped with a partial order $\leq$, such that every 
set $S\subseteq L$ has both a  least upper bound and a greatest lower bound, denoted $\lub(S)$ and $\glb(S)$ respectively.
A complete lattice has a least element $\bot$ and a greatest element $\top$.
%As custom, we also use the notations $\bigand S = \glb(S)$, $x\land y=\glb(\{x,y\})$, $\bigor S = \lub(S)$ and $x\lor y=\lub(\{x,y\})$.

\begin{example}
 If $\voc$ is a set of atoms, then the set $2^\voc$ of interpretations forms a complete lattice equipped with the order $\subseteq$. 
 The least upper bound and greatest lower bound in that case are the union and intersection respectively.
\end{example}

An operator $O:L\to L$ is \emph{monotone} if $x\leq y$ implies that $O(x)\leq O(y)$. An element $x\in L$ is %a \emph{prefixpoint}, 
a \emph{fixpoint} of $O$
%, a \emph{postfixpoint} 
if % $O(x)\leq x$, respectively 
$O(x)=x$.
%, $x\leq O(x)$. 
Every monotone operator $O$ in a %chain complete poset 
complete lattice has a least fixpoint, denoted $\lfp(O)$, which is 
the limit (i.e., the least upper bound) of the sequence $x_\alpha$ given by:
 \begin{align*}
 &x_0 = \bot\\
 &x_{\alpha+1} = O(x_\alpha)\\
 &x_{\lambda} = \lub(\{x_\alpha \mid \alpha<\lambda\}) \textnormal{, with $\lambda$ a limit ordinal}
 \end{align*}
 
 \begin{example}
  In the context of logic programming, if a program is positive (contains no negation in the body), then the associated operator is monotone. 
  For instance, if $\PP$ is the program:
  	\begin{align*}
	  &p.\\
	  &q\lrule p \lor q.  
	\end{align*}
	then the associated lattice and operator $T_\PP$ are visualized below
	\[
		\xymatrix@R-1pc{
		& \top = \{p,q\} \ar@{.>}@/^/@(dl,dr)& \\
		\{p\}\ar[ru]\ar@{.>}@/^/[ru]      &&         \{q\}\ar[lu]\ar@{.>}@/_/[lu]\\
		& \bot = \emptyset\ar[ru]\ar[lu]\ar@{.>}@/^/[lu]
		}
	\]
	where dotted arrows represent the operator and full arrows the $\leq$ relation on the lattice (i.e., the subset relation on interpretations). 
	The least fixpoint of this operator is $\top=\{p,q\}$, the interpretation in which both $p$ and $q$ hold. 
 \end{example}

\paragraph{Approximation Fixpoint Theory}\label{004:sec:prelimAFT}
Given a lattice $L$, AFT makes use of the set 
$L^2$.  We call elements of $L^2$ \emph{approximations}.
We define \emph{projections} for pairs as usual:
$(x,y)_1=x$ and $(x,y)_2=y$.  Pairs $(x,y)\in L^2$ are used to
approximate all elements in the interval $[x,y] = \{z\mid x\leq
z\wedge z\leq y\}$. 
We call $(x,y)\in L^2$ \emph{consistent} if $x \leq y$, i.e.\ if $[x,y]$ is non-empty. We use $L^c$ to denote the set of consistent elements. 
Elements $(x,x) \in L^c$ are called \emph{exact}.  
We identify a point $x\in L$ with the exact bilattice point $(x,x)\in L^c$.
%We sometimes abuse notation and use the tuple $(x,y)$ and the interval $[x,y]$ interchangeably.  
$L^2$ is equipped with a \emph{precision order}, defined as $(x,y) \leqp (u,v)$ if $x\leq u$
and $v\leq y$. If $(u,v)$ is consistent, the latter means that $(x,y)$
approximates all elements approximated by $(u,v)$, or in other words
that $[u,v]\subseteq [x,y]$.
If $L$ is a complete lattice, then so is $\langle L^2,\leqp\rangle$.

\begin{example}
 In the context of logic programming, elements of the bilattice $L^2$ are couples $(\struct_1,\struct_2)$ of two interpretations. 
 Together they correspond to a so-called \emph{four-valued interpretation}: an atom $x\in\voc$ is 
 \begin{itemize}
  \item \emph{true} if $x\in \struct_1$ and $x\in\struct_2$
  \item \emph{false} if $x\not\in \struct_1$ and $x\not\in\struct_2$
  \item \emph{unknown} if $x\not\in \struct_1$ and $x\in\struct_2$
  \item \emph{inconsistent} if $x\in \struct_1$ and $x\not\in\struct_2$
   \end{itemize}
  Elements of $L^c$ are those four-valued interpretations in which no element is inconsistent. 
\end{example}

AFT studies fixpoints of lattice operators $O:L\ra L$ through operators approximating $O$.
An operator $A: L^2\to L^2$  is an \emph{approximator} of $O$ if it is \leqp-monotone,  and has the property that for all $x$, $O(x)\in [x',y']$, where $(x',y')=A(x,x)$.
Approximators % are internal in $L^c$ (i.e., 
map $L^c$ into $L^c$.
As usual, we restrict our attention to \emph{symmetric} approximators: approximators $A$ such that for all $x$ and $y$, $A(x,y)_1 = A(y,x)_2$. 
\citetDMT{DeneckerMT04} showed that the consistent fixpoints of interest (supported, stable, well-founded) are uniquely determined by an approximator's restriction to $L^c$, hence, sometimes we only define approximators on $L^c$. 
Given an approximator $A$, we define the (complete) stable operator $S_A: L \to L: S_A(x)=\lfp(A(\cdot,x)_1)$, where $A(\cdot,y)_1$ denotes the operator $L\to L:x\mapsto A(x,y)_1$.

AFT studies fixpoints of $O$ using fixpoints of $A$. 
\begin{enumerate}
\item The \emph{$A$-Kripke-Kleene fixpoint} is the $\leqp$-least fixpoint of $A$. It approximates all fixpoints of $O$. 
\item A \emph{partial $A$-stable fixpoint} is a pair  $(x,y)$ such that $x=S_A(y)$ and $y=S_A(x)$.
\item An \emph{$A$-stable fixpoint} of $O$ is a fixpoint $x$ of $O$ such that $(x,x)$ is a partial $A$-stable fixpoint. 
\item The \emph{$A$-well-founded fixpoint} is the least precise partial $A$-stable fixpoint. 
\end{enumerate}
% % 
% 
% DEFINITIE VAN 4-WAARDIGE STABLE AND DE WELL-FOUNDED FIXPOINT? Pairs $(x,y)$ such that $x=\lfp(A(\cdot,y)_1$ and $y=\lfp(A(x,\cdot)_2$. The $A$-well-founded fixpoint is the least precise 4-valued $A$-stable fixpoint. 
% 

\begin{example}
 In the context of logic programming, the most common approximator is Fittings three- or four-valued immediate consequence operator. We present the three-valued version here. 
 Given a logic program $\PP$, Fittings operator $\Psi_\PP$ \cite{tcs/Fitting02} maps a three-valued interpretation $(\struct_1,\struct_2)$ to a three-valued interpretation $\Psi_\PP(\struct_1,\struct_2)$ with the property that 
 \begin{itemize}
  \item $x$ is true in $\Psi_\PP(\struct_1,\struct_2)$ iff there is some rule $x\lrule \phi$ with $\phi$ true in $(\struct_1,\struct_2)$,
  \item $x$ is false in $\Psi_\PP(\struct_1,\struct_2)$ iff for all rules $x\lrule \phi$, it holds that $\phi$ true in $(\struct_1,\struct_2)$,
  \item otherwise, $x$ is unknown in $\Psi_\PP(\struct_1,\struct_2)$,  
 \end{itemize}
 where $\phi$ is evaluated in a three-valued interpretation in the standard way, following Kleene's truth tables (see Figure \ref{fig:KT}). 
 
\citetDMT{DeneckerMT00} showed that $\Psi_\PP$ is indeed an approximator of $T_\PP$, that the well-founded fixpoint of $\Psi_\PP$ is the well-founded model of $\PP$ as defined by \citeauthor{GelderRS91} and that $\Psi_\PP$-stable fixpoints are exactly the stable models of $\PP$ as defined by \citeauthor{iclp/GelfondL88}. In this case, 
the operator $\Psi_\PP(\cdot,y)_1$ coincides with the immediate consequence operator of the Gelfond-Lifschitz reduct \cite{iclp/GelfondL88}. 

 For instance, if $\PP$ is the logic program:
  	\begin{align*}
	  p \lrule \lnot q.\\
	  q\lrule \lnot p.
	\end{align*}
  then it can be verified that the stable operator has three consistent fixpoints, namely the three-valued interpretations $(\{p\},\{p\})$, $(\{q\},\{q\})$, and $(\emptyset,\{p,q\})$. 
  The first two are the exact stable fixpoints; in one of them $p$ is true and $q$ false, in the other one it is the other way round. The last one is the well-founded fixpoint, in which all atoms are unknown.
\end{example}

The $A$-Kripke-Kleene fixpoint of $O$ can be constructed by iteratively applying $A$, starting from $(\bot,\top)$. 
For the $A$-well-founded fixpoint, the following constructive characterisation has been worked out by \citet{lpnmr/DeneckerV07}.
\begin{definition}
An \emph{$A$-refinement} of $(x,y)$ is a pair $(x',y')\in L^2$ satisfying one of the following two conditions:
\begin{itemize}
	\item $(x,y)\leqp(x',y')\leqp A(x,y)$, or
	\item $x'=x$ and  $A(x,y')_2\leq y'\leq y$. 
\end{itemize}
An $A$-refinement is \emph{strict} if $(x,y)\neq (x',y')$.
\end{definition}
 \begin{definition}
 A \emph{well-founded induction} of $A$  is a sequence 
$(x_i,y_i)_{i\leq \beta}$
with $\beta$ an ordinal such that 
\begin{itemize}
	\item $(x_0,y_0) = (\bot,\top)$;
	\item $(x_{i+1},y_{i+1})$ is an $A$-refinement of $(x_{i},y_{i})$, for  all $i<\beta$;
	\item $(x_{\lambda},y_{\lambda})= \lub_{\leqp} \{(x_i,y_i)\mid i<\lambda\}$
	      for each limit ordinal $\lambda\leq\beta$.
\end{itemize}
A well-founded induction is \emph{terminal} if its limit $(x_\beta,y_\beta)$ has no strict $A$-refinements.
% \marc{als je in sectie over fixpoints het leastfixpoint in een licthjes andere notatie geeft= $x_i\leq x_{i+1}\leq O(x_i)$, dan kan je hier gewoon zeggen: een well-founded induction is een monotone induction van de juiste operator (de vierwaardige stable operator)}\bart{precies zo geen fan van. De standaard def van leastfixpoint kent iedereen...}
\end{definition}
% A well-founded induction is an algebraical generalisation of the well-founded model construction defined by \citeauthor{GelderRS91}. 
For an approximator $A$, there are many different terminal well-founded inductions of $A$.
\citet{lpnmr/DeneckerV07}  showed that they all have the same limit, and that this limit equals the $A$-well-founded fixpoint of $O$. If $A$ is symmetric, the $A$-well-founded fixpoint of $O$ (and in fact, every tuple in a well-founded induction of $A$) is consistent. 

% 
% \marcos{Bart, do you agree with the wording of the following paragraph? Is there a reference that we can add to support our claim that this is indeed an alternative characterization of  the $A$-well-founded fixpoint?}

An alternative constructive characterisation of the $A$-well-founded fixpoint is the following.  \mycitet{DeneckerMT00} also defined the four-valued stable operator $S_A^*:L^2 \rightarrow L^2$ by $S_A^*((x,y)) = (S_A(x),S_A(y))$. Then the $A$-well-founded fixpoint is the $\leqp$-least fixpoint of $S_A^*$, so it can be constructed by a transfinite iterative application of $S_A^*$ to $(\bot,\top)$ until a fixpoint is reached. We make use of this characterization of the $A$-well-founded fixpoint in some of the proofs that are included in the appendix.

Since the $A$-well-founded fixpoint is a consistent partial $A$-stable fixpoint, there always exists at least one consistent partial $A$-stable fixpoint. Furthermore, it easily follows from the definition of partial $A$-stable fixpoints that partial $A$-stable fixpoints are always fixpoints of $A$. These two properties together imply that if $A$ has a unique consistent fixpoint, this is also the unique consistent partial $A$-stable fixpoint.

 \subsection{AFT and Autoepistemic Logic}
\citetDMT{aaai/DMT98} showed that many semantics from AEL can be obtained by direct applications of AFT. In order to do this, they defined a three-valued version of the semantic operator. % $D_\theory$ as follows. 

In order to approximate an agent's state of mind, i.e., to represent partial information about possible world structures, \citetDMT{aaai/DMT98}  defined a \emph{belief pair} as a tuple $(P,S)$ of two possible world structures.
They say that a belief pair \textit{approximates} a possible world structure $\pws$ if $P\leqpws \pws \leqpws S$, or equivalently if $S\subseteq \pws \subseteq P$.
Intuitively, $P$ is an underestimation or a \emph{conservative} bound of the agent's knowledge, and $S$ is an overestimation or \emph{liberal} bound of the agent's knowledge.
That is, 
$P$ contains all interpretations that are potentially contained in the agent's possible world structure, and $S$ all interpretations that are certainly contained in the agent's possible world structure. 
Stated even differently, $P$ represents the knowledge the agent certainly has and $S$ the knowledge the agent possibly has. 
We call a belief pair $(P,S)$ \emph{consistent} if $P\leqpws S$, i.e., if it approximates at least one possible world structure. 
From now on, we assume all belief pairs to be consistent. 
Belief pairs can be ordered by a precision ordering $\leqbp$:
 Given two belief pairs $(P,S)$ and $(P',S')$, we say that $(P,S)$ is less precise than $(P',S')$ (notation $(P,S)\leqbp (P',S')$) if $P\leqpws P'$ and $S'\leqpws S$.

We now define a three-valued valuation of sentences with respect to a belief pair (which represents an approximation of the state of mind of an agent) and a structure, representing the state of the world.

\begin{definition}\label{def:AEL3val}
The \emph{value} of $\varphi$ with respect to belief pair $B$, an interpretation $I$ and a variable assignment \varass (notation $\varphi^{B,I,\varass}$) is defined inductively as follows: 
\begin{align*}
 &(P(\ttt))^{B,I,\varass}&&= \left\{ \begin{array}{ll}\ltrue &\text{if }\ttt^{I,\varass} \in P^I\\\lfalse&\text{otherwise}\end{array}\right.\\
 &(\neg \varphi)^{B,I,\varass}&&=(\varphi^{B,I,\varass})^{-1}\\
 &(\varphi\land \psi)^{B,I,\varass}&&=\glb_{{\leqt}} (\varphi^{B,I,\varass},\psi^{B,I,\varass})\\
 &(\forall x:\varphi)^{B,I,\varass}&&=\glb_{{\leqt}} \{\varphi^{B,I,\varass [x:d]}\mid d \in D\}\\
 &(K\varphi)^{(P,S),I,\varass}&&=\left\{\begin{array}{ll}
                                  \Tr & \text{if } \varphi^{(P,S),I',\varass}=\Tr \text{ for all $I'\in P$}\\
                                  \Fa & \text{if } \varphi^{(P,S),I',\varass}=\Fa \text{ for some $I'\in S$}\\
                                  \Un & \text{otherwise}
                                 \end{array}\right.
\end{align*}
\end{definition}
% \bart{The glbs are eliminated if you wish... (might be easier to read); you can even write it out for $\lnot$ as well... 
% \begin{align*}
%   &(\varphi\land \psi)^{B,I}= \left\{\begin{array}{ll}
%                                   \Tr & \text{if } \varphi^{B,I}=\Tr \text{ and } \psi^{B,I}=\ltrue\\
%                                   \Fa & \text{if } \varphi^{B,I}=\Fa \text{ or } \psi^{B,I}=\Fa\\
%                                   \Un & \text{otherwise}
%                                  \end{array}\right.\\
%  &(\forall x: \varphi)^{B,I}= \left\{\begin{array}{ll}
%                                   \Tr & \text{if } \varphi[x/d]^{B,I}=\Tr \text{ for all $x\in D$ } \\
%                                   \Fa & \text{if } \varphi[x/d]^{B,I}=\Fa \text{ for some $x\in D$ } \\
%                                   \Un & \text{otherwise}
%                                  \end{array}\right.  \\
%  &(\lnot \varphi)^{B,I}= \left\{\begin{array}{ll}
%                                   \Tr & \text{if } \varphi^{B,I}=\Fa  \\
%                                   \Fa & \text{if } \varphi^{B,I}=\Tr  \\
%                                   \Un & \text{otherwise}
%                                  \end{array}\right.    \\                               
% \end{align*}
% }

% In this definition, the value of objective formulas is always two-valued; they are evaluated with respect to the interpretation. The value of a formula $K\varphi$ are unknown in case the approximation of the state of mind of the agent does not contain sufficient knowledge to determine its value.
The logical connectives combine three-valued truth values based on Kleene's truth tables (see Figure \ref{fig:KT}). As before, in case $\varphi$ is a sentence, $\varphi^{B,I,\varass}$ is independent of $\varass$ and we use $\varphi^{B,I}$ for $\varphi^{B,I,\varass}$ for any $\varass$. 

\begin{figure}
	\centering
\begin{minipage}{0.3\linewidth}
	\begin{tabular}{|cc"c|c|c|}
\cline{1-5}
\multicolumn{2}{|c"}{\multirow{2}{*}{{ $A\land B$} }}&\multicolumn{3}{c|}{B}\\
% \cline{3-5}

\multicolumn{2}{|c"}{}& \ltrue & \lfalse & \lunkn \\
\thickhline
\multirow{3}{*}{A}&\ltrue &	\ltrue & \lfalse & \lunkn	\\
\cline{2-5}
&\lfalse	& \lfalse & \lfalse & \lfalse		\\
\cline{2-5}
&\lunkn 	& \lunkn& \lfalse & \lunkn 		\\
\cline{1-5}
	\end{tabular}
\end{minipage}
\begin{minipage}{0.3\linewidth}
	\begin{tabular}{|cc"c|c|c|}
\cline{1-5}
\multicolumn{2}{|c"}{\multirow{2}{*}{{ $A\lor B$} }}&\multicolumn{3}{c|}{B}\\
% \cline{3-5}

\multicolumn{2}{|c"}{}& \ltrue & \lfalse & \lunkn \\
\thickhline
\multirow{3}{*}{A}&\ltrue &	\ltrue & \ltrue & \ltrue	\\
\cline{2-5}
&\lfalse	& \ltrue & \lfalse & \lunkn		\\
\cline{2-5}
&\lunkn 	& \ltrue& \lunkn & \lunkn 		\\
\cline{1-5}
	\end{tabular}
\end{minipage}
\begin{minipage}{0.2\linewidth}
	\begin{tabular}{|cc"c|}
\cline{1-3}
&&{{{ $ \lnot A$} }}\\
% \cline{3-5}

% \multicolumn{2}{|c"}{}& \ltrue & \lfalse & \lunkn \\
\thickhline
\multirow{3}{*}{A}&\ltrue &	\lfalse\\
\cline{2-3}
&\lfalse	& \ltrue	\\
\cline{2-3}
&\lunkn 	& \lunkn 		\\
\cline{1-3}
	\end{tabular}
\end{minipage}
\caption{The Kleene truth tables \cite{Kleene38}.}
\label{fig:KT}

\end{figure}

Let $T$ be a fixed AEL theory. \citetDMT{DeneckerMT00} defined the approximating operator $D^*_\theory$ that maps a belief pair $(P,S)$ to another belief pair $(P',S')$ where
\begin{align*}
P' = \{ I \mid \theory^{(P,S),I} \neq \Fa\} \text{ and }
S' = \{ I \mid \theory^{(P,S),I} = \Tr\}
\end{align*}
Intuitively, the new conservative bound contains all worlds in which the theory evaluates to true (with the current knowledge) and the new liberal bound all worlds in which $\theory$ does not evaluate to false. Thus 
$P'$ contains all knowledge that can \emph{certainly} be derived from the current state of mind and $Q'$ all knowledge that can \emph{possibly} be derived from it.
DMT showed that $D^*_\theory$ is an approximator of $D_\theory$. 
Hence, the operators induce a class of semantics for AEL: Moore's expansion semantics (supported fixpoints), Kripke-Kleene expansion semantics \citeDMT{aaai/DMT98} (Kripke-Kleene fixpoints), (partial) stable extension semantics ((partial) stable fixpoints) and well-founded extension semantics  (well-founded fixpoints) \citeDMT{DeneckerMT03}. The latter two were new semantics induced by  AFT.

\section{dAEL: Syntax and Semantics}\label{sec:dAEL}
% \todo{from here on, the paper needs a LOT more examples providing intuitions for the reader.}
% \bart{Pieter: for your PhD, you'll also need them!}
% % \bart{regarding examples. Currently we don't have any examples that really use the power of AEL yet. For instance, in all our examples, it suffices to know whether $K_A p$ for propositional symbols $p$. In dAEL, we can have situations where $K_A p\lor q$ without $K_Ap$ or $K_Aq$. Might this be interesting? 
% One possible application is some sort of privacy concern. Don't know it is really relevant. For instance a situation where a certain principal knows that agent $B$ belongs to a group $G$ without knowing the exact identity of $B$. Of course, then the issue is that we have a very open system so no-one, not even $B$ can know who he is (might be strange). Still, it might be possible that $B$ ``announces'' enough information to make someone conclude that $B$ has access to some file without knowing any atomary informaitno (an exact example is still needed)}
\label{sec:dael}
In this section, we describe the syntax and semantics of distributed autoepistemic logic. Theories in this logic describe the knowledge of a set of different agents. Throughout the rest of this paper, we assume a set of agents $\A$ to be fixed, with $\A$ a subset of the domain $D$ over which all structures are defined. The reason for this assumption is that it allows us to reuse the quantifications from first-order logic to quantify over the set of agents at hand. 
Furthermore, we assume that for each agent $A\in \A$, there is a constant $A\in \Sigma_o$, interpreted as $A$ in the objective structure $I_o$.

\subsection{Syntax and Basic Semantic Notions}
% \subsection{Syntax}
% Given a tuple of agents $\A$, we define the language $\dael$ over vocabulary $\Sigma$ recursively.
\begin{definition}
We define the language $\dael$ of distributed autoepistemic logic recursively as follows. \begin{align*}
 &P(\ttt) \in \dael &&\text{ if } P \text{ is an $n$-ary predicate in $\Sigma$ } \text{ and }\ttt\text{ an $n$-tuple of terms}\\
 &(\varphi \land \psi) \in \dael &&\text{ if } \varphi \in \dael \text{ and }\psi \in \dael\\
 &\lnot \varphi \in \dael &&\text{ if } \varphi \in \dael \\ 
 &\forall x: \varphi \in \dael&&\text{ if } \varphi \in \dael \\
 &K_t(\psi) \in \dael &&\text{ if }  \psi\in\dael  \text{ and $t \in \Terms$}
%\bar\exists A \psi \in \dael &\text{ if }  \psi\in\dael \text{ and } A\in\A\\
\end{align*}
\end{definition}
This definition consists of the standard recursive rules of first-order logic, augmented with a modal operator. 
The intuitive reading of $K_t(\psi)$ is ``$t$ is an agent and $t$ knows $\psi$''. Hence, if the term $t$ does not denote an agent, $K_t(\psi)$ will be interpreted to be false.

% \bart{$A$ is used both for a unary predicate and for ``an agent''. Pick one and rename the other. Do we still use $A$ somewhere???}
% \marcos{Now that we have the mapping section back in, we use the predicate $A$ again. I propose renaming it to $\mathrm{Agt}$.}
% \bart{I started implementing this, but it needs to be checked that this is now changed everywhere in the paper}
We assume that $\Sigma_o$ contains a dedicated unary predicate $\Apred$, whose interpretation is assumed to always be $\A$.
% We call formulas in $\dael$ that do not contain any occurrences of $K_t$ \emph{objective}, and we refer to a formula of the form $K_A\varphi$ as a \textit{modal atom}.

In a distributed setting, different agents each have their own theory describing their beliefs or knowledge about the world:
\begin{definition}
 A \emph{distributed theory} is an indexed family $\mc{T}=(\T_A)_{A\in \A}$ where each $\T_A$ is a set of sentences in $\dael$.
\end{definition}

\begin{example}\label{ex:vote}
Consider a situation with three agents, $A,B$ and $C$ who will vote openly on some issue. 
Agent $A$ decides to vote $\vyes$ if at least one of the other agents votes yes; otherwise he votes $\vno$.
Agent $B$ decides to follow the crowd: if all other agents are unanimous, she follows their vote. Otherwise, she abstains. 
Agent $C$ decides to vote $\vyes$ no matter what the other agents vote. 
The intended result of this vote is clear. $C$ votes \vyes, hence $A$ follows and in the end, the result is unanimous: every agent votes \vyes.

In dAEL, we model this situation as follows. We assume a single nullary predicate symbol $\vyes$ and use $K_A \vyes$ (respectively $K_A\lnot\vyes$) to denote the fact that agent $A$ votes \vyes (respectively \vno). In this example, we thus use $K_A\varphi$ to denote public announcements (not knowledge) of agents. 
Consider the following three theories. 
% \bart{A possibility is to replace the first theory with equivalence}
\begin{align*}
 \T_A &= \left\{\begin{array}{l}
			(K_B \vyes \lor K_C\vyes) \lequiv \vyes                      
% 			\lnot((K_B \vyes) \lor (K_C\vyes)) \limplies \lnot \vyes
                     \end{array}\right\}\\
 \T_B &= \left\{\begin{array}{l}
			(K_A \vyes \land K_C\vyes) \limplies \vyes\\                       
			(K_A \lnot \vyes \land K_C \lnot \vyes) \limplies \lnot \vyes                   
                     \end{array}\right\}\\
\T_C &= \{\vyes\}. 
\end{align*}
Now, $\T = (\T_A, \T_B, \T_C)$ is a distributed autoepistemic theory. As we shall show later (in Example \ref{ex:vote:cont}), all semantics we define for dAEL agree on this theory. Furthermore, its unique model equals the intended model sketched above, i.e., it is such that $K_A\vyes, K_B\vyes$ and $K_C\vyes$ are all true while $K_A \neg\vyes$, $K_B \neg\vyes$ and $K_C \neg\vyes$ are all false. 
% 
%  \bart{if we introduce some extra agents, we might make this example a brexit example: Boris, Nigel, Michael vote ``leave'', while David,...}
 \end{example}

% \subsection{Universal possible world structures}
%To define the truth value of a model free $\varphi \in \dael$, it is clear we need a structure $I$ over $\Sigma$. To evaluate a modal atom in autoepistemic logic, a generalization of a structure was needed: a possible world structure.
%For arbitrary autoepistemic formula $\varphi$, possible world structures $Q$, and structures $I$, $\varphi$ is satisfied with respect to $Q$ and $I$ by the standard recursive rules of satisfaction in first order logic (using $I$), augmented with one additional rule: 
%\begin{align*}
%&Q,I \vDash K\varphi \text{ if } Q,I'\vDash \varphi \text{ for every } I'\in Q
%\end{align*}
% In a distributed setting there is a tuple of theories associated $\mc{T}=(\T_A)_{A\in \A}$  with a tuple of agents $\A$.
To represent the knowledge of multiple agents, we generalise the notion of a possible world structure:
% A universal possible world structure (upws) is an extension to definition \ref{def:pws} and is used to evaluate the truth value of a sentence $\varphi \in \dael$.
\begin{definition}\label{def:upws}
A \textit{distributed possible world structure (DPWS)} is an indexed family $\upws=(\upws_A)_{A\in \A}$, where $\upws_A$ is a possible world structure for each $A\in\A$.
\end{definition}
% Given an agent $A\in \A$ and a upws $\upws$, denote $\upws_A$ as the pws belonging to that agent.
% Definition \ref{def:latticeupws} is an extension on \ref{def:latticepws} to the set of universal possible world structures. 
%\nomenclature{$\upws$}{ A universal possible world structure:s the tuple of possible worlds structures, associated to the tuple of agents: $\upws=\{Q_A\}_{A\in \mathcal{A}}$}
The knowledge order can be extended pointwise to DPWSs.
One DPWS contains more knowledge than another if each agent has more knowledge: 
\begin{definition}\label{def:latticeupws}
 Given two DPWSs $\upws^1$ and $\upws^2$, we define $\upws^1 \lequpws \upws^2$ if $\upws^1_A\leqpws \upws^2_A$ for each $A\in \A$. 
\end{definition}
% \begin{proposition}
% The set of all DPWSs forms a complete lattice with this order.
% \end{proposition}
% \begin{proof}
%  Ommited.
% \end{proof}\pieter{or how do we write that? Trivial?}
% THIS PROPOSITION COMES ALREADY LATER...
% 
% Note that this order is a pointwise extension from Definition \ref{def:latticepws+}. This order is also a knowledge order. There is more knowledge in the system if every agent has more knowledge.
% \subsection{Truth of a sentence}
% OPTIE 1
%The satisfaction of a sentence $\varphi \in \dael$ with a  universal possible world structure $\upws$ and a structure $I$ is defined by the standard recursive rules of satisfaction in first order logic (using $I$), augmented with one additional rule: 
%\begin{align*}
%&Q,I \vDash K\varphi &&\text{ if } Q,I'\vDash \varphi \text{ for every } I'\in Q\\
% %&(K_{A} \varphi)^{\upws,I}=\Tr &&\text{ iff for each $J\in Q_A$ }\varphi^{\upws,J}=\Tr
%\end{align*}
% 
%OPTIE 2
% The valuation of a sentence $\varphi \in \dael$ with a  universal possible world structure $\upws$ and a structure $I$ is defined as:

The value of a sentence is obtained like in AEL by evaluating each modal operator with respect to the right agent. 
\begin{definition}
\label{def:2val}
The \emph{value} of a sentence $\varphi$ with respect to a DPWS $\upws$, an interpretation $I$ and a variable assignment $\varass$ (denoted $\varphi^{\upws,I,\varass}$) is  defined inductively by the following recursive rules:
\begin{align*} 
& (P(\ttt))^{\upws,I,\varass} &&= \left\{ \begin{array}{l}\ltrue \text{ if } \ttt^{I,\varass} \in P^I\\\lfalse\text{ otherwise}\end{array}\right. \\
 &(\neg \varphi)^{\upws,I,\varass}&&=(\varphi^{\upws,I,\varass})^{-1}\\
 &(\varphi\land \psi)^{\upws,I,\varass}&&=\glb_{{\leqt}} (\varphi^{\upws,I,\varass},\psi^{\upws,I,\varass})\\
 &(\forall x:\varphi)^{\upws,I,\varass}&&=\glb_{{\leqt}} \{\varphi^{\upws,I,\varass [x:d]}\mid d \in D\}\\
 &(K_t\varphi)^{\upws,I,\varass}&&=\left\{\begin{array}{ll}
                                  \Tr & \text{if }  t^{I,\varass} \in \A \text{ and } \\ &\ \varphi^{\upws,J,\varass}=\Tr \text{ for each $J\in \upws_{t^{I,\varass}}$ }\\
                                  \Fa & \text{otherwise}
                                 \end{array}\right.
%& \     && \quad  \text{where $\pws_i$ is the theory of the same agent as $T_i$}
\end{align*}
\end{definition}
% 
% \subsection{A 4-valued truth valuation}
As before, if $\varphi$ is a sentence, $\varphi^{\upws,J,\varass}$ is independent of $\varass$ and we omit $\varass$ in the notation.

In order to generalise this valuation to a partial setting, we define a generalisation of belief pairs. 
%TODO (for journal paper): maybe define a distributed belief pair as a pair of DPWSs instead.
% \marcos{Shortly before the IJCAI deadline, we discussed the possibility to define a distributed belief pair as a pair of DPWSs instead. I think that this definition would be more in line with the general AFT methodology. Then we also don't need to introduce the abuse of notation that we introduce below.}\bart{I implemented this change and added the reverse ``abuse of notation, though rephrasing it a bit more positive.}
\begin{definition}\label{ubp}
 A \emph{distributed belief pair} is a pair $\ubp = (\mc{P},\mc{S})$ of distributed possible world structures. 
 %$$\begin{pmatrix}
 % P_{A_1} & S_{A_1} \\
 % P_{A_2} & S_{A_2} \\
 % \vdots\\
 % P_{A_n} & S_{A_n} \\
 %\end{pmatrix}$$
 \end{definition}
 If $\ubp = (\mc{P},\mc{S})$ is a distributed belief pair, we denote the conservative bound $\mc{P}$ as $\ubp^c$ and the liberal bound $\mc{S}$ as $\ubp^l$. Furthermore, we use $\ubp_A$ to denote the belief pair $(\mc{P}_A,\mc{S}_A)$. 
%  \marcos{I think that we can drop the following sentence. Do you agree?} 
%  As such, we identify a distributed belief pair with the indexed family of belief pairs $(\ubp_A)_{A\in\A}$. 
 %We sometimes use this identification below, for instance in the following definition.
%  n indexed family $\ubp = (\ubp_A)_{A\in\A}$, where for each $A \in \A$, $\ubp_A$ is a pair $(P_A,S_A)$ of possible world structures.
%  We call a distributed belief pair consistent if $\ubp_A$ is consistent for all $A$.
The precision order on the approximating lattice is defined as usual, in the following definition.
\begin{definition}
If $\ubp^1$ and $\ubp^2$ are two distributed belief pairs, we say that  $\ubp^1$ is \emph{less precise} than  $\ubp^2$ if $\ubp^1_A\leqp  \ubp^2_A$ for each agent $A$. We denote this fact by  $\ubp^1 \leqp \ubp^2$.
\end{definition}
%  The precision order on distributed belief pairs is a pointwise extension of the precision order on belief pairs.
%  Given an agent $A\in\A$, we use $\ubp_A^c$, $\ubp_A^l$ for the conservative 
%   respectively the liberal bound for agent $A$.
%     We also use $\ubp^c$, for the tuple $(\ubp^c_A)_{A\in \A}$ of conservative bounds (a DPWS), and $\ubp^l$ for the tuple $(\ubp^l_A)_{A\in\A}$ of liberal bounds (a DPWS) 
% By abuse of notation, we sometimes identify $\ubp$ with a pair of distributed possible world structures $(\ubp^c,\ubp^l)$.
% Definiton \ref{def:latticebp} is extended for universal belief pairs.
%The precision order on distributed belief pairs is a pointwise extension of the precision order on belief pairs:  $\ubp\lequbp \ubp'$ if $\ubp_A\leqbp \ubp'_A$ for each $A\in\A$.
% \end{definition}
The following proposition follows easily from the equivalent result in AEL. 
\begin{proposition}
 The set of all DPWSs forms a complete lattice when equipped with the order $\leq_K$. The set of all distributed belief pairs forms a lattice when equipped with the order $\leqbp$. %MThe latter is the bilattice of the former.
\end{proposition}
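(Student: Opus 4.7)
The plan is to reduce both statements to the well-known fact that any product of complete lattices is itself a complete lattice (with meets and joins computed coordinatewise), after observing that the underlying single-agent objects already carry a complete-lattice structure.

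First I would verify that the set of all possible world structures, equipped with $\leqpws$, is a complete lattice. A possible world structure is by definition a subset of the fixed set $\mathcal{W}$ of $\Sigma$-structures that agree with $I_o$ on $\Sigma_o$, and $\pws_1 \leqpws \pws_2$ means $\pws_2 \subseteq \pws_1$. Thus the poset of possible world structures is just $(2^{\mathcal{W}}, \supseteq)$, which is a complete lattice: the $\leqpws$-lub of a family $\{\pws_i\}$ is $\bigcap_i \pws_i$ and its $\leqpws$-glb is $\bigcup_i \pws_i$. In particular $\bot = \mathcal{W}$ and $\top = \emptyset$. This observation was implicit in the AEL preliminaries (Section \ref{ss:AEL}) and does not require extra work.

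Next, by Definition~\ref{def:latticeupws}, the order on DPWSs is the pointwise extension of $\leqpws$ indexed by $\A$. Hence the poset of DPWSs is isomorphic to the $\A$-fold product $\prod_{A\in\A}(2^{\mathcal{W}},\supseteq)$. Since products of complete lattices are complete lattices with operations computed coordinatewise, the lub of a family $\{\upws^i\}$ of DPWSs is the DPWS whose $A$-th component is $\bigcap_i \upws^i_A$, and similarly for the glb using unions. This gives the first claim.

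For the second claim, I would first recall that on single-agent belief pairs the precision order is $(P_1,S_1)\leqbp (P_2,S_2)$ iff $P_1 \leqpws P_2$ and $S_2 \leqpws S_1$, which is exactly the product order on $L \times L^{\mathrm{op}}$ with $L$ the complete lattice of possible world structures. Since $L^{\mathrm{op}}$ is also a complete lattice, so is $L \times L^{\mathrm{op}}$. The distributed precision order in Definition~\ref{ubp} is again just the pointwise extension over $\A$ of this order, so the poset of distributed belief pairs is isomorphic to $\prod_{A\in\A}(L\times L^{\mathrm{op}})$, which is a (in fact complete) lattice by the same product argument.

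There is no real obstacle here; the only mild subtlety is keeping the reversal of inclusion in $\leqpws$ and the reversal of the second coordinate in $\leqbp$ straight when identifying meets and joins. I would therefore write this proposition up very briefly, perhaps stating explicitly the coordinatewise formulas for the lub and glb of families of DPWSs and of distributed belief pairs, since these explicit forms will likely be convenient in later sections when we apply AFT-style transfinite constructions.
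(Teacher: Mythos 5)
Your proposal is correct and follows essentially the same route as the paper: both reduce the claim to the fact that $\leq_K$ and $\leqbp$ are product (pointwise) orders built from the single-agent complete lattices, so completeness is inherited coordinatewise. You merely spell out the coordinatewise lub/glb formulas and the $L\times L^{\mathrm{op}}$ identification explicitly, which the paper leaves implicit.
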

\begin{proof}
Follows immediately from the fact the order $\leq_K$ is the product order of the knowledge orders for each agent and the same for $\leqbp$.

\end{proof}

 As before, we restrict our attention to \emph{consistent} distributed belief pairs.
Note that for a set $\mc{S}$ of DPWSs, $$\lub_{\leq_K}(\mc{S}) = \left(\bigcap_{\upws\in\mc{S}} \upws_A \right)_{A\in\A}$$
The notion of three-valued valuations is extended to the distributed setting by evaluating each modal operator with respect to the correct agent. 
% To define a formal valuation function for universal belief pairs, note that to evaluate a modal atom $K_t \varphi$ with respect to a universal belief pair $\ubp$, we evaluate it with respect to $\ubp_A$.
% This fact, with the discussion on the valuation for belief pairs, leads to following definition for the conservative and the liberal truth valuation of a formula $\varphi \in \dael$ with respect to a universal belief pair $\ubp$.
\begin{definition}\label{def:conlibvaluation}
The \emph{value} of $\varphi$ with respect to a distributed belief pair $\ubp$, an interpretation $I$ and a variable assignment $\varass$ (notation $\varphi^{\ubp,I,\varass}$) is defined inductively as follows:
\begin{align*}
 &(P(\ttt))^{B,I,\varass}&&= \left\{ \begin{array}{ll}\ltrue &\text{if }\ttt^{I,\varass} \in P^I\\\lfalse&\text{otherwise}\end{array}\right.\\
 &(\neg \varphi)^{B,I,\varass}&&=(\varphi^{B,I,\varass})^{-1}\\
 &(\varphi\land \psi)^{B,I,\varass}&&=\glb_{{\leqt}} (\varphi^{B,I,\varass},\psi^{B,I,\varass})\\
 &(\forall x:\varphi)^{B,I,\varass}&&=\glb_{{\leqt}} \{\varphi^{B,I,\varass [x:d]}\mid d \in D\}\\
&(K_t\varphi)^{\ubp,I,\varass}&&= \left\{\begin{array}{ll}
                                  \Tr & \text{if } t^{I,\varass} \in \A \text{ and } \\ &\  \varphi^{\ubp,J,\varass}=\Tr \text{ for all $J\in \ubp_{t^{I,\varass}}^c$}\\
                                  \Fa & \text{if } t^{I,\varass} \notin \A \text{ or } \\ &\  \varphi^{\ubp,J,\varass}=\Fa \text{ for some $J\in \ubp_{t^{I,\varass}}^l$}\\
                                  \Un & \text{otherwise}
                                 \end{array}\right.
\end{align*}
Note that this definition differs from the recursive definition of the three-valued valuation of an AEL formula only in the the fifth rule.

\end{definition}
This valuation essentially provides us with the means to apply AFT to lift the class of semantics of AEL to dAEL.

\subsection{Semantics of dAEL through AFT}
\label{sec:sem}
Recall that we assume that a  structure $I_o$ interpreting the domain and all symbols in $\Sigma_o$ is fixed and hence shall not be repeated in all definitions.
The two- and three-valued valuations form the building blocks to extend the semantic operator and its approximator from AEL to dAEL.
% This will allow us to apply AFT and lift all the various semantics defined for AEL to dAEL. 
% In order to generalise the various semantics for AEL to the distributed setting, we first define a semantic operator $\upwsrevision$.
\begin{definition}\label{def:beliefrevisionupws}
The knowledge revision operator for a distributed theory $\T$ is a mapping from the set of distributed possible world structures to itself, defined by
\begin{align*}
\upwsrevision(\upws)=(\{I\mid (\T_A)^{\upws,I}=\Tr \})_{A\in \A}
\end{align*}
%\nomenclature{$\upwsrevision(\upws)$}{The knowledge revision operator on a distributed possible world structure.}fjcdjidi
\end{definition}
This revision operator revises the knowledge of all agents simultaneously, given their current states of mind. 
% This formalizes the notion of revising the knowledge of all agents, given their current state and their theories. 
Fixpoints represent states of knowledge of the agents that cannot be revised any further. Or, in other words, distributed possible world structures that are consistent with the theories of all agents.
% AFT studies fixpoints of semantic operator $\upwsrevision$ using fixpoints of an approximator $\ubprevision$.
\begin{definition}
\label{def:appr}
The approximator for a distributed theory $\T$  on a distributed belief pair $\ubp$ is defined by
 $\ubprevision(\ubp)=(\ubprevisionL(\ubp),\ubprevisionU(\ubp)),$
 where
\begin{align*}
&\ubprevisionL(\ubp)=(\{I\mid   (\T_A)^{\ubp,I} \neq \Fa \})_{A\in \A}, \\
&\ubprevisionU(\ubp)=(\{I\mid   (\T_A)^{\ubp,I} = \Tr \})_{A\in \A}. 
\end{align*}
% Note that we use $\ubprevisionL$ and $\ubprevisionU$ as shorthand to calculate the new lower- respectively upperbound given a ubp $\ubp$.
\end{definition}
\begin{theorem}
 $\ubprevision$ is an approximator of $\upwsrevision$.
\end{theorem}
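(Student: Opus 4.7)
The plan is to unpack the definition of ``approximator'' into its two requirements and discharge each of them. I need to show (i) that $\ubprevision$ is $\leqp$-monotone on distributed belief pairs, and (ii) that for each DPWS $\upws$, the pair $\ubprevision(\upws,\upws)$ sandwiches $\upwsrevision(\upws)$, i.e., $\ubprevisionL(\upws,\upws)\leqpws \upwsrevision(\upws)\leqpws \ubprevisionU(\upws,\upws)$. Both will be reduced to a single inductive lemma about the three-valued valuation in Definition~\ref{def:conlibvaluation}.

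For (ii) I would first prove, by structural induction on $\varphi\in\dael$, that for every DPWS $\upws$, every structure $I$ and every variable assignment $\varass$, the three-valued value $\varphi^{(\upws,\upws),I,\varass}$ coincides with the two-valued value $\varphi^{\upws,I,\varass}$ of Definition~\ref{def:2val}. The atomic, Boolean and quantifier cases are immediate (Kleene's tables agree with classical truth tables when only $\{\ltrue,\lfalse\}$ occur); for the modal case, if $t^{I,\varass}\notin\A$ both definitions return $\lfalse$, and if $t^{I,\varass}\in\A$ then the induction hypothesis gives $\varphi^{(\upws,\upws),J,\varass}=\varphi^{\upws,J,\varass}$ for all $J\in\upws_{t^{I,\varass}}$, so the conservative-bound and liberal-bound conditions coincide with the S5-style clause. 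Feeding this into the definitions of $\ubprevisionL$, $\ubprevisionU$ and $\upwsrevision$ yields $\ubprevision(\upws,\upws)=(\upwsrevision(\upws),\upwsrevision(\upws))$, which trivially sandwiches $\upwsrevision(\upws)$.

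For (i) the key lemma is: if $\ubp^1\leqbp \ubp^2$, then for every $\varphi\in\dael$, every $I$ and every \varass, one has $\varphi^{\ubp^1,I,\varass}\leqp \varphi^{\ubp^2,I,\varass}$ on truth values (where $\lunkn\leqp\ltrue$ and $\lunkn\leqp\lfalse$). Again I proceed by induction on $\varphi$. The Boolean and quantifier cases follow because Kleene's truth tables are $\leqp$-monotone in each argument, and negation preserves the truth-value precision order because $\leqp$ is symmetric under swapping $\ltrue$ and $\lfalse$. The modal case $K_t\varphi$ is the only genuinely new step: unfolding $\ubp^1\leqbp \ubp^2$ at agent $A=t^{I,\varass}$ gives $\ubp^{2,c}_A\subseteq \ubp^{1,c}_A$ and $\ubp^{1,l}_A\subseteq \ubp^{2,l}_A$; combined with the induction hypothesis, ``$\varphi$ is $\ltrue$ in all worlds of the smaller $\ubp^{1,c}_A$'' propagates to the smaller superset $\ubp^{2,c}_A$, and ``$\varphi$ is $\lfalse$ in some world of $\ubp^{1,l}_A$'' propagates to the superset $\ubp^{2,l}_A$, so $(K_t\varphi)^{\ubp^2,I,\varass}$ retains whatever definite value $(K_t\varphi)^{\ubp^1,I,\varass}$ had. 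From this lemma, applied to the sentences in each $\T_A$, one reads off that raising precision of $\ubp$ can only shrink $\ubprevisionL$ (fewer interpretations fail to falsify $\T_A$) and grow $\ubprevisionU$ (more interpretations verify $\T_A$), which is exactly $\leqbp$-monotonicity pointwise over agents.

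The only real obstacle is bookkeeping: the order $\leqpws$ on possible world structures is \emph{reverse} set inclusion, and the precision order on belief pairs inverts the role of $S$. I would therefore state very explicitly at the start which inclusions correspond to which direction in $\leqp$, so that in the modal step the reader sees at a glance why the conservative bound contracts and the liberal bound expands as precision grows. With that bookkeeping settled, both (i) and (ii) reduce to the single induction sketched above.
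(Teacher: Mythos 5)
Your proposal is correct and follows essentially the same route as the paper's proof: establish $\leqp$-monotonicity of $\ubprevision$ via $\leqp$-monotonicity of the three-valued valuation, and obtain the sandwiching condition from the fact that the three-valued valuation on an exact pair $(\upws,\upws)$ coincides with the two-valued valuation, so $\ubprevision(\upws,\upws)=(\upwsrevision(\upws),\upwsrevision(\upws))$. The only difference is that you spell out the structural inductions (including the modal case and the inclusion bookkeeping) that the paper leaves as ``one can easily see''.
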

\begin{proof}
 One can easily see from Definition \ref{def:conlibvaluation} that the valuation $\ubp\mapsto(\T_A)^{\ubp,I}$ is $\leqp$-monotone. This implies that  when $\ubp\leqp\ubp'$, $\ubprevisionL(\ubp)\supseteq \ubprevisionL(\ubp')$ and $\ubprevisionU(\ubp) \subseteq \ubprevisionU(\ubp')$, i.e.\ $\ubprevisionL(\ubp)\leqk \ubprevisionL(\ubp')$ and $\ubprevisionU(\ubp) \geqk \ubprevisionU(\ubp')$, i.e.\ $\ubprevision(\ubp) \leqp \ubprevision(\ubp')$. Thus \ubprevision is $\leqp$-monotone.% follows from the fact that the valuation $\ubp\mapsto(\T_A)^{\ubp,I}$ is $\leqp$-monotone. 
 % This guarantees that when $\ubp\geqp\ubp'$, $\ubprevisionL(\ubp)\supset \ubprevisionL(\ubp')$ and $\ubprevisionL(\ubp) \subset \ubprevisionL(\ubp')$.

 The fact that $\ubprevision$ coincides with $\upwsrevision$ on two-valued belief pairs, follows from the fact that if $\ubp = (\upws,\upws)$, then $\T_A^{\ubp,I}=\T_A^{\upws,I}$.
\end{proof}

The stable operator $\upwsstrevision$ is defined for dAEL as $\upwsstrevision(\upws)=\lfp(\ubprevisionL(\cdot,\upws))$. % or $\upwsstrevision(\upws)=\lfp(\ubprevision(\upws,\cdot)^l)$.
Different fixpoints of these operators lead to different semantics as  discussed in Section \ref{ss:AFT};
\begin{definition}
 Let $\T$ be a distributed theory. 
  \begin{itemize}
  \item A \emph{supported model} of $\T$ with respect to $I_o$ is a fixpoint of $\upwsrevision$. 
  \item The \emph{Kripke-Kleene model} of $\T$ with respect to $I_o$ is the $\leq_p$-least fixpoint of $\ubprevision$. %\bart{THIS WAS WRONG: CHECK IMPACT! THERE IS ONLY ONE kk MODEL!} \pieter{Echt? Todo.}
  \item A \emph{partial stable model} of $\T$ with respect to $I_o$ is a distributed belief pair $\ubp$, such that $\ubp^c=\upwsstrevision(\ubp^l)$ and $\ubp^l=\upwsstrevision(\ubp^c)$.
  \item A \emph{stable model} of $\T$ with respect to $I_o$ is a DPWS $\upws$, such that $(\upws,\upws)$ is a partial stable model of $\T$.
  \item The \emph{well-founded model} of $\T$ with respect to $I_o$ is the least precise partial stable model of $\T$.
 \end{itemize}
%Bart: this is a bit late to say, no?
% It follows from AFT, that there is always a unique Kripke-Kleene and Well-founded model \citetDMT{aaai/DMT98}.
 
\end{definition}
We use the abbreviations \textsf{Sup}-\emph{model}, \textsf{KK}-\emph{model}, \textsf{PSt}-\emph{model}, \textsf{St}-\emph{model} and \textsf{WF}-\emph{model} to refer to these five kinds of models respectively.

\begin{example}[Example \ref{ex:vote} continued]\label{ex:vote:cont}
% \marcos{We have introduced a lot of very technical definitions, which an average reader will find hard to understand. So I see the main task of the examples to illustrate how these definitions can be applied to simple examples and how this connects to our intuitions about those examples. Before, we did not make any connection between the examples and the definitions that we have introduced. I have now made this connection by giving a much more explicit derivation of the results in the example. It has the disadvantage that the examples take up a lot of space, but I think that it is justified given how difficult it is to understand our definitions without having seen them applied to some simple examples.\\}
% \bart{OK for me. I should still look at the technical details but it certainly makes sense to do this;} OK GOOD
As discussed before, the intended result is that all three agents vote yes. This intended result corresponds to the following DPWS: 
 \[\left( \{ \{\vyes\}\}_A, \{ \{\vyes\}\}_B, \{ \{\vyes\}\}_C\right).\]
Note that in this DPWS, $K_A \vyes$, $K_B \vyes$ and $K_C \vyes$ are all true, while $K_A \neg\vyes$, $K_B \neg\vyes$ and $K_C \neg\vyes$ are all false. We now show that this DPWS is indeed the only model of $\T$. For this we first establish that this DPWS viewed as an exact distributed belief pair is the only fixpoint of~$\ubprevision$.

Let $\ubp$ be any distributed belief pair. Then
\begin{align*}
 \ubprevisionL(\ubp)_C &= \{I\mid (\T_C)^{\ubp,I} \neq \Fa \} \textnormal{ (by Definition \ref{def:appr})}\\
 &= \{I \mid \vyes^{\ubp,I} \neq \Fa \}\\
 &= \{I \mid I = \{\vyes\} \}\\
 &= \{\{\vyes\}\}.
\end{align*}

Similarly, $\ubprevisionU(\ubp)_C = \{\{\vyes\}\}$. 

Now suppose $\ubp$ is a fixpoint of \ubprevision. By the above, $\ubp_C = \{\{\vyes\},\{\vyes\}\}$, i.e.\ $(K_C \vyes)^\ubp = \Tr$ by Definition~\ref{def:conlibvaluation}, i.e.\ $(K_B \vyes \lor K_C \vyes)^\ubp = \Tr$. So for any fixpoint $\ubp$ of \ubprevision, we get
\begin{align*}
 \ubprevisionL(\ubp)_A &= \{I\mid (\T_A)^{\ubp,I} \neq \Fa \}\\
 &= \{I \mid ((K_B \vyes \lor K_C\vyes) \lequiv \vyes)^{\ubp,I} \neq \Fa \}\\
 &= \{I \mid \vyes^{\ubp,I} \neq \Fa\} \textnormal{ (since } (K_B \vyes \lor K_C \vyes)^\ubp = \Tr)\\
 &= \{I \mid I = \{\vyes\} \}\\
 &= \{\{\vyes\}\}.
\end{align*}

Similarly, $\ubprevisionU(\ubp)_A = \{\{\vyes\}\}$. Now from this we get that $(K_A \vyes)^\ubp = \Tr$, i.e.\ by the above we get that $(K_A \vyes \land K_C \vyes)^\ubp = \Tr$. By a similar derivation as above, we can then conclude that $\ubprevisionL(\ubp)_B = \ubprevisionU(\ubp)_B = \{\{\vyes\}\}$, i.e.\ $\ubp = ( \{ \{\vyes\}\}_A, \{ \{\vyes\}\}_B, \{ \{\vyes\}\}_C)$. Thus $(\{ \{\vyes\}\}_A, \{ \{\vyes\}\}_B, \{ \{\vyes\}\}_C)$ is indeed the only fixpoint of~$\ubprevision$.
 
% In this example, examining $\theory_C$, it is clear that in every fixpoint \ubp of $\ubprevision$, it must hold that $(K_C\vyes)^\ubp=\ltrue$ and $(K_C\lnot \vyes)^\ubp=\lfalse$.
% Now, since $K_B\vyes \lor K_C\vyes$  holds in such fixpoints, we find that also  $(K_A\vyes)^\ubp=\ltrue$ and $(K_A\lnot \vyes)^\ubp=\lfalse$ hold in every fixpoint of \ubprevision. 
% Finally, similarly, we conclude that in such fixpoint also $K_B\vyes$ and $\lnot K_B\lnot \vyes$ hold. 
% Thus, in this simple example $\ubprevision$ only has a single fixpoint, namely the following distributed possible world structure (exact distributed belief pair):
%  \[\left( \{ \{\vyes\}\}_A, \{ \{\vyes\}\}_B, \{ \{\vyes\}\}_C\right).\]
Since it is the unique fixpoint of \ubprevision, it is the \textsf{KK}-\emph{model} and the unique \textsf{PSt}-\emph{model} and thus also the \textsf{WF}-\emph{model} of $\T$. Since this model is exact, it is also the unique \textsf{Sup}-\emph{model} and \textsf{St}-\emph{model} of \T. 
 Thus, we see that in this example, all our semantics coincide with the intended model. 
%  Thus, we find that in fixpoints of $\ubprevision$, $K_A\vyes \land K_C\vyes$ holds. 
\end{example}

\begin{example}\label{ex:candy}
Suppose we have two agents, the mother and father of a six-year-old child: $\A=(M,D)$. A common situation is one where the child fancies candy and the father answers ``You can have some candy if it is okay for mom'', while the mother answers ``You can have candy if your father says so''. 
These statements can be modelled in dAEL as 
\begin{align*}
  \T_D& =\{ K_M(c) \Rightarrow c\} &\T_M&=\{ K_D(c) \Rightarrow c\}.
 \end{align*}

% The child, who has an inherent comprehension of dAEL, now has to choose between the various semantics. The following analysis helps him choose how to act. 
There exist four possible world structures for each agent:
 \begin{enumerate}
  \item The empty possible world set or inconsistent belief: $\emptyset$, denoted as $\top$.
  \item The belief of $c$: $\{\{c\}\}$, i.e., the fact that it follows from the public announcements made by the agent in question that the kid can have candy. %, shortened as $c$
  \item The disbelief of $c$: $\{\emptyset\}$, i.e., the fact that it follows from the public announcements made by the agent in question that the kid cannot have candy. %, shortened as $\neg c$
  \item The lack of knowledge: $\{\emptyset,\{c\}\}$, denoted as $\bot$, i.e., the fact that no statements about being able to get candy follow from the announcements made by the agent in question.
 \end{enumerate}
%  \bart{It is confusing that we both have: inconsistent belief set and inconsistent belief/knowledge/. The latter means: the empty poss world set. The former means: a tuple $(P,Q)$ where $P\not\leq_K Q$}
%  \marcos{Yes, but I don't think that it's a practically relevant issue. After we explain in the preliminaries that we focus on consistent belief pairs, we don't ever use ``consistent'' again in the meaning of a consistent belief pair. But we often use it in a meaning identical or similar to the one used here (compare also ``universally consistent'' in Section 5).}

The semantic operator associated to this theory is:
\begin{align*}
\upwsrevision(\upws)&=(\{I\mid (\T_D)^{\upws,I}=\Tr \}_D,\{I\mid (\T_M)^{\upws,I}=\Tr \}_M) \textnormal{ (by Definition \ref{def:beliefrevisionupws})}\\
&=(\{I\mid (K_M(c) \Rightarrow c)^{\upws,I}=\Tr \}_D,\{I\mid (K_D(c) \Rightarrow c)^{\upws,I}=\Tr \}_M)\\
&=(\{I\mid (K_M(c))^{\upws,I}=\Fa \textnormal{ or } c^{\upws,I}=\Tr \}_D,\{I\mid (K_D(c))^{\upws,I}=\Fa \textnormal{ or } c^{\upws,I}=\Tr \}_M)\\
&=(\{I\mid \emptyset \in \upws_M \textnormal{ or } I = \{c\} \}_D,\{I\mid \emptyset \in \upws_D \textnormal{ or } I = \{c\} \}_M) \textnormal{ (by Definition \ref{def:2val})}
\end{align*}

Therefore 
\begin{align*}
\upwsrevision(\upws)_D = 
\begin{cases}
 \{c\} & \textnormal{if } \emptyset \notin \upws_M;\\
 \bot & \textnormal{if } \emptyset \in \upws_M.\\
\end{cases}
\hspace{16mm}
\upwsrevision(\upws)_M = 
\begin{cases}
 \{c\} & \textnormal{if } \emptyset \notin \upws_D;\\
 \bot & \textnormal{if } \emptyset \in \upws_D.
\end{cases}
\end{align*}

From this it is obvious that there are two \emph{supported models}, namely $(\{\{c\}\}_D,\{\{c\}\}_M)$ and $(\bot_D,\bot_M)$. In the first model, $K_D c$ and $K_M c$ hold, i.e.\ Mom and Dad agree that the kid can have candy. In the second model, $K_D c$, $K_D \neg c$, $K_M c$ and $K_M \neg c$ are all false, i.e.\ Mom and Dad do not make any claims about the kid being allowed or disallowed to have candy.

For computing the other semantics, we need to determine the approximator $\ubprevision$. The first component of the approximator is:
\begin{align*}
\ubprevisionL(\ubp)&=(\{I\mid (\T_D)^{\ubp,I} \neq \Fa \}_D,\{I\mid (\T_M)^{\ubp,I} \neq \Fa \}_M)\\
&=(\{I\mid (K_M(c) \Rightarrow c)^{\ubp,I} \neq \Fa \}_D,\{I\mid (K_D(c) \Rightarrow c)^{\ubp,I} \neq \Fa \}_M)\\
&=(\{I\mid (K_M(c))^{\ubp,I} \neq \Tr \textnormal{ or } c^{\ubp,I} \neq \Fa \}_D,\{I\mid (K_D(c))^{\ubp,I} \neq \Tr \textnormal{ or } c^{\ubp,I} \neq \Fa \}_M)\\
&=(\{I\mid \emptyset \in \ubp^c_M \textnormal{ or } I = \{c\} \}_D,\{I\mid \emptyset \in \ubp^c_D \textnormal{ or } I = \{c\} \}_M)
\end{align*}

Therefore
\begin{align*}
\ubprevisionL(\ubp)_D = 
\begin{cases}
 \{c\} & \textnormal{if } \emptyset \notin \ubp^c_M;\\
 \bot & \textnormal{if } \emptyset \in \ubp^c_M.\\
\end{cases}
\hspace{16mm}
\ubprevisionL(\ubp)_M = 
\begin{cases}
 \{c\} & \textnormal{if } \emptyset \notin \ubp^c_D;\\
 \bot & \textnormal{if } \emptyset \in \ubp^c_D.
\end{cases}
\end{align*}

% \begin{align*}
% \ubprevisionU(\ubp)&=(\{I\mid (\T_D)^{\ubp,I}=\Tr \}_D,\{I\mid (\T_M)^{\ubp,I}=\Tr \}_M)\\
% &=(\{I\mid (K_M(c) \Rightarrow c)^{\ubp,I}=\Tr \}_D,\{I\mid (K_D(c) \Rightarrow c)^{\ubp,I}=\Tr \}_M)\\
% &=(\{I\mid (K_M(c))^{\ubp,I}=\Fa \textnormal{ or } c^{\ubp,I}=\Tr \}_D,\{I\mid (K_D(c))^{\ubp,I}=\Fa \textnormal{ or } c^{\ubp,I}=\Tr \}_M)\\
% &=(\{I\mid \emptyset \in \ubp^c_M \textnormal{ or } I = \{c\} \}_D,\{I\mid \emptyset \in \ubp^c_D \textnormal{ or } I = \{c\} \}_M)
% \end{align*}

Similarly, for the second component of the approximator we get:
\begin{align*}
\ubprevisionU(\ubp)_D = 
\begin{cases}
 \{c\} & \textnormal{if } \emptyset \notin \ubp^l_M;\\
 \bot & \textnormal{if } \emptyset \in \ubp^l_M.\\
\end{cases}
\hspace{16mm}
\ubprevisionU(\ubp)_M = 
\begin{cases}
 \{c\} & \textnormal{if } \emptyset \notin \ubp^l_D;\\
 \bot & \textnormal{if } \emptyset \in \ubp^l_D.
\end{cases}
\end{align*}

The \emph{Kripke-Kleene model} can be computed by iterated applications of \ubprevision to $(\bot,\top)$ until a fixpoint is reached:
\begin{align*}
 \ubprevision(((\bot_D,\bot_M),(\top_D,\top_M))) &= ((\bot_D,\bot_M), (\{c\}_D,\{c\}_M))\\
 \ubprevision(((\bot_D,\bot_M), (\{c\}_D,\{c\}_M))) &= ((\bot_D,\bot_M), (\{c\}_D,\{c\}_M))
\end{align*}

Thus $((\bot_D,\bot_M), (\{c\}_D,\{c\}_M))$ is the Kripke-Kleene model of $\T$. In this model $K_D c$ and $K_M c$ have truth-value \Un, while $K_D \neg c$ and $K_M \neg c$ have truth value \Fa. So intuitively, it is undetermined whether Mom and Dad allow the kid to have candy, but they definitely do not disallow it.

Observation about \ubprevisionL: The value of $\ubprevisionU(\ubp^c,\ubp^l)$ depends only on $\ubp^c$; indeed, $\ubprevisionU(\ubp^c,\ubp^l) =  \upwsrevision(\ubp^c)$.

Now for any DPWS $\upws$, 
\begin{align*}
\upwsstrevision &= \lfp(\ubprevisionL(\cdot,\upws)) \\
&= \lfp(\upwsrevision) \textnormal{ (by the above observation about \ubprevisionL)}\\
&= (\bot_D,\bot_M)
\end{align*}

So the only \emph{partial stable} model is $(\bot,\bot)$. This is therefore also the \emph{well-founded model}. And since it is exact, it is also the only stable model. In this model, $K_D c$, $K_D \neg c$, $K_M c$ and $K_M \neg c$ are all false, i.e.\ Mom and Dad do not make any claims about the kid being allowed or disallowed to have candy. 
% The \emph{Kripke-Kleene model} is $((\bot,\{\{c\}\})_D,(\bot,\{\{c\}\})_M)$. So, in the Kripke-Kleene semantics it is unknown for both Dad and Mom whether the child can have candy. However, from none of their theories it follows that the child cannot have candy. 
% The DPWS $\bot := (\bot_D,\bot_M)$ is the (unique) \emph{stable model}: $(\bot,\bot)$ is a \emph{partial stable} model, since $\bot=\lfp(\ubprevision(\cdot,\bot))^c $ and $\bot=\lfp(\ubprevision(\cdot,\bot))^l $. 
% $(\bot,\bot)$ is the \emph{well-founded model}.
\end{example}

In the above example, it can be seen that supported models are very liberal in deriving knowledge, as knowledge may be supported by circular reasoning. For instance, the supported model $(\{\{c\}\}_D,\{\{c\}\}_M)$ essentially states that from the announcements of Mom and Dad, it follows that the kid is allowed to have candy. Whether this is a problematic interpretation in the case of this toy example may be debatable, but is is certainly not acceptable in access control: We do not want to allow access when the only reason to support the access is a circular justification that assumes that the access in question should be granted. Such circular justification could cause security problems! We will discuss an example of this in Section \ref{sec:AC}.

This criticism is similar to what has been said about Moore's original autoepistemic expansions (which are exactly the supported models). Other semantics, such as stable and well-founded semantics are more \emph{grounded} \mycite{GroundedFixpoints} in the sense that they derive only knowledge for which there is ground in the theory: knowledge is only derived if there is a non-self supporting reason. This is a reasonable way of deriving knowledge from the theories.

%\bart{I do not like it that the example takes half a page. I would leave out all the equations ``computing'' the operators}

\section{dAEL and AEL}\label{sec:mapping}
% In this section,
% we discuss how dAEL relates to autoepistemic logic. 
% 
% \subsection{Embedding OEL in dAEL}
% Some years ago, \citet{VlaeminckVBD/KR2012} introduced Ordered Epistemic Logic (OEL). Like dAEL, this is also a multi-agent extension of autoepistemic logic. The big difference with our logic is that they assume that there is a global (well-founded) order on the agents such that bigger agents (in the order) can only refer to knowledge of strictly lower agents in the order. This means that OEL also disallows self-referntial statements, which are key to distinguishing between the variety of semantics defined for (d)AEL. 
% 
% \todo{Discuss OEL and dOEL in more detail. Pieter: did you ever look at this in detail? 
% My feeling here is that dOEL maps directly into dAEL (Sup-semantics -> well basically ANY semantics. OEL itself is different. 
% Also: TODO: check the related work from Hanne's paper to see if there is something else that we should discuss in this work}
% 
% \subsection{Mapping dAEL to AEL}
\newcommand\tauform{\m{\tau_{\textit{formula}}}}
\newcommand\tautheo{\m{\tau_{\textit{theory}}}}
\newcommand\taustruc{\m{\tau_{\textit{structure}}}}
\newcommand\taupws{\m{\tau_{\textit{pws}}}}
\newcommand\taubp{\m{\tau_{\textit{beliefpair}}}}

We now describe a mapping from dAEL to AEL.
We prove that for distributed theories that do not contain any inconsistency, the semantics for dAEL match the corresponding semantics for AEL. 
In the case of partially inconsistent distributed theories, the semantics do not coincide: dAEL allows for a single agent to have inconsistent beliefs, whereas AEL has no mechanism to encapsulate an inconsistency in a similar way. 
This capacity of dAEL to encapsulate inconsistencies is a desirable feature, for instance in access control, where it facilitates to \emph{isolate} a faulty agent. We discuss this kind of isolation in detail in Section \ref{sec:faulty}.   %TODO (for journal paper): we discuss this in the next section.
% \todo{We should discuss this isolation of a faulty agent in section \ref{sec:AC}}
% \todo{In the journal paper, we should also discuss the embedding of OEL in dAEL}\bart{Agree. This discussion should not be long. It is fairly straightforward}

It has been noted before, by \citet{tocl/VennekensGD06/Fix} and \citet{VlaeminckVBD/KR2012} that natural embeddings of certain ``stratified'' languages in AEL fail when there is the possibility of inconsistent knowledge. 
They have presented the notion of \emph{permaconsistent} theories as a criterion for their embeddings to work. In this section, we show \begin{enumerate}
\item how to generalise permaconsistency to dAEL,
\item that for permaconsistent theories, our mapping indeed preserves semantics, and
\item that a weaker criterion (being universally consistent) works for supported, stable and partial stable semantics. 
\end{enumerate}

\ifthenelse{\boolean{showproofs}}{}{Due to space restrictions, proofs of intermediate lemmas are omitted (they are technical and do not provide new insights).}
We first present a generalisation of the notion of permaconsistency to the distributed case. 
\begin{definition}
 A distributed theory $\T$ is \emph{permaconsistent} if for each  $A\in \A$ and each theory $T'$ that can be constructed from $\T_A$ by replacing all occurrences of formulas $K_B\varphi$ not nested under a modal operator by $\ltrue$ or $\lfalse$, it holds that $T'$ has at least one model that expands $I_o$.
\end{definition}

The mapping from dAEL to AEL consists of a collection of translation functions, defined in Definitions \ref{def:tau_phi} to \ref{def:tau_B}. 
These functions translate syntactic constructs of dAEL like formulas and distributed theories and semantic constructs of dAEL like DPWSs and distributed belief pairs into the corresponding constructs of AEL. 
We denote each of these translation functions by $\tau$ with some subscript, where the subscript indicates the type of the output of the translation function. 
% For example, we use $\tautheo$ to denote the translation function from distributed theories to AEL-theories, since $T$ is a symbol we usually use for an AEL-theory.

Given a vocabulary $\Sigma$ used for writing a distributed theory $\T$ in dAEL, the AEL translation of $\T$ will be written in a slightly modified vocabulary $\Sigma'$:

\begin{definition}
 Given a vocabulary $\Sigma$, we define $\Sigma'$ to be the vocabulary consisting of
 \begin{itemize}
  \item all symbols in $\Sigma_o$,
  \item all symbols in $\Sigma_s$, but with an arity increased by one. 
 \end{itemize}
% of the same relation and function symbols as $\Sigma$, but with the arity of each symbol in $\Sigma_s$ increased by $1$. 
%  \bart{earlier I already noted: we should probably split of functions}
%  \bart{function symbols cannot ``be'' a domain element...}
\end{definition}

The additional argument of relation and function symbols in $\Sigma_s$ refers to the agent whose beliefs about the relation/function symbol we are using to interpret the symbol. 
Given an $n$-ary function symbol $f \in \Sigma$, we will therefore interpret $f$ as an $n\mathord{+}1$-ary function symbol in AEL, where $f(v_1,\dots,v_n,a)$ should be interpreted as the interpretation of $f(v_1,\dots,v_n)$ according to agent $a$. 
Our mapping from dAEL to AEL will be based on this intuition, namely in each formula, the extra argument will be used to represent the agent whose knowledge is referred to.
Since we assume all functions to be total, $f(a_1,\dots,a_n,a_{n+1})$ also needs to be interpreted when $a_{n+1}$ is not an agent.
Since it does not matter which value we give to $f(a_1,\dots,a_n,a_{n+1})$ in this case (this will follow from our particular translation), we fix an arbitrary element $\dummy$ in our domain $D$ to assign to such defective terms.
\begin{example}[Example \ref{ex:candy} continued]
 In this example, $\A=(M,D)$ and $\voc$ consists of a proposition symbol $c/0$ and two constant symbols, namely, $M/0,D/0$, where $M$ and $D$ refer to mommy and daddy and have a fixed interpretation (i.e., $M,D\in \Sigma_o$). 
 As such, $\voc'$ consists of one unary predicate symbol $c/1$ and the same function symbols as $\voc$. The intended interpretation of $c(d)$ is that the child can have candy according to $d$. 
\end{example}

We use the following notational conventions in this section: $\phi$ denotes an $\fo_d^\Sigma$-formula, $\varphi$ denotes an $\fo_k^{\Sigma'}$-formula, $I$ denotes a $\Sigma$-structure, and $J$ denotes a $\Sigma'$-structure.

% While the $\Sigma$-structures considered in this are always structures over the domain $D$ that we fixed in section \ref{s:preliminaries}, the $\Sigma'$-structures are always structures over the domain $D' := D \cup \{\dummy\}$, where $\dummy$ is an arbitrary but fixed object not in $D$.

% \fbart{We should explain what this dummy symbol means!}

\begin{definition}
 Given a $\Sigma$-term $t$ and a $\Sigma'$-term $s$, we define the $\Sigma'$-term $t_s$ recursively as follows:
 \begin{itemize}
  \item $x_s := x$ for each variable $x$
%   \item $d_s := d$ for each domain element $d \in D$.
  \item $(f(t_1,\dots,t_n))_s := f({t_1}_s,\dots,{t_n}_s,s)$ for each $f \in \Sigma_s$
  \item $(f(t_1,\dots,t_n))_s := f({t_1}_s,\dots,{t_n}_s)$ for each $f \in \Sigma_o$
 \end{itemize}
 %the $\Sigma'$-term resulting from $t$ by replacing every symbol $s$ in $t$ by $s_A$. 
\end{definition}

\begin{definition}
\label{def:tau_phi}
We define the function $\tauform:\Terms^{\Sigma'} \times \fo_d^\Sigma \rightarrow \fo_k^{\Sigma'}$ as follows:
\begin{itemize}
 \item $\tauform(s,P(t_1,\dots,t_n)) := P({t_1}_s,\dots,{t_n}_s,s)$
 \item $\tauform(s,\neg\phi) = \neg\tauform(s,\phi)$
 \item $\tauform(s,\phi \land \psi) = \tauform(s,\phi) \land \tauform(s,\psi)$
 \item $\tauform(s,\forall x : \phi) = \forall x : \tauform(s,\phi)$
%  \item $\tauform(s,K_B \phi) = K \tauform(c_B,\phi)$
 \item $\tauform(s,K_t \phi) = \exists x : (x=t_s \land \Apred(x) \land K \tauform(x,\phi))$ for a fresh variable $x$
\end{itemize}
\end{definition}

\begin{definition}
For a distributed theory $\T$, we define $\tautheo(\T) := \bigcup_{A\in\A} \tauform(A,\T_A)$.
\end{definition}

\begin{example}[Example \ref{ex:candy} continued]
 In this example, 
 \begin{align*}
 \tautheo(\T) &= \left\{\begin{array}{l}
 (\exists a: a=M \land \Apred(a) \land K c(a)) \Rightarrow c(D)\\
 (\exists a: a=D \land\Apred(a) \land K c(a)) \Rightarrow c(M) 
 \end{array}\right\}
\end{align*}
 
 Given that $M$ and $D$ are in $\Sigma_o$, i.e.\ are have a fixed interpretation in all models, this AEL theory is equivalent to the following simpler one:
 \begin{align*}
 \tautheo(\T) &= \left\{\begin{array}{l}
  K c(M) \Rightarrow c(D),\\
  K c(D) \Rightarrow c(M) 
 \end{array}\right\}
\end{align*}
 Intuitively, this state that if Mom says candy is allowed, so does Dad and vice versa, i.e.\ this theory contains the same knowledge as the original example. 
\end{example}

%   \T_D& =\{ K_M(c) \Rightarrow c\} &\T_M&=\{ K_D(c) \Rightarrow c\}.

\begin{example}
 This example we illustrates why the AEL translation of $K_t \phi$ is $\exists x : (x=t_s \land \Apred(x) \land K \tauform(x,\phi))$ and not the simpler formula $\Apred(t_s) \land K \tauform(t_s,\phi)$. Consider the following dAEL theory $\T$:
 \begin{align*}
 \T_A &=\{(d = A \land p) \lor (d=B \land \neg p)\}\\
 \T_B &= \{p\}
\end{align*}
It can be easily verified that in all semantics defined in this paper, $\T$ has a unique model $\ubp$, and that ${(K_A K_d p)^{\ubp,I,a} = \Fa}$ for all $I,a$. Note that $\ubp$ is exact, i.e.\ of the form $(\upws,\upws)$, so what we just wrote about $K_A K_d p$ implies that ${(K_d p)^{\ubp,I,a} = \Fa}$ for some $I \in \upws$ and some variable assignment $a$.

The AEL translation $\tautheo(\T)$ of $\T$ is as follows:
\begin{align*}
 \tautheo(\T) &= \left\{\begin{array}{l}
 (d(A) = A \land p(A)) \lor (d(A)=B \land \neg p(A))\\
 p(B)
 \end{array}\right\}
\end{align*}
Again, all our semantics agree that this theory has a unique model $B$, and $B$ is exact, i.e.\ of the form $(Q,Q)$. Then $\tauform(A, K_d p)^{\ubp,I,a} = (\exists x : (x = d(A) \land \Apred(x) \land K p(x) ))^{B,I,a} = \Fa$ for some $I \in Q$ and some variable assignment $a$. This is in line with the above semantic analysis of $K_d p$.  On the other hand $(\Apred(d(A)) \land K p(d(A)))^{B,I,a} = \Tr$ for any $I,a$. This shows that the idea to use $\Apred(d(A)) \land K p(d(A))$ as the AEL translation of $K_d p$ would not work.
\end{example}

We now define a mapping of dAEL's semantic notions to AEL's semantic notions.

\begin{definition}
For an indexed family $\mathcal{I}=(I_A)_{A\in\A}$ of $\Sigma$-structures, we define the $\Sigma'$-structure $\taustruc(\mathcal{I})$ as follows: For each $n$-ary function symbol $f \in \Sigma_s$ and all $d_1,\dots,d_n\in D'$,
$$f^{\taustruc(\mathcal{I})}(d_1,\dots,d_n,d) := \begin{cases}    
                                                   f^{I_{d}}(d_1,\dots,d_n) &\textnormal{ if } d \in \A;\\
                                                   \dummy &\textnormal{ otherwise.}                                                                                                                       \end{cases}$$ 
                                                   
For each $n$-ary function symbol $f \in \Sigma_o$ and all $d_1,\dots,d_n\in D'$,
$$f^{\taustruc(\mathcal{I})}(d_1,\dots,d_n) := f^{I_o}(d_1,\dots,d_n).$$

For each $n$-ary relation symbol $R \in \Sigma_s$ and $d_1,\dots,d_n\in D'$,
\begin{align*}
 R^{\taustruc(\mathcal{I})}(d_1,\dots,d_n,d) \textnormal{ iff } d \in \A \textnormal{ and } R^{I_{d}}(d_1,\dots,d_n).
\end{align*}

For each $n$-ary relation symbol $R \in \Sigma_o$ and $d_1,\dots,d_n\in D'$,
$$R^{\taustruc(\mathcal{I})}(d_1,\dots,d_n) \textnormal{ iff } R^{I_o}(d_1,\dots,d_n).$$
\end{definition}

\begin{definition}
\label{def:tau_Q}
For a DPWS $\upws$, we define $\taupws(\upws):= \{\taustruc((I_A)_{A\in\A})\mid I_A \in \upws_A \textnormal{ for every } A \in \A\}$.
\end{definition}
% \bart{previous definition is very hard to read

\begin{definition}
\label{def:tau_B}
For a distributed belief pair $\ubp$, define $\taubp(\ubp) := (\taupws(\ubp^c),\taupws(\ubp^l))$. 
\end{definition}

% The rest of this section is devoted to showing that
% \begin{itemize}
%  \item all AFT-style semantics of dAEL are preserved by the mapping $\tau$ in case $\T$ is permaconsistent (this is Theorem \ref{thm:mapping:perm}), and
%  \item even with a weaker criterion, many semantics are preserved (this is Theorem \ref{thm:mapping}).
% \end{itemize}
% We first state these two theorems; the proofs follow later in this section. 

The above mapping from dAEL to AEL preserves all semantics in case $\T$ is permaconsistent.
\begin{theorem}\label{thm:mapping:perm}
  Let $\sigma \in \{\textsf{Sup},\textsf{KK},\textsf{PSt},\textsf{St},\textsf{WF}\}$ be a semantics, let $\T$ be a permaconsistent distributed theory, and let $\ubp$ be a distributed belief pair. Then $\ubp$ is a $\sigma$-model of $\Tt$ iff $\taubp(\ubp)$ is a $\sigma$-model of $\tautheo(\Tt)$.
\end{theorem}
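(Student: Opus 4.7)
The plan is to reduce Theorem~\ref{thm:mapping:perm} to a commutation of the semantic operators under the translation, and then invoke the AFT-based definitions of the five semantics to transfer the correspondence uniformly.

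First, I would establish a \emph{Valuation Correspondence Lemma} by structural induction on dAEL formulas: for every $\phi \in \dael$, every distributed belief pair $\ubp$, every indexed family $\mathcal{I} = (I_A)_{A\in \A}$ of $\Sigma$-structures, every variable assignment $a$, and every $\Sigma'$-term $s$ with $s^{\taustruc(\mathcal{I}),a} = A \in \A$, we have
\[\tauform(s,\phi)^{\taubp(\ubp),\,\taustruc(\mathcal{I}),\,a} = \phi^{\ubp,\, I_A,\, a}.\]
The atomic and Boolean cases reduce to unfolding the definitions of $\taustruc$ and $t_s$. The crux is the modal case $K_t\phi$: the translation $\exists x:(x=t_s \land \Apred(x) \land K\tauform(x,\phi))$ evaluates to true exactly when $t_s$ denotes an element of $\A$ and the modal body holds uniformly across $\taubp(\ubp)^c$, mirroring the two clauses of Definition~\ref{def:conlibvaluation} applied at $\ubp_{t^{I_A,a}}$. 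The fresh variable $x$ exists to thread through the switch of ``active agent'' from $A$ to $t^{I_A,a}$ inside the recursion.

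Second, using this lemma, I would prove operator commutation for both the two- and three-valued operators:
\[\taupws(\upwsrevision(\upws)) = D_{\tautheo(\Tt)}(\taupws(\upws)),\qquad \taubp(\ubprevision(\ubp)) = D^*_{\tautheo(\Tt)}(\taubp(\ubp)).\]
For the two-valued identity, note that an interpretation of the form $\taustruc(\mathcal{I})$ lies in $D_{\tautheo(\Tt)}(\taupws(\upws))$ iff $\tauform(A,\T_A)$ evaluates to $\Tr$ for each $A$, which by the Valuation Lemma happens iff each $I_A$ belongs to $\upwsrevision(\upws)_A$; an analogous calculation with $\neq \Fa$ and $= \Tr$ handles the approximator. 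This is the point at which permaconsistency is used: it guarantees that for every $\ubp$, every $\T_A$ admits a model expanding $I_o$ after any partial fixing of its top-level modal subformulas, so no agent-component collapses to $\emptyset$ in a way that would make $\taupws$ of the result vacuous while $D_{\tautheo(\Tt)}(\taupws(\upws))$ remains non-empty. Consequently the image of $\taubp$ is closed under $D^*_{\tautheo(\Tt)}$, and the operator commutation makes sense in both directions of the iff.

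Finally, from commutation each semantics transfers. Supported models are fixpoints of the two-valued operators and correspond directly. Since $\taubp$ is $\leqp$-monotone and injective on distributed belief pairs, the $\leqp$-least fixpoint (Kripke-Kleene) transfers as well. For (partial) stable semantics, I would additionally verify $\taupws(\upwsstrevision(\upws)) = S_{D^*_{\tautheo(\Tt)}}(\taupws(\upws))$ by applying the commutation inside the $\lfp$ construction, using that $\ubprevisionL(\cdot,\upws)$ and its AEL counterpart have matching least fixpoints. The well-founded model, as the $\leqp$-least partial stable fixpoint, is then preserved automatically. The main obstacle will be the modal case of the Valuation Correspondence Lemma, where one must carefully juggle the fresh-variable binding, the $\Apred$-guard, and the recursive switch of active agent; a secondary subtlety is verifying that permaconsistency is exactly what prevents the global AEL theory from developing inconsistencies that do not already appear agent-wise in $\ubprevision$, which is essential for the backward direction of the equivalence.
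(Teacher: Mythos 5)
Your overall architecture---a valuation-correspondence lemma, operator commutation, then transfer of each AFT-induced semantics---is the same as the paper's, but your two central lemmas are false as stated, because you assert them for \emph{every} distributed belief pair. They hold only for universally consistent ones (every component of $\ubp^l$ nonempty). If some agent $A$ has $\ubp^c_A=\emptyset$, then $\taupws(\ubp^c)=\emptyset$, so on the AEL side every formula $K\psi$ evaluates to $\Tr$, whereas in dAEL only agent $A$'s knowledge collapses: taking $\ubp$ exact with $\upws_A=\emptyset$ and $\upws_B$ nonempty and not entailing $p$, we have $(K_B\,p)^{\ubp,I}=\Fa$, while its translation $\exists x:(x=B\land\Apred(x)\land K\,p(x))$ is vacuously $\Tr$ in $\taubp(\ubp)$. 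This is precisely the inconsistency-containment phenomenon discussed for faulty agents, and it equally breaks the commutation $\taubp(\ubprevision(\ubp))=\ubprev{\tautheo(\T)}(\taubp(\ubp))$ on such inputs. Permaconsistency cannot repair this where you invoke it: it is a condition on the theory that controls the \emph{output} of the approximator (it gives that $\ubprevision(\bot,\top)$, hence every more precise pair and in particular every model, is universally consistent---Lemma \ref{lem:perm} and Theorem \ref{thm:link}); it says nothing about an arbitrary \emph{input} $\ubp$, against which the modal subformulas are evaluated. Note also that $\taupws$ and $\taubp$ are not injective outside the universally consistent region (every DPWS with one empty component is mapped to $\emptyset$), so the injectivity you appeal to is restricted in the same way.

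Because of this, the \textsf{KK} and \textsf{WF} cases need more than ``$\taubp$ is $\leqp$-monotone and injective, hence the least fixpoint transfers'': an embedding that commutes with the operator only on part of the lattice does not automatically send least fixpoints to least fixpoints. The paper instead runs a transfinite induction along the Kripke-Kleene iteration from $(\bot,\top)$ (and, for \textsf{WF}, along the iteration of the four-valued stable operator), showing simultaneously that every iterate is universally consistent---this is exactly where permaconsistency enters, via Lemma \ref{lem:perm} at the first step, with precision-monotonicity carrying it forward---and that $\taubp$ of each iterate equals the corresponding AEL iterate, using commutation at successor stages and preservation of $\lub_{\leqp}$ at limit stages. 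Your ``preserved automatically'' for \textsf{WF} likewise glosses over the need to compare least precise partial stable fixpoints across the translation: the correspondence obtained on universally consistent pairs does not by itself exclude a less precise partial stable fixpoint of $\tautheo(\T)$ outside the image of $\taubp$. If you restrict your two lemmas to universally consistent pairs, add the step ``permaconsistent implies all models and all iterates are universally consistent,'' and redo \textsf{KK} and \textsf{WF} via the iterative constructions, you essentially recover the paper's proof.
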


The proof of this theorem as well as the two theorems below is in the appendix.

We also present a weaker criterion that preserves models for three out of the five semantics. 

\begin{definition}
We call a DPWS $\upws$ \emph{universally consistent} if $\upws_A \neq \emptyset$ for all $A \in \A$. 
\end{definition}

\begin{definition}
We call a distributed belief pair $\ubp$ \emph{universally consistent} if $\ubp^l$ is universally consistent.
\end{definition}
Note that since $\ubp^l\subset\ubp^c$, if $\ubp$ is universally consistent, so is $\ubp^c$. 

\begin{definition}
 Let $\sigma \in \{\textsf{Sup},\textsf{KK},\textsf{PSt},\textsf{St},\textsf{WF}\}$ be a semantics. We call a distributed theory $\Tt$ \emph{universally consistent under $\sigma$} iff every $\sigma$-model of $\Tt$ is universally consistent.
\end{definition}

% Now, the following theorem 

The following theorem states that the mapping from dAEL to AEL is faithful for universally consistent models of a distributed theory for three out of the five semantics. 
% under any of the five dAEL semantics defined in the previous section, a universally consistent model of a distributed theory $\Tt$ is mapped to a model of the AEL theory to which $\Tt$ is mapped. 

\begin{theorem}
\label{thm:mapping}
 Let $\sigma \in \{\textsf{Sup},\textsf{PSt},\textsf{St}\}$ be a semantics, let $\T$ be a distributed theory, and let $\ubp$ be a universally consistent distributed belief pair. Then $\ubp$ is a $\sigma$-model of $\Tt$ iff $\taubp(\ubp)$ is a $\sigma$-model of $\tautheo(\Tt)$.
\end{theorem}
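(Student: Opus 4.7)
The plan is to establish an operator-level correspondence between the dAEL operators $\upwsrevision$, $\ubprevision$, $\upwsstrevision$ and their AEL counterparts $D_{\tautheo(\T)}$, $D^*_{\tautheo(\T)}$, $S_{D^*_{\tautheo(\T)}}$ via $\taupws$ and $\taubp$, but only at \emph{universally consistent} arguments. Once this is done, since each of the three target semantics \textsf{Sup}, \textsf{PSt}, \textsf{St} is defined by an equation directly at the candidate model $\ubp$ (and not as a fixpoint reached by transfinite iteration from $(\bot,\top)$), universal consistency of $\ubp$ suffices.

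The core is a formula-value lemma, proved by structural induction on $\phi \in \dael$: for every universally consistent distributed belief pair $\ubp$, every indexed family $\mathcal{I}=(I_B)_{B\in\A}$, every agent $A$, and every variable assignment $\varass$,
\[
\phi^{\ubp,\,I_A,\,\varass} \;=\; \tauform(A,\phi)^{\taubp(\ubp),\,\taustruc(\mathcal{I}),\,\varass}.
\]
Atomic, Boolean, and quantifier cases are routine since $\taustruc$ preserves the interpretation of $\Sigma_o$-symbols and shifts $\Sigma_s$-symbols by the agent argument. The interesting case is $K_t\phi$: the translation $\exists x(x=t_A\wedge \Apred(x)\wedge K\tauform(x,\phi))$ uses $\Apred$ to mimic the dAEL clause that $K_t\phi$ is false whenever $t^{I_A,\varass}\notin \A$; when $a := t^{I_A,\varass}\in \A$, the outer $K$ ranges over $\taupws(\ubp^c)$, i.e., exactly over structures of the form $\taustruc(\mathcal{J})$ with $J_B\in \ubp^c_B$ for every $B$. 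Universal consistency of $\ubp^c$ guarantees that the $B$-slots for $B\neq a$ can actually be populated, so projecting onto the $a$-slot recovers the dAEL quantification over $J\in\ubp^c_a$; the induction hypothesis then matches the two values.

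Lifting this to operators, I would conclude that whenever $\ubp$ is universally consistent, $\taubp(\ubprevision(\ubp)) = D^*_{\tautheo(\T)}(\taubp(\ubp))$, and similarly for $\upwsrevision$. For the $\supseteq$-direction one needs that every $\Sigma'$-structure satisfying $\tautheo(\T)$ that extends $I_o$ factors through $\taustruc$ for some indexed family $\mathcal{I}$, which holds because $\tautheo(\T)=\bigcup_{A}\tauform(A,\T_A)$ only constrains the per-agent slices. Supported models are now immediate: $\upws$ is a \textsf{Sup}-model iff $\upwsrevision(\upws)=\upws$, and under universal consistency this transports through $\taupws$ to give $D_{\tautheo(\T)}(\taupws(\upws))=\taupws(\upws)$.

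The main obstacle is the stable operator case, used for \textsf{PSt} and \textsf{St}. Here $\upwsstrevision(\upws)=\lfp(\ubprevisionL(\cdot,\upws))$ is computed by transfinite iteration from $\bot$, and intermediate iterates may become non-universally consistent, so one cannot naively pushforward the iteration. I would bypass this by arguing at the level of the fixpoint directly: if $\ubp$ is a \textsf{PSt}-model of $\T$, the defining equations $\ubp^c=\upwsstrevision(\ubp^l)$ and $\ubp^l=\upwsstrevision(\ubp^c)$ together with universal consistency of $\ubp$ imply universal consistency of both $\upwsstrevision(\ubp^c)$ and $\upwsstrevision(\ubp^l)$; one then shows that under the operator correspondence at universally consistent arguments, $\upwsstrevision(\upws)$ and $S_{D^*_{\tautheo(\T)}}(\taupws(\upws))$ satisfy the same characterizing equation (least fixpoint of $\ubprevisionL(\cdot,\upws)$ respectively of $D^{*,c}_{\tautheo(\T)}(\cdot,\taupws(\upws))$), and that the pushforward under $\taupws$ of the former is a fixpoint of the latter and vice versa, hence they coincide by the least-fixpoint characterization. \textsf{St} then follows since stable models are exact partial stable models. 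This explains why the theorem fails for \textsf{KK} and \textsf{WF}: those semantics are fixed by iteration from $(\bot,\top)$, which is manifestly not universally consistent, so permaconsistency (Theorem~\ref{thm:mapping:perm}) rather than mere universal consistency at the model is needed.
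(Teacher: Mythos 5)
Your overall route coincides with the paper's up to the last step: your formula-value lemma is Lemma \ref{lem:mapval}, your operator correspondence at universally consistent arguments is Lemmas \ref{lem:upwsrev} and \ref{lem:ubprev}, and your treatment of \textsf{Sup} (and of \textsf{St} via \textsf{PSt}) matches the paper's Cases 1 and 3 (note the ``iff'' also needs injectivity of $\taupws$ on universally consistent DPWSs, Lemma \ref{lem:inj}, which you leave implicit). The genuine gap is exactly at the point you claim to bypass, the \textsf{PSt} case. Arguing ``at the level of the fixpoint directly'' only yields one inequality: applying the correspondence at the universally consistent pair $\ubp=(\ubp^c,\ubp^l)$ shows that $\taupws(\ubp^c)$ is \emph{a} fixpoint of $\upwsrevL{\tautheo(\T)}(\cdot,\taupws(\ubp^l))$, hence $\upwsrevst{\tautheo(\T)}(\taupws(\ubp^l))\leq_K\taupws(\ubp^c)$. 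What ``$\taubp(\ubp)$ is a \textsf{PSt}-model of $\tautheo(\T)$'' requires is the converse inequality, i.e.\ that $\taupws(\ubp^c)$ is the \emph{least} fixpoint, and your ``and vice versa'' step must pull the AEL-side least fixpoint back to the dAEL side; for that you need to know that $\upwsrevst{\tautheo(\T)}(\taupws(\upws))$ lies in the image of $\taupws$. This is precisely the paper's Lemma \ref{lem:image}, and it is proved by transfinite induction along the iteration computing that least fixpoint (successor stages via Lemma \ref{lem:ubprev}, limit stages via the $\lub_{\leq_K}$-preservation Lemma \ref{lem:klub}); only then does Lemma \ref{lem:ubprevst} and hence the direction ``$\ubp$ a \textsf{PSt}-model of $\T$ $\Rightarrow$ $\taubp(\ubp)$ a \textsf{PSt}-model of $\tautheo(\Tt)$'' follow. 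Without Lemma \ref{lem:image} or a substitute, that direction is unproven in your sketch. Note also that the obstruction is not the one you identify: the paper's induction runs on the AEL side and tracks membership in the image of $\taupws$ (which contains the inconsistent possible world structure $\emptyset$), not universal consistency of dAEL iterates; universal consistency is only needed for the fixed liberal argument, which in the theorem is $\ubp^c$ or $\ubp^l$ and hence unproblematic.

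A secondary imprecision: your justification of the $\supseteq$-direction of the operator correspondence --- ``every $\Sigma'$-structure satisfying $\tautheo(\T)$ factors through $\taustruc$ because the theory only constrains the per-agent slices'' --- is backwards. Precisely because $\tautheo(\T)$ leaves the non-agent slices of $\Sigma_s$-symbols unconstrained, a model may interpret them arbitrarily, whereas structures in the image of $\taustruc$ carry the fixed defaults ($\dummy$ for functions, falsity for relations), so such a model need not factor. The identification of the two sets has to be argued via the slice maps $J\mapsto J_A$ as in Lemmas \ref{lem:J_A} and \ref{DAEL:lem:transBP} (or by normalizing non-agent slices), not by the factoring claim as you state it. This is a smaller issue than the stable-operator gap, but as written the step does not go through.
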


The next theorem clarifies the relationship between permaconsistency and universal consistency. 

\begin{theorem}\label{thm:link}
Let $\sigma \in \{\textsf{Sup},\textsf{KK},\textsf{PSt},\textsf{St},\textsf{WF}\}$.  If $T$ is permaconsistent, then $T$ is universally consistent under $\sigma$.
\end{theorem}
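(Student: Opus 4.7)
My plan is to prove the theorem by cases on $\sigma$. The supported case (which subsumes the stable case, since every stable model is supported) is direct: a supported model $\upws$ satisfies $\upws_A = \{I \mid (\T_A)^{\upws,I} = \Tr\}$ by Definition \ref{def:beliefrevisionupws}; under $\upws$ every top-level modal subformula occurrence in $\T_A$ takes a definite 2-valued value (via a finite case split on how terms indexing modalities are interpreted), so substituting these values yields a classical first-order theory $T'$ with $(T')^I = (\T_A)^{\upws,I}$ for every $I$ expanding $I_o$. By permaconsistency, $T'$ has a model $I^* \supseteq I_o$, which then lies in $\upws_A$, proving $\upws_A \neq \emptyset$.

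For the three-valued semantics ($\textsf{KK}$, $\textsf{PSt}$, $\textsf{WF}$), the plan is to reduce all three to a single lemma: \emph{for any permaconsistent $\T$, any distributed belief pair $\ubp$, and any agent $A$, there exists $I^*$ with $(\T_A)^{\ubp, I^*} = \Tr$.} For a $\textsf{KK}$-fixpoint this immediately yields $\ubp^l_A = \ubprevisionU(\ubp)_A \ni I^*$. For $\textsf{PSt}$ and $\textsf{WF}$ the lemma is combined with the stable-operator construction $\upwsstrevision(\upws) = \lfp(\ubprevisionL(\cdot, \upws))$: at each iterate of this least-fixpoint construction, the current belief pair still satisfies the hypothesis of the lemma (because permaconsistency is a property of $\T$ alone, independent of the current belief pair), so non-emptiness propagates through the iteration to the limit and hence to the PSt/WF fixpoint.

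The technical heart of the argument is the lemma, and this is where the main obstacle lies. By Kleene's three-valued semantics, $(\T_A)^{\ubp, I^*} = \Tr$ holds exactly when $I^*$ is a common classical model of every 2-valued completion of the top-level modal subformula occurrences whose Kleene value under $\ubp$ is $\Un$ (with the remaining occurrences held fixed at their $\ubp$-values). Each such completion is among the $2^n$ replacements appearing in the definition of permaconsistency, so permaconsistency supplies a classical model of each individual completion, but a common model requires more. The plan is to exploit a structural feature of the dAEL syntax: each top-level modal subformula occurrence in $\T_A$ is a single syntactic unit, so $\T_A$ has a read-once Boolean dependence on its modal-subformula occurrences (even though the objective part may be arbitrary). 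For such read-once dependencies one shows by induction on the syntax tree of $\T_A$ that per-assignment satisfiability of each $\T_A^b$ over the objective vocabulary implies satisfiability of the conjunction $\bigwedge_b \T_A^b$, producing the required common witness $I^*$.
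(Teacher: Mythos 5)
Your overall architecture — reduce everything to a witness lemma ``for every belief pair $\ubp$ and agent $A$ there is an $I^*$ with $(\T_A)^{\ubp,I^*}=\Tr$'' — is workable, and the target lemma is in fact true; note that you do not even need to chase the stable iteration (your ``non-emptiness propagates to the limit'' is not justified at limit ordinals, where the iterate is an intersection), since every \textsf{KK}-, \textsf{PSt}-, \textsf{St}- and \textsf{WF}-model is a fixpoint of $\ubprevision$ and supported models are exact such fixpoints, so the lemma applies to them directly. The genuine gap is in how you extract the witness from permaconsistency. Both in your supported-model case and in the lemma you substitute the \emph{actual} values of the non-nested modal occurrences and then appeal to permaconsistency for the resulting replacement. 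But such an occurrence has the form $K_t\psi$ where $t$ (and $\psi$) may contain variables bound by outer quantifiers or symbols of $\Sigma_s$ — e.g.\ $K_k\,\delegate(j)$ under $\exists k$ in the access-control theory of Section \ref{sec:AC} — so its value under $\upws$ or $\ubp$ depends on $I$ and on the variable assignment. Hence there is in general no single theory $T'$ with $(T')^I=(\T_A)^{\upws,I}$ for all $I$, the ``occurrences whose value is $\Un$'' are not well defined, and the completions you need are per-instance completions that are \emph{not} among the $2^n$ uniform syntactic replacements the definition of permaconsistency quantifies over; a finite case split on term denotations does not repair this, because permaconsistency says nothing about replacements that give one syntactic occurrence different values for different instances. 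Finally, your bridging claim (satisfiability of each completion implies satisfiability of their conjunction) is only asserted via a sketched ``read-once induction'', and as phrased — building a common model out of the separate models — it is not how the true statement is established.

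The missing idea, which closes all of these gaps at once, is the polarity-determined worst-case substitution of the paper's Lemma \ref{lem:perm}: replace every non-nested modal occurrence by $\ltrue$ if it lies under an odd number of negations and by $\lfalse$ otherwise. This is a single one of the replacements covered by permaconsistency, so it has a model $I_A$ expanding $I_o$; and because $\lfalse$ (resp.\ $\ltrue$) bounds the occurrence's value from below (resp.\ above) uniformly in $I$, in the variable assignment and in the belief pair, a routine monotonicity induction on the syntax gives $(\T_A)^{\ubp,I_A}=\Tr$ for \emph{every} $\ubp$. This yields your lemma with the quantifier/term problem gone, and it is also the real content of your combinatorial claim: the worst-case completion entails all other completions, so one permaconsistency model is already the common model — no merging of models is needed. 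It also collapses your case analysis: the paper simply notes that $\ubprevision(\bot,\top)$ is universally consistent, that every model under every one of the five semantics is $\geqp \ubprevision(\bot,\top)$ by elementary AFT facts, and that universal consistency (non-emptiness of the liberal components) is preserved upward in $\leqp$.
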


\begin{example}
 The reverse of Theorem \ref{thm:link} does not hold as can be seen for example by a theory $\{p\limplies K_A p, K_A p \limplies p\}$ with one agent $A$. This theory is not permaconsistent because after replacing the first modal subformula by $\lfalse$ and the second by 
 $\ltrue$, we get 
 \[p\limplies \lfalse, \ltrue\limplies p,\] which clearly is not consistent. However, it is
universally consistent under the 3 mentioned semantics. E.g., The unique
stable model is $\{\{\},\{p\}\}$ which is universally stable.
\end{example}

\ignore{
\begin{example}
 Consider the following distributed theory:
 \begin{align*}
  T_A &= \{\}\\
  T_B &= \{p\lequiv \lnot K_A p\}\\
  T_C &= \{p\lequiv \lnot K_B p\}\\
 \end{align*}
This theory maps in AEL to ``Hanne's example''. We know that the well-founded model of the mapping is three-valued (not exact). (see my phd to see it worked out)

What is the well-founded model of this theory, viewed as a diststributed theory??? 

We start, from the distributed belief pair: 

\[ \bot = \left( (\{ \{p\},\emptyset\}, \emptyset)_A,(\{ \{p\},\emptyset\}, \emptyset)_B,(\{ \{p\},\emptyset\}, \emptyset)_C     \right) \]
i.e., for $A,B$ and $C$ it is unknown whether they Know p and whether they know $\lnot p$.

Since the theory of $A$ is trivially satisfied, $D^*$ maps this DBP to:
\[ B_1 = \left( (\{ \{p\},\emptyset\}, \{ \{p\},\emptyset\})_A,(\{ \{p\},\emptyset\}, \emptyset)_B,(\{ \{p\},\emptyset\}, \emptyset)_C     \right) \]

i.e., for $A$, it holds that $\lnot K_A p$, $\lnot K_A\lnot p$.

Again, $D^*$ maps $B_1$ to:

Since the theory of $A$ is trivially satisfied, $D^*$ maps this DBP to:
\[ B_2 = \left( (\{ \{p\},\emptyset\}, \{ \{p\},\emptyset\})_A,(\{ \{p\}\}, \{ \{p\}\})_B,(\{ \{p\},\emptyset\}, \emptyset)_C     \right) \]

i.e., it holds that $K_B p$ and $\lnot K_B \lnot p$. 

Again, $D^*$ maps $B_2$ to:
\[ B_3 = \left( (\{ \{p\},\emptyset\}, \{ \{p\},\emptyset\})_A,(\{ \{p\}\}, \{ \{p\}\})_B,(\{ \emptyset\}, \{ \emptyset\})_C     \right) \]

THUS! $B_3$ is exact. THis means that the least fixpoint of $D^*$ is exact and thus $B_3$ must equal the well-founded model and the KK model of our theory!

Also $B_3$ is universally consistent! all agents have consistsent knowledge!!!! Thus the theorem is wrong for at least two semantics!

\end{example}
}

% Note that the wording of this theorem is somewhat different from what we announced to prove at the beginning of the section. In order to justify that wording, we need one more definition:
% 
% Now Theorem \ref{thm:mapping} directly implies the following corollary, which formalizes the statement made at the beginning of the section:

% The following corollary of Theorem \ref{thm:mapping} justifies the statement tat we made about the mapping at the beginning of this section:
% 
% \begin{corollary}
%  Let $\sigma \in \{\textsf{Sup},\textsf{KK},\textsf{PSt},\textsf{St},\textsf{WF}\}$ be a semantics, let $\T$ be a distributed theory that is universally consistent under $\sigma$, and let $\ubp$ be a distributed belief pair. Then $\ubp$ is a $\sigma$-model of $\Tt$ iff $\taubp(\ubp)$ is a $\sigma$-model of $\tautheo(\Tt)$.
% \end{corollary}

Theorems \ref{thm:mapping:perm},  \ref{thm:mapping} and \ref{thm:link} are proven in the appendix.

% \section{dAEL with Inductive Definitions}
% \input{chapters/daelid.tex}

\section{Applying dAEL to Access Control}\label{sec:AC}\label{sec:use-cases}
% \bart{
% We might want to add some pictures in this section (revocation scheme pictures like on the IJCAI poster)
% }
In this section, we discuss application scenarios of dAEL that illustrate the motivations from Section \ref{sec:motiv} and give reasons for our claim that the well-founded semantics is particularly suitable for an application in access control. 
% to base dAEL semantics on the well-founded semantics of autoepistemic logic. %Furthermore, we show that despite dAEL being based on classical logic, it is consistent with the motivation that Garg \cite{Garg09} has presented for the usage of intuitionistic logic in access control.

First, we show how a certain access control problem related to the revocation of delegated rights can be modelled in a natural and concise way in dAEL.

In ownership-based frameworks for access control, it is common to allow principals (users or processes) to grant both permissions and administrative rights to other principals in the system. Often it is desirable to grant a principal the right to further grant permissions and administrative rights to other principals. This may lead to delegation chains starting at a \emph{source of authority} (the owner of a resource) and passing on certain permissions to other principals \cite{Li,Tamassia,Chander04,Yao}.

For simplicity, we assume access right and delegation right always go hand in hand. In that case, one can recursively define access right for a resource $r$ as follows: 
\begin{itemize}
 \item The owner of $r$ always has access for $r$.
 \item If a principal $A$ with access right for $r$ has granted an authorization for resource $r$ to another principle $B$, then $B$ has access right for $r$.
\end{itemize}

Equivalently, one can say that a principle $A$ has access right for $r$ if there is a chain of authorizations for $r$ starting in the owner of $r$ and ending in principal $A$.

Furthermore, such frameworks commonly allow a principal to revoke a permission that she granted to another principal \cite{Hagstrom01,Zhang03,Chander04,logcom/BarkerBGG14}. Depending on the reasons for the revocation, different ways to treat the delegation chain can be desirable \cite{Hagstrom01,Cramer15,Cramer17}. Any algorithm that determines which permissions to keep intact and which ones to delete when revoking a permission is called a \emph{revocation scheme}. Of these revocation schemes, the one with the strongest effect is called the \emph{Strong Global Negative} (SGN) revocation scheme: In this scheme, revocation is performed by issuing a negative authorization which dominates over positive authorizations and whose effect propagates forward.

Semi-formally, the effect of an SGN revocation can be characterized recursively as follows:
\begin{itemize}
 \item The owner of $r$ always has access for $r$.
 \item If a principal $A$ with access right for $r$ has issued a positive authorization for resource $r$ to another principle $B$ and no principal with access right for $r$ has issued a negative authorization for $r$ to $B$, then $B$ has access right for $r$.
\end{itemize}

We illustrate the effect of SGN revocations the example depicted in Figure~\ref{fig:ac:0}: In this example, $A$ has issued positive authorizations to $B$ and $C$, $B$ has issued positive authorizations to $D$ and $E$, $E$ has issued a positive authorization to $F$, $C$ has issued a negative authorization to $D$, and $D$ has issued a negative authorization to $F$. Since $A$ is the owner of the resource, $A$ certainly has access by the first bullet item in the above semi-formal characterization of SGN. By the second bullet point, $B$ and $C$ have access, as $A$ has issued positive authorizations to them and no one has issued a negative authorization to them. Similarly, since $E$ have access, since $B$ has issued a positive authorization to $E$, an no one has issued a negative authorization to $E$. Since $C$ has access right and has issued a negative authorization to $D$, $D$ certainly does not have access right despite the positive authorization issued to $D$ by $B$. And since $D$ does not have access, the negative authorization from $D$ to $F$ has no effect, so the access that $E$ has granted to $F$ takes effect.

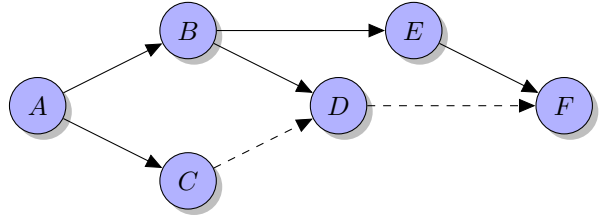
\begin{figure}
\centering
\begin{tikzpicture}[-triangle 45,node distance=2cm, auto,main node/.style={circle, draw=black,drop shadow,minimum size=0.75cm,fill=blue!30}]
 \node[main node] at (0, -1) (A) {$A$};
 \node[main node] at (2, 0) (B) {$B$};
 \node[main node] at (2, -2) (C) {$C$};
 \node[main node] at (4, -1) (D) {$D$};
 \node[main node] at (5, 0) (E) {$E$};
 \node[main node] at (7, -1) (F) {$F$};
 
 \path[]
   (A) edge node  {} (B)
   (A) edge node  {} (C)
   (B) edge node  {} (D)
   (C) edge [dashed] node  {} (D)
   (B) edge node  {} (E)
   (E) edge node  {} (F)
   (D) edge[dashed] node  {} (F);
 \end{tikzpicture}
\caption{First example scenario of SGN. Full arrows represent delegations (positive authorizations), dashed arrows revocations (negative authorizations). $A$ is the owner of the resource in question.}\label{fig:ac:0}
\end{figure}

In this example, the semi-formal characterization of SGN leads to clear results about who has acces and who does not. But this is not always the case. Consider for example the situation depicted in Figure~\ref{fig:ac:1}.

\begin{figure}
\centering
\begin{tikzpicture}[-triangle 45,node distance=2cm, auto,main node/.style={circle, draw=black,drop shadow,minimum size=0.75cm,fill=blue!30}]
 \node[main node] at (0, -1) (A) {$A$};
 \node[main node] at (2, 0) (B) {$B$};
 \node[main node] at (2, -2) (C) {$C$};
 \node[main node] at (4, -1) (D) {$D$};
 
 \path[]
   (A) edge node  {} (B)
   (A) edge node  {} (C)
   (B) edge node  {} (D)
   (C) edge node  {} (D)
   (B) edge[dashed, bend right] node  {} (C)
   (C) edge[dashed, bend right] node  {} (B);
 \end{tikzpicture}
\caption{Second example scenario of SGN. Full arrows represent delegations, dashed arrows revocations. $A$ is the owner of the resource in question. }\label{fig:ac:1}
\end{figure}
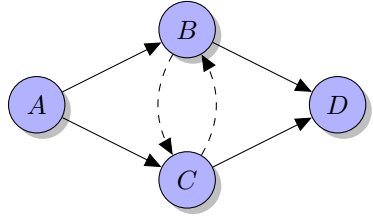

Here $B$ is attempting to revoke $C$'s access right (and vice versa). According to the above characterization of the effect of an SGN revocation, this attempt is only successful if $B$ has access. In other words, $C$ should have access if and only if $B$ does not have access. But since the scenario is symmetric between $B$ and $C$, they should either both be granted or both be denied access right. However, this cannot be achieved without violating the above characterization of SGN revocation. Paradoxical situations like this one can arise whenever the authorization graph contains a cycle that contains at least one negative authorization (one revocation).

Existing papers that have covered SGN revocation have handled this issue in different ways:
\begin{itemize}
 \item In the revocation framework of Hagstr\"om \textit{et al.}~[\citeyear{Hagstrom01}], this problem only arises when their \emph{Strong Global Negative revocation} is combined with a \emph{negative-takes-precedence} conflict resolution policy. In their paper, they do not describe in detail how Strong Global Negative revocation is supposed to work in the context of a \emph{negative-takes-precedence} conflict resolution policy. In other words, their paper implicitly implies the existence of such problematic scenarios, but does not expicitly discuss them.
 \item The paper by Cramer \textit{et al.}~[\citeyear{Cramer15}] is the first one to explicitly discuss this problem. The problem is circumvented by disallowing problematic authorization graphs (basically any authorization graph with a cycle that contains at least one negative authorization).
 \item The paper by Cramer and Casini~[\citeyear{Cramer17}] has a two-part inductive definition (Definitions 3 and 4) that directly corresponds to our above characterization of SGN revocation. In a footnote the paper specifies that this inductive definition is to be interpreted using the well-founded semantics for inductive definitions~\cite{Denecker98}. The paper points out that there exist paradoxical cases in which the well-founded model of the inductive definition is three-valued rather than  two-valued, so that for some principals it may be undecided whether they have access or not. The paper stipulate that in such cases \emph{undecided} is to be treated in the same way as \emph{false}, so that the principals directly affected by such a paradoxical situation will not have access until the paradoxical situation is resolved. (Applied to our above example this approach implies that formally the access right of $B$, $C$ and $D$ is \emph{undefined}, which practically means access gets denied for all three of them.)
\end{itemize}

We will now model delegation and SGN revocation in dAEL. When interpreted with the well-founded semantics for dAEL, this model of delegation and revocation is equivalent to the formalization by Cramer and Casini~\cite{Cramer17}. Furthermore, we will motivate why this behavior of SGN revocation corresponds better to general access control principles than the behavior that one would get if one used dAEL with another semantics than the well-founded semantics.

% Suppose that $A$ controls a resource $r$ and that $A$ wants to delegate access right to other principals, along with SGN revocation right.
%whom $A$ wants to have the power to revoke rights from other users according to the SGN revocation scheme. 
Our dAEL model of delegation and SGN revocation is based on statements issued by the various principals involved in a system: A principal $k$ can delegate access right to a principal $j$ by issuing the statement $\delegate(j)$, and can revoke access right from $j$ by issuing the statement $\revoke(j)$. We assume that the owner $A$ of the resource wants to ensure that these delegation and revocation statements are interpreted in line with our above characterization of SGN revocation. The owner $A$ can achieve this by issuing the following  statements as part of its theory (together with the $\delegate$ and $\revoke$ statements that $A$ makes):
\begin{align*}
% \label{eq:SGN}
% \left\{
% \begin{split}
&\access(A,r)\\
&
\left( \exists k \; (K_A \access(k,r) \land K_k \delegate(j)) \land \neg \exists i \; (K_A \access(i,r) \land K_i \revoke(j))\right) %\\
% \qquad 
\limplies \access(j,r)
% \end{split}
% \right\}
\end{align*}

Now access is to be granted to a principal $k$ if and only if the owner $A$ believes the statement $\access(k,r)$, i.e., if and only if $K_A\access(k,r)$ holds in the well-founded model of the distributed theory given by the above base theory of owner $A$ and all the statements issued by the various principals in the system. 

We illustrate this using our first  example scenario from Figure \ref{fig:ac:0}. %, and also argue there why other semantics are less suitable. 
Given that $A$ is the owner of the resource in question, the distributed theory that represents the authorizations present in this example scenario is as follows:
\begin{align*}
 \T_A &= \left\{\begin{array}{l}
			\access(A,r)\\
			\left( \exists k \; (K_A \access(k,r) \land K_k \delegate(j)) \land \neg \exists i \; (K_A \access(i,r) \land K_i \revoke(j))\right)\ \limplies \access(j,r)\\
			\delegate(B)\\
			\delegate(C)
                     \end{array}\right\}\\
 \T_B &= \left\{\begin{array}{l}
			\delegate(D)\\
			\delegate(E)
                     \end{array}\right\}\\
 \T_C &= \left\{\begin{array}{l}
			\revoke(D)
                     \end{array}\right\}\\
 \T_D &= \left\{\begin{array}{l}
			\revoke(F)
                     \end{array}\right\}\\
 \T_E &= \left\{\begin{array}{l}
			\delegate(F)
                     \end{array}\right\}\\
 \T_F &= \left\{\right\}\\
\end{align*}

Let $\ubp_\textsf{WF}$ be the well-founded model of $\T$. By formalizing the informal reasoning about this example that we presented above, one can show that $\ubp_\textsf{WF}$ assigns $\Tr$ to the statements $K_A \access(A,r)$, $K_A \access(B,r)$, $K_A \access(C,r)$, $ K_A \access(E,r)$, $K_A \access(F,r)$, while it assigns $\Fa$ to the statement $K_A \access(D,r)$. Therefore $A$, $B$, $C$, $E$ and $F$ will be granted access to resource $r$ and $D$ will be denied access to it. 

In this application of dAEL, the information that we are interested in from a given model is only the information about which truth-values the model assigns to statements of the form $K_A \access(X,r)$, i.e. which agents are given access to the recourse by the owner ($A$). For this reason, we present the relevant information as
a set of expressions $X^t$ where $X$ is a principal and $t$ the truth value of $K_A \access(X,r)$ in the model. %the set of principals that granted access in each model, i.e.\ the set of principals $k$ such that $A \says \access(k,r)$ is true in the model:
So in the above example, we would say that the well-founded model $\ubp_\textsf{WF}$ satisfies $\{A^\Tr,B^\Tr,C^\Tr,D^\Fa,E^\Tr,F^\Tr\}$.\footnote{Note that this presentation of a model does not present all the information that is in the model, only the one that is relevant for our discussion about the access control application presented here. But in fact, this information suffices for figuring out the entire models, for more details on this, see the proof of Theorem \ref{thm:poly}} %The well-founded is actually the least precise }

For the above example, the other semantics presented in Section \ref{sec:sem} give the same results as the well-founded semantics. However, that is not always the case. In the cases when the various semantics differ, the well-founded semantics is the only one that ensures that decisions about access are \emph{grounded} \cite{ai/BogaertsVD15}, meaning that derivable formulas are supported by cycle-free justifications, and that they satisfy a security principle that has been worded by Garg [\citeyear{Garg09}] as follows: ``When access is granted to a principal $k$, it should be known where $k$'s authority comes from''. For this reason, we consider the well-founded semantics to be preferable to the other semantics for applications of dAEL to access control. We will now illustrate these desirable feature of the well-founded semantics through two example scenarios. 

% Given a certain semantics for dAEL,
% % $\sigma \in \{\textsf{Sup},\textsf{KK},\textsf{PSt},\textsf{St},\textsf{well-founded}\}$, 
% it is reasonable to grant $k$ access to $r$ only if $K_A \access(k,r)$ holds in all models. With this interpretation of the semantics, the partial stable semantics and the well-founded semantics coincide. Therefore we ignore the partial stable semantics for the discussion of semantics in this section.

First, let us consider again the scenario depicted in Figure \ref{fig:ac:1}. Here $A$ is the owner of the resource $r$ and has issued the statements $\delegate(B)$ and $\delegate(C)$, $B$ has issued the statements $\revoke(C)$ and $\delegate(D)$, and that $C$ has issued the statements $\revoke(B)$ and $\delegate(D)$. As explained above, attempting to apply the semi-formal characterization of SGN revocation to this scenario leads to paradoxical arguments about the access rights of $B$ and $C$. So we may say that the scenario contains a conflict that cannot be automatically resolved. At this point, $A$ as the principal with control over $r$ will have to manually resolve the conflict by removing access from at least one of $B$ and $C$ depending on the cause for the conflict between them. 

In practice, it may take $A$ some time to study the situation and perform this manual resolution.
During this time, the system should still respond to access requests.
The intended behavior is that neither $B$ nor $C$ should have access, 
to avoid  security risks. 
The situation for $D$ is less clear: Given that $D$ would have access no matter who of $B$ and $C$ has access, one could make a case for granting $D$ access in this situation.

However, granting access right to $D$ would violate the security principle mentioned above: ``When access is granted to a principal $k$, it should be known where $k$'s authority comes from''~\cite{Garg09}. 
%  It is clear that in this scenario conflict resolution is in order. There exists different ways to handle conflicts, see \cite{todo}. Important for the choice for out semantics is that it points out conflicts. 
 %However, this would violate the security principle mentioned above% as a motivation for the use of intuitionistic logic%
% : when access is granted to a principal $k$, it should be known where $k$'s authority comes from. 
% % Granting access to $D$ without granting access to $B$ and $C$ would leave it unclear whether $D$'s authority comes from $B$ or from $C$. 
% Hence $D$ should not be granted access until $A$ resolves the conflict.

Now consider the statements issued by the principals as a distributed theory $\T$, with the two statements governing access included in the theory of the resource owner $A$. This theory has different models depending on the choice of semantics. 
%\begin{compactitem}
 %\item 
 There are two supported models satisfying $\{A^{\ltrue},B^{\ltrue},C^{\lfalse},D^{\ltrue}\}$ and $\{A^{\ltrue},B^{\lfalse}, C^{\ltrue},D^{\ltrue}\}$ respectively. These are also the stable models. 
 %\item 
 The Kripke-Kleene model and the well-founded model are identical and satisfy $\{A^{\ltrue},B^{\lunkn},C^{\lunkn},D^{\lunkn}\}$. This model is not exact: the truth-value of the statements $K_A \access(X,r)$, with $X\in\{B,C,D\}$ is unknown.
 
When there is more than one model, the only safe approach in the access control application is to merge the information from the multiple models in a skeptical way, i.e.\ to grant access only if each model justifies granting access. According to this principle, the supported model semantics and stable semantics lead to access being granted to $A$ and $D$ in this scenario. Given our above argument against granting access to $D$, this means that these semantics cannot be considered viable semantics for this application of dAEL. Furthermore, note that when the skeptical way of combining information from multiple models is applied to the partial stable semantics, the result is always the same as the result of the well-founded semantics. For this reason, we do not consider the partial stable semantics separately in this section.

The Kripke-Kleene and well-founded model of this theory gives access precisely to the principal that should have access according to our above discussion.
% Thus these semantics, even though not based on intuitionistic logic, are faithful to the motivation that \citet{Garg12} gave for the use of intuitionistic logic in access control.
Furthermore, it exhibits the existing conflict between $B$ and $C$ by making their access right status undefined. 

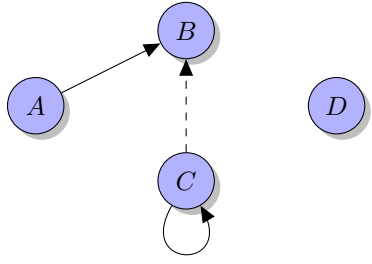
\begin{figure}
\centering
\begin{tikzpicture}[-triangle 45,node distance=2cm, auto,main node/.style={circle, draw=black,drop shadow,minimum size=0.75cm,fill=blue!30}]
 \node[main node] at (0, -1) (A) {$A$};
 \node[main node] at (2, 0) (B) {$B$};
 \node[main node] at (2, -2) (C) {$C$};
 \node[main node] at (4, -1) (D) {$D$};
 
 \path[]
   (A) edge node  {} (B)
   (C) edge[out=-120,in=-60,distance=10mm]  node {} (C)
   (C) edge[dashed] node {} (B);
%    
%    
% %    (A) edge node  {} (C)
% %    (B) edge node  {} (D)
% %    (C) edge node  {} (D)
% %    (B) edge[dashed, bend right] node  {} (C)
% %    (C) edge[dashed, bend right] node  {} (B);
% 
% 
%  
 \end{tikzpicture}
\caption{Illustration of the third scenario for SGN. Full arrows represent delegations, dashed arrows revocations. $A$ is the owner of the resource in question. }\label{fig:ac:2}
\end{figure}

Now consider a third scenario as depicted in Figure \ref{fig:ac:2}, in which the resource owner $A$ has issued the statement $\delegate(B)$ and $C$ has issued the statements $\delegate(C)$ and $\revoke(B)$. Here $C$ should clearly not have access, because the only principal granting her access is $C$ herself. 
% , so there is no delegation chain from $A$ to $C$ that would support $C$'s access. 
Hence $C$'s revocation of $B$'s access right does not have any effect, so $B$ should be granted access.
The Kripke-Kleene model of the distributed theory corresponding to this scenario is not exact; it satisfies $\{A^{\ltrue},B^{\lunkn},C^{\lunkn},D^{\lfalse}\}$. In this model, it is unknown whether $B$ and $C$ have access; this clearly diverges from our requirements. 
% . So the Kripke-Kleene model cannot correctly compute that $B$ should be denied access and $C$ should be granted access. 
The well-founded model on the other hand correctly computes this desired outcome: it satisfies  $\{A^{\ltrue},B^{\ltrue},C^{\lfalse},D^{\lfalse}\}$.
The reason why the well-founded semantics leads to a better outcome than the Kripke-Kleene semantics is that it is \emph{grounded} \cite{ai/BogaertsVD15}, meaning that derivable formulas are supported by cycle-free justifications.

From these scenarios, we can see that the only semantics for dAEL that behaves as desired in the access control application is the well-founded semantics.
% and the partial stable semantics. Note that if statement is true in all partial stable models iff it is true in the well-founded semantics, so when applying these two semantics to decide whom to grant access right, they are actually equivalent. This means that it is enough to compute the well-founded semantics: Their is no need to compute the other partial stable models for this application of dAEL. \bart{I would strike this argument. It is clear I think}
% 
These findings are in line with the findings of \citet{nonmon30/DeneckerMT11}, who strongly argued in favour of the well-founded semantics of AEL. 

\subsection{Faulty agents}\label{sec:faulty}
In a distributed setting, it can happen that one of the agents either deliberately or accidentally \emph{fails}, i.e.\ has a theory that -- together with additional
information present in the system -- implies a contradiction. 
It is to be expected that such a failure has at least some influence on the rest of the system. However, the hope is that the rest of the system does not suffer too much from a failure of a single agent. In this section, we show how dAEL isolates faulty agents and contrast it to what happens when standard AEL is used to model a multi-agent scenario using the translation from Section \ref{sec:mapping}.
Consider again the principals from example \ref{ex:postdoc}. 
$A$ is a professor with theory 
\[ \theory_A = \left\{\begin{array}{l} \access(A,r).\\
\access(C,r).\\
\lnot K_C \lnot \access(B,r)\limplies \access(B,r) 
\end{array}\right\}.\]
% \todo{Maybe an iff would be better in this example}
Now, first we will consider a situation in which the PhD student $B$ is faulty (making inconsistent claims). I.e., consider the following theories
\[
 \theory_B^1 = \{\access(B,r) \land \lnot \access(B,r) \},\quad \theory_C^1 = \{\}
\]
and the distributed autoepistemic theory 
\[\theory^1 = (\theory_A, \theory_B^1, \theory_C^1).\]
Under all of the semantics we defined for dAEL, a model 
\[(\pws_A,\pws_B,\pws_C)\]
will have the property that $\pws_B=\top$, i.e., that the agent $B$ has inconsistent knowledge. This is to be expected since the theory $\theory_B^1$ that describes his knowledge is inconsistent. 
The interesting thing to investigate is how this inconsistency affects the other agents' knowledge. Luckily, it doesn't!  
In this example supported, Kripke-Kleene, well-founded, partial stable and stable semantics all agree that the unique model is given by
\begin{align*}
 \pws_A &= \{ \{\access(A,r),\access(C,r),\access(B,r)\}\}\\
 \pws_B &= \top \\
 \pws_C &= \bot
\end{align*}
That is, $A$ knows that everyone can access resource $r$, $C$ makes no claims about access to resources, while $B$ has inconsistent knowledge. 
This example is one of the situations where the mapping from dAEL to AEL does \emph{not} preserve semantics. Indeed, AEL has no mechanisms to isolate inconsistencies. If an AEL theory contains an inconsistency, this always results in a globally inconsistent possible world structure. 

Now, let us consider another variation of the same theory, namely with 
\[
 \theory_B^2 = \{\},\quad \theory_C^2 = \{\access(B,r) \land \lnot \access(B,r) \}
\]
and the distributed autoepistemic theory 
\[\theory^2 = (\theory_A, \theory_B^2, \theory_C^2).\]
I.e., we now consider what happens if $C$ is a faulty agent. 
Now, all semantics for dAEL agree that the unique model is given by
\begin{align*}
 \pws_A &= \left\{\begin{array}{l} \{\access(A,r),\access(C,r)\},\{\access(A,r),\access(B,r),\access(C,r)\}\end{array}\right\}\\
 \pws_B &= \bot \\
 \pws_C &= \top
\end{align*}
That is, $A$ knows that $A$ and $C$ have access to the resource $r$ and that it does not follow that $B$ has access. 
In this example, $C$ has inconsistent knowledge and $B$ has no knowledge. 
Thus, in this case, we can see that the inconsistency in $C$'s theory \emph{does} influence the knowledge of other agents. Indeed, $A$ observes that $K_C\lnot \access(B,r)$ holds and thus the last constraint no longer entails access of $B$. However, it only influence knowledge of other players at places where they explicitly refer to the faulty agent. If there were a fourth agent, say another postdoc $D$ and $\theory_A$ also contains the constraint $\access(D,r)$, the result would be that $A$ grants $D$ access, regardless of inconsistencies in other agent's knowledge. 
Thus, we conclude that our proposed formalism manages to isolate faulty agents as desired. 

This desirable behaviour with respect to faulty agents is the same behaviour that other \say-based acess control logic such as BL \cite{Garg12} exhibit. The point we made in this subsection is that if one tried to use standard AEL as an access control logic by modelling the multi-agent features using the translation from Section \ref{sec:mapping}, one would not get this desirable behaviour concerning faulty agents, so that the extension of AEL to dAEL is really necessary for the access control application.

% \todo{example with owner function unknown}
% 
% \marcos{Do we really need this?}
% \bart{No, was just an idea... Maybe better to NOT have it since it might cause confusion}

\section{Complexity}\label{sec:compl}
We now study complexity of reasoning in dAEL. 
Given the argumentation in the previous section, we focus on the well-founded semantics. More particularly, we are concerned with the the following decision task: 
\begin{task}\label{task:dec}
 Given a finite set of agents $\A$, a finite $\Sigma_o$-structure $I_o$ with domain $D$, a  finite dAEL theory $\theory$ and a sentence $\varphi$ of the form $K_t\psi$ with $t$ a $\Sigma_o$ term, determine if $\varphi$ holds in the well-founded model of $\theory$.
\end{task}
This task is well-defined: since $t$ is a $\Sigma_o$ term, $\varphi$ can be evaluated in a belief pair. 
Stated in words, we are interested in evaluating whether a certain formula ($\psi$) holds in the knowledge of a certain agent (represented here by the term $t$). In the context of access control, this decision problem is indeed the one we are interested in: there the formula is $\psi$ typically of the form $acces(b,r)$ and the term $t$ is typically $owner(r)$. I.e., there we wish to query whether according to  the owner of a given recourse $r$, a certain agent has access to that resource. 

More concretely, we will be interested in the \emph{data complexity} of this task, i.e., all complexity results will be for a fixed $T$ and $\varphi$, and thus are measured in terms of the size of the domain of $I_o$. 

After the publication of the conference version of this paper, 
\citet{Ambrossio19} already defined a query-driven decision procedure for dAEL that tackles exactly Task \ref{task:dec}. Their decision procedure is designed in such a way that it allows one to determine access rights while avoiding redundant information flow between principals in order to enhance security and reduce privacy concerns. 
Their decision procedure is query-driven in the following sense: A query in the form of a dAEL formula $\varphi$ is posed to a principal $A$. $A$ determines whether her theory contains enough information to verify $\varphi$. It can happen that $A$ cannot verify $\varphi$ just on the basis of her theory, but can determine that if a certain other principal supports a certain formula, her theory implies the query. For example, $A$'s theory may contain the formula $K_Bp \Rightarrow \varphi$. In this case, $A$ can forward a remote sub-query to $B$ concerning the status of $p$ in $B$'s theory. If $B$ verifies the sub-query $p$ and informs $A$ about this, $A$ can complete her verification of the original query $\varphi$.

In this generation of subqueries, loops may occur. For this reason, the decision procedure includes a loop detection mechanism. When a loop is detected, the query causing the loop (by being identical to a query that is an ancestor of it in the call graph) is labelled either with $\Fa$ or $\Un$, depending on whether the loop is over a negation or not. The details are described in \cite{Ambrossio19}.

Keeping the distributed theory $\T$ fixed and varying the size of the domain, this decision procedure for dAEL and its restriction to dAEL have a worst case runtime that is exponential in the size of the domain.
From a practical perspective, this is not an encouraging result.  
% One problem for practical applications is that it has exponential worst-case running times. 
In the following theorem, we show that this complexity is not a coincidence.

\begin{theorem}
 Task \ref{task:dec} is NP-hard and co-NP-hard.
\end{theorem}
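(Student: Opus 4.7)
The plan is to exhibit two polynomial-time reductions using a single fixed distributed theory and two different queries: a reduction from SAT gives NP-hardness, and a reduction from Tautology gives co-NP-hardness. The idea is to exploit the universal quantification over possible worlds implicit in $K_A$: when $\mathcal{T}_A$ is empty, $A$'s well-founded possible world structure contains every $\Sigma$-structure extending $I_o$, so $K_A\psi$ amounts to second-order universal quantification over the $\Sigma_s$-reducts.

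Concretely, for a CNF formula $\phi$ over variables $x_1,\dots,x_n$ and clauses $c_1,\dots,c_m$, I would take $\A=\{A,B\}$ and let the domain of $I_o$ be $\A\cup\{x_1,\dots,x_n,c_1,\dots,c_m\}$. Put unary symbols $\mathit{Var},\mathit{Clause}$ and binary symbols $\mathit{PosOcc},\mathit{NegOcc}$ in $\Sigma_o$, interpreted in $I_o$ in the natural way to record the structure of $\phi$; put a propositional symbol $q$ and a unary predicate $T$ in $\Sigma_s$ (so that $T$ effectively ranges over truth assignments). Let
\[
\psi \;\equiv\; \forall c\bigl(\mathit{Clause}(c)\limplies\exists x(\mathit{Var}(x)\land((\mathit{PosOcc}(x,c)\land T(x))\lor(\mathit{NegOcc}(x,c)\land\neg T(x))))\bigr),
\]
and fix the distributed theory $\mathcal{T}_A=\emptyset$, $\mathcal{T}_B=\{\neg K_A\neg\psi\limplies q\}$. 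For co-NP-hardness use the query $K_A\psi$; for NP-hardness use the query $K_B q$.

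The key verification is that the well-founded model assigns the intended truth values. Since $\mathcal{T}_A$ is empty, a direct computation with $\mathcal{D}^*_\mathcal{T}$ shows that $A$'s component of every fixpoint of the approximator equals the exact belief pair whose DPWS is the set of all $\Sigma$-structures extending $I_o$. Hence $K_A\psi$ is $\ltrue$ iff every interpretation of $T$ satisfies the CNF encoded in $I_o$, i.e.\ iff $\phi$ is a tautology, giving co-NP-hardness. For the second query, because $\mathcal{T}_B$ only refers to $A$'s beliefs and $\mathcal{T}_A$ does not mention $B$, the well-founded induction stratifies: once $A$'s component has stabilised, the formula $\neg K_A\neg\psi$ takes a two-valued truth value equal to $\ltrue$ iff some extension of $I_o$ satisfies $\psi$, i.e.\ iff $\phi$ is satisfiable. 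In that case $\mathcal{T}_B$ becomes equivalent to $\{q\}$ and forces $K_B q=\ltrue$; otherwise $\mathcal{T}_B$ is trivially satisfied, $B$'s DPWS is all of $I_o$'s extensions, and $K_B q=\lfalse$.

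The main obstacle is rigorously verifying the stratification step: one must check that a well-founded induction of $\mathcal{D}^*_\mathcal{T}$ really does first stabilise $A$'s component to the claimed exact state and only then propagate to $B$, rather than producing spurious $\lunkn$-values that would invalidate the claimed two-valued analysis. I would handle this by exhibiting an explicit terminal well-founded induction that performs the two phases in order and then invoking the fact that all terminal well-founded inductions have the same limit, which equals the well-founded fixpoint. Since both the encoding of $\phi$ in $I_o$ and the fixed theory are clearly polynomial in $|\phi|$, this yields the desired hardness results.
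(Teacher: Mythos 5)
Your NP-hardness half is sound and is essentially the paper's argument in different clothing: a fixed two-agent theory where one agent has the empty theory (so its well-founded component is the set of all expansions of $I_o$), the problem instance is encoded in $I_o$, and the query asks about the second agent's knowledge; the paper does exactly this with graph colorability in place of SAT, and your stratification worry is handled the same way there (the Kripke--Kleene iteration reaches an exact fixpoint in two steps, which is then the unique consistent fixpoint and hence the well-founded model). So reducing SAT to the query $K_B q$ is fine.

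The co-NP-hardness half, however, has a genuine gap. You reduce from ``is the CNF $\phi$ a tautology?'' via the query $K_A\psi$, but CNF tautology is decidable in polynomial time (a CNF is valid iff every clause contains a complementary pair of literals), so a polynomial reduction from it establishes nothing about co-NP-hardness. Your own encoding of the instance is clause-based, so you cannot silently switch to general-formula or DNF tautology without redoing the encoding of $\psi$. The repair is easy and brings you back in line with the paper: reduce from UNSAT (the complement of the problem you already use) rather than from validity. Concretely, either query $K_A\neg\psi$, which holds in the well-founded model iff no interpretation of $T$ satisfies the encoded CNF, i.e.\ iff $\phi$ is unsatisfiable; or replace the implication in $\mathcal{T}_B$ by an equivalence, $q \lequiv K_A\neg\psi$, as the paper does with $p\lequiv K_a\lnot\varphi$, so that $K_B q$ holds iff $\phi$ is unsatisfiable (co-NP-hardness) and $K_B\neg q$ holds iff $\phi$ is satisfiable (NP-hardness). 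Note that with your one-directional rule $\neg K_A\neg\psi\limplies q$, the query $K_B\neg q$ is always false, so the equivalence (or the direct query on $A$) is genuinely needed for the second half.
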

\begin{proof}
%  This proof is based on the proof of Proposition 12 by \citet{VlaeminckVBD/KR2012}. 
 
 It is well known that the graph coloring problem (the problem of determining whether a given graph is colorable by a given set of colors) is NP-complete. In fact, this is one of Karp's original 21 NP-complete problems \cite{Kar72}.  
We reduce both this problem and its negation to the Task \ref{task:dec}. 
 
 Consider a set of agents $\A=\{a,b\}$, the vocabulary $\Sigma_o$  with two constants $a,b$ and predicates $Node/1$, $Color/1$, $Edge/2$. 
   Also consider the vocabulary $\Sigma_s$ consisting of  predicate symbols $Coloring/2$ (with the informal interpretation that $Coloring(n,c)$ holds if node $n$ is colored with color $c$) and $p/0$. Furthermore, as before, let $\Sigma$ denote $\Sigma_o\cup\Sigma_s$. 
% 
%  , and $Coloring/2$, where the first three have the obvious interpretation and the latter is such that  holds iff $n$ is colored $c$. 
 Let $\varphi$ denote the first-order $\Sigma$-formula
 \begin{align*}
 &(\forall n: Node(n)\limplies  \exists c: Color(c)\land  Coloring(n,c))\,\land\\
 &(\forall n1, n2: Edge(n_1,n_2)\limplies \lnot \exists c: Coloring(n1,c)\land Coloring(n2,c)).
%   &(\forall x, y: \mathit{Kernel}(x) \land \mathit{Kernel}(y)\limplies \lnot G(x,y))\,\land \\
%   &\forall x: \lnot \mathit{Kernel}(x)\limplies \exists y: \mathit{Kernel}(y) \land G(y,x).
 \end{align*}
 It can be seen that $\varphi$ holds in an interpretation $I$ if and only if $\mathit{Coloring}^I$ is a coloring of $Edge^I$ with the colors $Color^I$. 
 Let $T_a$ be the empty theory  and
 \[T_b = \{p\lequiv K_a\lnot \varphi\}.\]
 Now consider $\theory=(T_a,T_b)$. 
 For any graph $(V,E)$ and set of colors $C$, let $I_{(V,E),C}$ denote the $\Sigma_o$ interpretation with domain $\{a,b\}\cup V$ interpreting $a$ as $a$, $b$ as $b$ and $Edge$ as $E$, $Node$ as $V$, and $Color$ as $C$. 
 
 Now, we claim that, given a $\Sigma_o$-interpretation $I_o$ 
 \begin{itemize}
  \item $K_a\lnot \varphi$ holds in the well-founded model of $\theory$ under $I_{(V,E),C}$ if and only if $ (V,E)$ is not colorable with $C$
  \item For any graph $(V,E)$ and set of colors $C$, $K_b p$ holds in the well-founded model of $\theory$ under $I_{(V,E),C}$ if and only if $ (V,E)$ admits no coloring with $C$, and
  \item $K_b \lnot p$ holds in the well-founded model of $\theory$ under $I_{(V,E),C}$ if and only if $(V,E)$ admits a coloring with $C$.
 \end{itemize}
 Let $\upws$ denote the well-founded model of $\theory$. First of all, we note that $\upws_a = \bot$, i.e.\ $\upws_a$ is the set of all $\Sigma$-interpretations with domain  $\{a,b\}\cup V$ that  coincide with $I_{(V,E),C}$ on $\Sigma_o$. 
 This is easy to see, since $a$ has no knowledge whatsoever. Hence, the only way it can know $\neg\varphi$ is if $(V,E)$ admits no coloring with $C$. 
 From this fact, the first claim easily follows. 
 Now, the other two claims follow from the first, since in $b$'s theory, $p$ is defined to be $K_a\lnot\varphi$. 
\end{proof}
It can be seen from the previous proof that the result also holds for the Kripke-Kleene model.

Given this complexity result, one might wonder how a logic like dAEL can be useful in practice, for instance for applications like access control. To this end, we develop a fragment of our logic, for which Task \ref{task:dec} can be solved in polynomial time. 
% The exponential runtime of the query-driven decision procedure makes it unpractical for application in a real access control system. One way to deal with this problem is to restrict the expressivity of the language, i.e.\ to use a restriction of dAEL instead of the full dAEL in an access control system. 
This restriction of dAEL has to be chosen in a careful way in order to find a good balance between expressivity and efficiency. In this subsection, we present one reasonable choice, called \daelr, from the rich space of potential restrictions of dAEL, and show that for this fragment, the task we are interested in, has polynomial data complexity. % \daelr with polynomial runtime with respect to the size of the domain.

In \daelr, the theories of the various agents may not contain arbitrary formulas, but only formulas that follow a certain syntax akin to that of rules in a logic program. We define these \emph{rule formulas} as follows:

% \begin{definition}
% A \emph{non-objective formula} is a dAEL-formula in which every atom is within the scope of a knowledge modality $K_A$ for some agent $A$.
% %  We recursively define the \emph{non-objective formulas} of dAEL as follows:
% %  \begin{itemize}
% %   \item Every formula of the form $K_A \varphi$ is a non-objective formula.
% %   \item For any non-objective formulas $\varphi$ and $\psi$, $\varphi \land \psi$ is a non-objective formula.
% %   \item For any non-objective formula $\varphi$, $\neg \varphi$ is a non-objective formula.
% %   \item For any non-objective formula $\varphi$ and any variable $x$, $\forall x: \varphi$ is a non-objective formula.
% %  \end{itemize}6
% \end{definition}
% \todo{Define non-recursively... Is easier. Marcos typed it already.}
% 
% \begin{definition}
%  A \emph{rule formula} is a dAEL formula of the form $\forall \xxx: \varphi \rightarrow l$, where $\varphi$ is a non-objective formula and $l$ is a literal (an atom or its negation).
%  %$K_{A_1} \varphi_1 \land \dots \land K_{A_n} \varphi_n \land \neg K_{A'_1} \psi'_1 \land \dots \land \neg K_{A'_m} \psi'_m \rightarrow p$, where $\varphi_1$, \dots, $\varphi_n$, $\psi_1$, \dots, $\psi_m$ are literals (i.e.\ atomic formulas or negations thereof) and $p$ is an atomic formula.
% \end{definition}

\begin{definition}
A \emph{modal literal} is a dAEL formula of the form $K_A l$ or $\lnot K_A l$ where $A$ is an agent and $l$ is a literal (an atom or its negation).
\end{definition}

\begin{definition}
 A \emph{modal complex} is a dAEL formula in which every atom is a subformula of a modal literal.
\end{definition}

Note that for a model complex $\varphi$, the interpretation $\varphi^{\ubp,I,a}$ does not depend on $I$, so we may also write $\varphi^{\ubp,a}$ for it.

\begin{definition}
A \emph{rule formula} is a dAEL formula of the form $\forall \xxx: (\varphi \Rightarrow l)$, where $\varphi$ is a modal complex and $l$ is a literal. 
A \daelr theory is a \dael theory in which all formulas are rule formulas. 
\end{definition}

While \daelr is significantly more restrictive than \dael, we believe that for most access control applications it is sufficient. First of all, note that the access control application of dAEL discussed in Section \ref{sec:AC} lies fully within \daelr. As a further example, it seems in principle to be possible to use \daelr to handle the detailed case study that \citet{Garg09} presented in order to illustrate the applicability of his access control logic BL.\footnote{The BL theory that formalizes the access control policy of this case study consists of formulas of the form $A \textit{ claims } (\varphi_1 \land \dots \land \varphi_n \land B_1 \textit{ says } \psi_1 \land \dots \land B_m \textit{ says } \psi_m \Rightarrow \chi)$. Given the intuitionistic nature of BL discussed in Section \ref{sec:related AC} below, a BL theory consisting of such formulas leads to the same access control decisions as the \daelr distributed theory $\T$ such that for each formula of the form just mentioned, there is a corresponding rule formula $K_A \varphi_1 \land \dots \land K_A \varphi_n \land K_{B_1}  \psi_1 \land \dots \land K_{B_m} \psi_m \Rightarrow \chi$ in $\T_A$.}

Restricting our attention to \daelr, the task we are interested in has polynomial data complexity:

\begin{theorem}\label{thm:poly}
If each formula in $T$ is a rule formula, then 
 Task \ref{task:dec} can be performed in polynomial time.
\end{theorem}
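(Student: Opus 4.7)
The plan is to reduce Task \ref{task:dec} on a \daelr theory to computing the well-founded model of a propositional logic program, which is known to be polynomial-time computable in the size of the program. The key observation enabling this reduction is that in \daelr, every rule head is a literal, so for each agent $A$ the possible world structure arising in the well-founded computation is completely characterized by the set of ground literals $A$ supports. In particular, modal subformulas only ever appear as $K_B l$ or $\lnot K_B l$ for $l$ a literal, so we never need to reason about $K_B$ applied to a complex formula during the fixpoint construction.

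First I would ground all rules in $\T$ over the finite domain $D$. Because $\T$ is fixed, each rule has a bounded number of variables and bounded quantifier depth, so grounding (replacing $\forall$ and $\exists$ in the body by a conjunction or disjunction over $D$) yields polynomially many ground rule instances of polynomial size in $|D|$. Next, construct a propositional logic program $\mc{P}_\T$ whose atoms are the expressions $K_A l$ for each agent $A$ and each ground literal $l$ (of which there are polynomially many). For each ground instance $\varphi \Rightarrow l$ of a rule in $\T_A$, add to $\mc{P}_\T$ the rule $K_A l \lrule \varphi^*$, where $\varphi^*$ is obtained from the modal complex $\varphi$ by reading every $K_B m$ as a propositional atom.

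The main technical step is to establish a correspondence between three-valued interpretations $J$ of $\mc{P}_\T$ and consistent distributed belief pairs $\ubp_J$ for $\T$, defined by taking $(\ubp_J^c)_A = \{I \mid l^I = \ltrue \text{ whenever } J(K_A l) = \ltrue\}$ and analogously for $\ubp_J^l$ using $J(K_A l) \neq \lfalse$. Under this correspondence, one verifies (using that modal complexes are independent of the first-order structure $I$, and that heads are literals so $D^*_\T(\ubp)$ again lies in the image of the correspondence) that Fitting's operator $\Psi_{\mc{P}_\T}$ commutes with $D^*_\T$. From this it follows by induction along the well-founded induction of Denecker and Vennekens [\citeyear{lpnmr/DeneckerV07}] that the well-founded model of $\mc{P}_\T$ corresponds to the well-founded model of $\T$. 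The main obstacle is this commutation argument: one must check carefully that the three-valued evaluation of a modal complex $\varphi$ in $\ubp_J$ agrees with the Kleene evaluation of $\varphi^*$ in $J$, and that no information is lost by restricting to the belief pairs of the form $\ubp_J$.

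Finally, since the well-founded model of a propositional logic program can be computed in polynomial time, we obtain the well-founded model of $\T$ (in its compact representation) in polynomial time. The query $K_t\psi$, where $t$ is a $\Sigma_o$-term and hence $t^{I_o}$ is a concrete agent, can be answered by a straightforward evaluation of $\psi$ using the ground literals supported by $t^{I_o}$ in the computed model; when $\psi$ itself is within the modal-complex / literal fragment used throughout \daelr, this evaluation is also polynomial, completing the argument.
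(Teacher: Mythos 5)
Your overall strategy---exploit that in \daelr{} every head is a literal, so along the fixpoint computation each agent's knowledge is determined by a polynomial-size set of ground literals, and then delegate the computation to the well-founded model of a propositional program---is very close to the paper's, which even sketches essentially your encoding (with atoms $p_A^+,p_A^-$) as an implementation remark after its proof. However, the step you yourself flag as the main obstacle fails as stated: Fitting's operator of your program $\mc{P}_\T$ does \emph{not} commute with $\ubprevision$ once some agent's derived literal set becomes inconsistent. In dAEL, if agent $A$'s conservative bound is determined by a literal set containing both $p$ and $\neg p$, then $A$'s set of possible worlds is empty, so by the three-valued truth definition $K_A m$ evaluates to $\Tr$ for \emph{every} ground literal $m$ (and dually, an inconsistent liberal bound prevents any $K_A m$ from ever becoming $\Fa$); in your propositional encoding only the explicitly derived atoms $K_A l$ are true, so bodies referring to $A$ evaluate differently. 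Concretely, take $\T_A=\{p,\ \neg p\}$ (facts are admitted as rule formulas in the paper's own examples) and $\T_B=\{K_A q \Rightarrow r\}$, and query $K_B r$. Every fixpoint of $\ubprevision$ has $\ubp_A=(\emptyset,\emptyset)$, hence $(K_A q)^{\ubp}=\Tr$ vacuously, so the unique consistent fixpoint---and therefore the well-founded model---makes $K_B r$ true; in the well-founded model of your program, $K_A q$ has no rule and is false, so $K_B r$ is false. Your algorithm thus answers this instance incorrectly, and the "one verifies" claim about the commutation cannot be carried out for the program as you defined it.

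The gap is repairable within your framework, e.g.\ by adding, for every agent $A$, every atom $p$ and every ground literal $m$, ``explosion'' rules $K_A m \lrule K_A p \land K_A \neg p$ (still polynomially many), after which the correspondence between the Kleene evaluation of $\varphi^*$ and the dAEL evaluation of $\varphi$ in $\ubp_J$ does hold; but some such device is indispensable. The paper's proof avoids the issue by never leaving the dAEL side: it shows that $\ubprevision$ maps every distributed belief pair to a literal-determined one, that strictly increasing precision chains of literal-determined pairs have polynomial length, and that each application of $\ubprevision$ is polynomial because it only requires evaluating the modal complexes against the literal-determined bounds using the genuine three-valued semantics, which handles the empty-world-set (inconsistent) case correctly. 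A minor further point: you restrict the query $\psi$ to the literal/modal-complex fragment in your last step, whereas the theorem allows an arbitrary fixed $\psi$; the paper simply evaluates the fixed query in the computed literal-determined model, so you should either justify that this evaluation is polynomial for your class of queries or state the restriction explicitly.
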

\begin{proof}
 The clue to this proof is that in the special case where $T$ only consists of rule formulas, only a small subset of all the possible world structures (and belief pairs) is relevant, in the sense that all others need not be considered in order to compute the well-founded model of $T$. Evaluating $\varphi$ in this model is then also efficient. 
 
 To this end, we call a distributed possible world structure \upws \emph{literal-determined} if for each agent $A$, there exists a set of ground literals $L_A$ such that $\upws_A$ is the least informative possible world structure in which all literals $L_A$ are known. Stated differently 
 \[\upws_A = \{ I \mid I \models l \text{ for all $l\in L_A$}\}.\]
 We call a distributed belief pair \emph{literal-determined} if both its possible world structures are.

 \textbf{Claim:} \textit{If $T$ consists only of rule formulas, then for each distributed belief pair \ubp, it holds that $\ubprevision(\ubp)$ is literal-determined}. 
 To see that this claim indeed holds, note that given a distributed belief pair \ubp and an agent $A$, it holds that 
 \begin{align*}
 &\ubprevisionL(\ubp)_A=\{I\mid   (\T_A)^{\ubp,I} \neq \Fa \} \\
 &=\{I\mid   (\forall \xxx: (\varphi \Rightarrow l))^{\ubp,I} \neq \Fa \text{ for all rule formulas $\forall \xxx: (\varphi \Rightarrow l)$ in $\T_A$} \}  \\
&= \{I\mid  I \models l[\xxx:\ddd] \text{ for all rule formulas $\forall \xxx: (\varphi \Rightarrow l)$ and all instantiations $\xxx:\ddd$ for which } \varphi^{\ubp,[\xxx:\ddd]}\geqt \lunkn\}
 \end{align*}
 I.e., that this set is literal-determined. An analogous argument yields that also $\ubprevisionU(\ubp)_A$ is literal-determined. 
 
 Now, from this claim, it follows that a well-founded induction exists that only uses literal-determined distributed belief pairs (indeed, an example of such induction is the maximal one). 
 Each increasing (in precision) chain of distributed belief pairs has only polynomial length (there are only polynomally many ground literals).
 Furthermore, for each literal-determined distributed belief pair $\ubp$,  $\ubprevision(\ubp)_A$ can be computed in polynomial time (this is easy to see by the equation in the proof of our claim: indeed, it suffices to evaluate the modal complexes $\varphi$ occurring in all rule formulas for all instantiations $\ddd$ of the $\xxx$). 
 From this, it follows that a well-founded induction can be constructed in polynomial time, and hence, the well-founded model of such a theory can be computed in polynomial time.
 Finally, in order to execute Task \ref{task:dec}, we need to evaluate a single query in the well-founded model, also that consists of simply evaluating a single formula and hence, Task \ref{task:dec} is indeed polynomial.
 \end{proof}

 The above proof actually suggests a mechanism to implement reasoning engines for \daelr based on logic programming. 
%  Indeed, this proof essentially sketches a way to translate a dAEL theory consisting of rule formulas into a logic program. 
 While (e.g., using grounding), this idea applies in the general setting, let us illustrate it in the propositional case. 
 In other words, we assume that we are given a theory $ T$ in which all formulas are of the form $\varphi\limplies l$ with $\varphi$ a propositional modal complex.
 Now, given such a \daelr theory over $\voc$, we can construct a logic-programming vocabulary $\voc'$ consisting of symbols
 \[ \voc' := \{p_A^+, p_A^- \mid p\in \voc, A \in \A\},\]
 where intuitively $p_A^+$ means that agent $A$ knows that $p$ holds, and $p_A^-$ means that agent $A$ knows that $p$ is false. 
 It is not hard to see that there is a canonical one-to-one mapping $\mu$ between interpretations of $\voc'$ and literal-determined possible world structures of $\voc$. 
 This mapping $\mu$ is extended pointwise to three- or four-valued interpretation, which it maps to distributed belief pairs. 
 Moreover, we can construct a logic program $\PP$ over $\voc'$ such that for every three-valued $\voc'$ interpretation $\struct$, 
 \[ \mu(\Psi_\PP(\struct)) = \ubprevision(\mu(\struct)),\]
 i.e., the mapping between $\voc'$-interpretations and literal-determined possible world structures of $\voc$ preserves the approximator.
 It then follows directly that $\mu$ maps the well-founded model of $\PP$ to the well-founded model of $\ubprevision$. 
 The precise way to derive $\PP$ from $T$ is: 
 \begin{itemize}
  \item In each formula $\varphi\limplies l$ in $T$, push all negations inwards in $\varphi$ using de Morgan's laws (making them appear only in front of atoms or modal operators). 
  \item In each formula $\varphi\limplies l$ in $T$, replace each literal $\lnot p$ in $\varphi$ by $p^-_B$ and each literal $p$ by $p^+_B$ where $B$, where $K_B$ is the innermost knowledge operator in the scope of which this literal appears
  \item Subsequently, drop all knowledge operators from $\varphi$, turning $\varphi$ in propositional formula over $\voc'$,
  \item Finally, replace the formula $\varphi\limplies l$ occurring in the theory of agent $A$ by a rule $l_A \lrule \varphi$ (where $l_A$ stands for $p_A^+$ if $l=p$ and $p_A^-$ if $l=\lnot p$.   
 \end{itemize}
 Let us illustrate this by means of an example.
 \begin{example}
  Consider the following \daelr theory 
  \begin{align*}
   T_A &= \{K_B \lnot (\lnot p \lor \lnot K_A \lnot q) \limplies p.\qquad K_B q\limplies \lnot q.\}\\
   T_B &= \{q. \qquad K_Ap \limplies p.\}
  \end{align*}
  corresponds to the following logic program:
  \begin{align*}
   p^+_A &\lrule p_B^+\land q_A^-\\
   q^-_A &\lrule q_B^+\\
   q_B^+.&\\
   p_B^+ &\lrule p_A^+.
   \end{align*}
  In its well-founded model, $q_B^+$ and $q_A^-$ are true. All other atoms are false. 
  This corresponds to the possible world structure in which agent $B$ knows $q$, and agent $A$ knows $\lnot q$ and no-one knows anything else. 
 \end{example}

% \begin{algorithm}[h!t]
% \caption{\textsf{Evaluate non-objective formula w.r.t. variable assignment and pair of distributed domain literal sets} \label{alg:init}}
% \begin{algorithmic}[1]
% \REQUIRE{non-objective formula $\varphi$, variable assignment $\varass$, distributed domain literal sets $\mc{L}^c$ and $\mc{L}^l$}
% \ENSURE{truth value $\varphi^{(\mc{L}^c,\mc{L}^l),\varass} \in \{\Tr,\Fa,\Un\}$}
% \IF{$\varphi$ is of the form $K_A \psi$}
%   \FOR{xxx}
%   \STATE{$\varphi^{(\mc{L}^c,\mc{L}^l)}$ := }
%   \RETURN{$V$}
% \end{algorithmic}
% \end{algorithm}
% 
% \begin{algorithm}[h!t]
% \caption{\textsf{Approximator on pair of distributed domain literal sets} \label{alg:init}}
% \begin{algorithmic}[1]
% \REQUIRE{distributed theory $\T$, distributed domain literal sets $\mc{L}^c$ and $\mc{L}^l$}
% \ENSURE{distributed domain literal sets $\ubprevisionL(\mc{L}^c,\mc{L}^l)$ and $\ubprevisionU(\mc{L}^c,\mc{L}^l)$}
% \FOR{$A \in \A$}
%   \FOR{rule $\varphi \rightarrow p$ in $\T_A$}
%   \STATE{replace}
% \RETURN{$V$}
% \end{algorithmic}
% \end{algorithm}

\section{Related Work}\label{sec:related}
In this section, we discuss two kinds of related work: In Subsection \ref{sec:mAELs}, we present other multi-agent extensions of autoepistemic logic that have been proposed in the literature, and compare them to dAEL. In Subsection \ref{sec:related AC}, we discuss approaches in access control logic related to ours.

\subsection{Other multi-agent extensions of AEL}
\label{sec:mAELs}
Several extensions of autoepistemic logic, and other non-monotonic reasoning formalisms to the multi-agent case have been made \cite{aaai/Morgenstern90,ijcai/BelleL15,clima/ToyamaKI02,wocfai/PermpoontanalarpJ95}. 
Each of them starts from a particular dialect of the non-monotonic logic and generalizes it to multiple agents.
\citet{aaai/Morgenstern90} made an extension to Moore's AEL~\cite{Moore85} and studied a centralized theory containing statements about the knowledge of different agents. She does not consider distributed theories and does not assume introspection. \citet{ijcai/BelleL15} also studied multi-agent theories in the same setting but added \emph{only knowing} and \emph{common knowledge} constructs. 
\citet{clima/ToyamaKI02} developed a distributed variant on autoepistemic logic that also assumes introspection. Compared to our logic, it is quite limited in the sense that it is propositional and only introduces one of the many semantics we discussed, namely the supported model semantics (which corresponds to Moore's original expansions); as such, it also easily encountered the kind of problems with groundedness and cyclic support the original AEL suffered from \cite{nato/HalpernM85,Konolige88,phd/Bogaerts15}.
\citet{wocfai/PermpoontanalarpJ95} studied a number of logics and developed a proof theory that extends the logic of Morgenstern. 
Their main motivation is that the logic of Morgenstern  has some undesirable properties if reduced to the single agent case, where it differs from AEL.
Our logic on the other hand, when instantiated with only one agent, exactly coincides with AEL. 
\citet{VlaeminckVBD/KR2012} defined two extensions to AEL with multiple agents, namely ordered epistemic logic (OEL) and distributed ordered epistemic logic (dOEL). 
Both of these logics require a partial order on the agents, where agents can only refer to knowledge of agents strictly lower in the order. If we add this restriction to our logic, we get exactly dOEL, i.e., dOEL is the fragment of dAEL for which there exists a stratification on the agents such that agents only refer to knowledge of ``lower'' agents. 
For such theories, all AFT semantics coincide and are equal to the semantics of dOEL as defined by \citet{VlaeminckVBD/KR2012}. 
The logic OEL is close to dOEL, with  the difference being that in OEL an agents knows everything any agent lower in the order knows. 
This behavior can be simulated in dAEL by adding the axiom scheme $K_A\phi\limplies \phi$ to the theory of each agent greater than $A$ in the order.    
In the context of an application to access control, the restriction of dOEL and OEL that a a global stratification on the agents is required is undesirable for a truly distributed system. 

Our most important contribution with respect to other approaches that define multi-agent extensions of AEL is that we present a uniform, fundamental principle to lift various of those dialects to the multi-agent case using AFT. 
In this paper, we already lift 5 dialects, and it easily extends to more semantics. 
We can use the same approach to lift the family of \emph{ultimate} semantics \cite{DeneckerMT00},  \emph{(partial) grounded fixpoint semantics} \cite{ai/BogaertsVD15,ijcai/BogaertsVD15}, \emph{well-founded set semantics} \cite{tocl/BogaertsVD16}, \emph{conflict-freeness}, \emph{$M$-stable semantics} and \emph{$L$-stable semantics} \cite{journals/ai/Strass13} from AEL to dAEL.  %to the distributed case. 
This approach not only allows us to lift many semantics, it also provides a uniform principle for \emph{comparing} various semantics and hence it brings \emph{order} in the zoo of semantics for multi-agent AEL.

\subsection{Related approaches in access control logic}
\label{sec:related AC}
% \bart{I didn't really get the first two paragarphs. What are belief semantics and Kripke semantics?}
% \marcos{I think that for understanding what ``belief semantics'' means, you need to lok at the cited paper. To make it clear that it is not a standard term, I have now replaces ``a belief semantics'' by ``a so-called \emph{belief semantics}''.}\bart{OK}

Most access control logics proposed in the literature have been defined in a proof-theoretical way, i.e., by specifying which axioms and inference rules they satisfy. This contrasts with our approach of defining dAEL model-theoretically rather than proof-theoretically. 
Our main motivation for defining dAEL model-theoretically is that model-theoretic definitions are more basic: from a model-theoretic definition, a notion of entailment, and hence a proof-theoretic characterization can be derived, but not the other way around.
% Our main motivation for defining dAEL model-theoretically is that autoepistemic logic is usually defined model-theoretically, and that it is difficult to proof-theoretically characterize autoepistemic logic with the well-founded semantics. 
We have already motivated the application of autoepistemic logic to access control in section \ref{sec:autoepistemic}, and the use of the well-founded semantics in section \ref{sec:use-cases}.

% Garg and Abad\'i 
\citet{Garg2008} and 
% Genovese 
\citet{Genovese12} have defined Kripke semantics for many of the access control logics discussed in the literature. However, these semantics are not meant to specify the meaning of the \say-modality, but to be a tool for defining decision procedures for those access control logics. This contrasts with our approach of studying the meaning of the \say-modality by showing that its intended use in access control justifies an application of the semantic principles of autoepistemic logic.

% Hirsch and Clarkson 
\citet{Clarkson13} have defined a so-called \emph{belief semantics} as well as a standard Kripke semantics for their access control logic FOCAL, arguing that the belief semantics corresponds better than the Kripke semantics to how principals reason in real-world systems. However FOCAL does not support mutual positive or negative introspection between principals, making it difficult to naturally model both delegation and denial.

%Apart from dAEL, we are not aware of the existence of any non-monotonic \say-based access control logic, and thus of the existence of a logic that can straightforwardly model the non-monotonic behaviour of denials. More specifically, 
We are not aware of any other \say-based access control logic that allows to model the non-monotonic behavior of denials as straightforwardly as dAEL by allowing to derive formulas of the form $\neg k \says \phi$ and supporting mutual negative introspection between principals. However, most state-of-the-art access control logics allow for mutual positive introspection between principals. For example BL, an access control logic with support for system state and explicit time, supports mutual positive introspection \cite{Garg09,Garg12}. 

The only approach to access control based on a non-monotonic logical formalism that we are aware of is the unifying access control meta-model proposed by 
% Barker 
\citet{sacmat/Barker2009}. This proposed meta-model is based on a rule language interpreted using Clark's completion, a non-monotonic logic programming semantics. Unlike dAEL, Barker's meta-model is not designed for distributed access control. If it is extended to support distributed access control policies and used in a straightforward way to implement our example, its behavior would correspond to the behavior that dAEL would have with the supported model semantics, which we have shown in section \ref{sec:use-cases} to give undesirable results. 

 \citet{Secommunity} describe \emph{Secommunity}, a framework for distributed access control based on Barker's access control meta-model. Secommunity is implemented in the Answer Set Programming system DLV, which works with the stable semantics. Thus in this distributed framework based on the Barker's meta-model, Clark's completion semantics has been replaced by stable semantics. But as described in section \ref{sec:use-cases}, the stable semantics also gives undesirable results when applied to a standard access control problem.
 To the best of our knowledge, Barker's meta-model has never been used under the well-founded semantics. 
 We expect that, given the strong correspondences between logic programming and autoepistemic logic, induced by AFT, there will be a close relation between such a usage of Barker's model and our logic dAEL. Researching this is a topic for future work. 
%  \bart{Ah, here it is. Next question: why not take well-founded semantics?}
%  \bart{We might also say something about the fact that AFT provides a strong correspondence between AEL and LP semantics}

% TODO: Maybe say something about the relation between the rule-language of the meta-model and Secommunity on the one hand and inductive definitions in DACL on the other hand.

\paragraph{Classical vs. intuitionistic logic}

Many state-of-the-art access control logics are based on intuitionistic rather than classical logic.
\citet{Garg09} justifies the use of intuitionistic logic in access control
on the basis of the security principle that when access is granted to a principal $k$, it should be known where $k$'s authority comes from. 
Autoepistemic logic, on the other hand, is based on classical logic. However, in Section \ref{sec:use-cases} we argued that under the well-founded semantics, with its constructive semantics, this security principle is still satisfied.

Another justification for the use of intuitionistic logic has been put forward by \citet{Abadi08}, who gives an overview over the design-space of access control logics, discussing advantages and disadvantages of certain axioms. One axiom discussed by Abadi is the Unit axiom $\phi \Rightarrow k \says \phi$. Note that Unit implies mutual full introspection (i.e.\ mutual positive and mutual negative introspection) between principals, but is strictly stronger than mutual full introspection. Abadi showed that in classical logic, Unit implies Escalation, i.e.\ the property that $k \says \phi$ implies that either $\phi$ or $k \says \bot$. \citet{Abadi08} argued that Escalation embodies a rather degenerate interpretation of
the \say-modality, because it means that if a statement supported by a principal is actually false, the principal can be considered to support all statements.

Since dAEL builds on top of classical logic, we need to discuss the status of Unit and Escalation in dAEL. In dAEL, there is no objective interpretation of objective formulas, i.e., formulas without the \say-modality. 
All we have is an introspective agent's interpretations of formulas, from which we can derive an objective interpretation of formulas of the form $k \says \psi$, or, in our notation $K_k\psi$. Hence, when $\phi$ is an objective formula, neither Unit nor Escalation can be evaluated objectively in AEL. What we can do instead is to ask the following two questions:
\begin{enumerate}
 \item Do Unit and Escalation hold for a formula $\phi$ of the form $K_k \psi$ or $\neg K_k \psi$? I.e.,  are the following formulas tautologies for all agents $j$ and $k$ and every formula $\psi$? 
 \begin{align*}
&   K_k\psi \limplies K_jK_k\psi &\text{(unit)}\\
&   \lnot K_k\psi \limplies K_j \lnot K_k\psi&\text{(unit)}\\
& K_j K_k\psi \limplies (K_k\psi \lor K_j \bot)&\text{(escalation)}\\
& K_j \lnot K_k\psi \limplies (\lnot K_k\psi \lor K_j \bot)&\text{(escalation)}
 \end{align*}
 \item Do Unit and Escalation hold within the belief of an agent? I.e., are the following formulas tautologies? 
 \begin{align*}
  &K_k(\psi \limplies K_j\psi) &\text{(unit)}\\
  &K_k   (   K_j \psi \limplies (\psi \lor K_k \bot))&\text{(escalation)}
 \end{align*}
\end{enumerate} 

The answer to question 1 is ``yes''. Indeed, Unit means that the agents in question have introspection in each other's knowledge. Escalation also holds, but simply due to the fact that $K_j K_k \psi$ and $K_k\psi$ are equivalent in our logic, as can easily be seen from the truth evaluation. In this case, escalation boils down to stating that it is impossible for an agent $j$ to consistently make the claim that another agent $k$ said something $k$ did not actually say. 

The answer to question 2 is ``no'': Principal $A$ can know $p$, but not know that principal $B$ knows $p$ (unit). Also, $A$ can know that $B$ claims some property holds (say, that $B$ has access to a recourse) and in the meanwhile $A$ can claim that $B$ does not have access to this recourse without thereby implying that $B$'s claims are inconsistent.

\section{Conclusion and Future Work} \label{sec:challenge}
\label{sec:conclusion}
Motivated by an application in access control, we have extended AEL to a distributed setting, resulting in a logic called distributed autoepistemic logic (dAEL).
dAEL allows for a set of agents to each have their own theory in which they refer to each others knowledge.
For this, the knowledge operator $K$ of AEL is replaced by an indexed operator $K_A$, where $A$ refers to an agent. 
We have defined the semantics of this logic building on approximation fixpoint theory (AFT), a lattice-theoretic framework that captures the semantics of many non-monotonic logics. 
Using AFT has many practical advantages: first of all, it allows for a uniform lifting of many different semantics. 
Secondly, it ensures that all fundamental principles underlying these semantics remain preserved.
And third, in doing so, we immediately obtain access to a wide variety of theoretic results. For instance, properties such as the fact that the well-founded model approximates all stable models was obtained \emph{by definition}, since the corresponding result holds in the algebraic setting. Similarly, we can (but did not do so) apply algebraic stratification results \cite{tocl/VennekensGD06,tocl/BogaertsVD16}, predicate introduction results \cite{VennekensMWD07}, modularity results \cite{amai/Truszczynski06}, or results on constructive characterisations of semantics \cite{ai/BogaertsVD18} without effort. Also future progress in AFT will be directly applicable.  % , while a direct definition 
% Third, we immediately obtain access to a wide variety of theoretic results, about the relationship between different semantics for dAEL, but also, e.g., stratification results \cite{} that were developed for AFT. 

We have illustrated how dAEL can be applied to access control and have argued that one semantics is particularly suitable for modelling access control policies, namely the well-founded semantics. The non-monotonic behaviour of dAEL allowed us to model denial and revocation of access in dAEL, something that previous access control logics could not achieve due to their monotonicity. We have thus built a bridge between non-monotonic logic and access control. One of the tasks left for future research is to study whether this bridge may lead to further fruitful interaction between these fields additionally to the one already considered in this paper.

We have studied the complexity of reasoning with the well-founded semantics of dAEL and came to the unsettling conclusion that complexity of the considered task is quite high (both NP and coNP hard), thus making it unpractical for use in an access control setting. 
To overcome this limitation, we defined a fragment of our logic, which we called \daelr, in which reasoning becomes polynomial and that suffices to model the kind of application that motivated the paper in the first place.

\newpage
\section*{Acknowledgements}
This research was supported by project GOA 13/010 of the Research Fund KU Leuven and projects G.0489.10, G.0357.12, and G.0922.13 of the Research Foundation - Flanders and more specifically was part of the FNR-FWO project \emph{Specification logics and Inference tools for verification and Enforcement of Policies}. Part of this work was also performed while 
Bart Bogaerts was a postdoctoral fellow of the Research Foundation -- Flanders (FWO).

\bibliographystyle{plainnat}
% \footnotesize
%\bibliography{ExtraRefs,idp-latex/krrlib}

\appendix

\section{Proofs of Theorems \ref{thm:mapping:perm},  \ref{thm:mapping} and \ref{thm:link}}

In order to prove Theorems \ref{thm:mapping:perm},  \ref{thm:mapping} and \ref{thm:link}, we first need to define some additional notions and prove some lemmas. 

\begin{lemma}\label{lem:perm}
 If $\T$ is permaconsistent, then $D^*_\T(\bot,\top)$ is universally consistent. 
\end{lemma}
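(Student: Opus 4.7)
The plan is to show, for each fixed agent $A \in \A$, that the liberal component $D^*_\T(\bot,\top)^l_A = \{I \mid (\T_A)^{(\bot,\top),I} = \Tr\}$ is non-empty, by exhibiting a $\Sigma$-structure $I$ expanding $I_o$ on which every formula of $\T_A$ evaluates to $\Tr$ under the Kleene semantics with respect to $(\bot,\top)$.

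First I would unpack what $(\bot,\top)$ looks like componentwise: the conservative bound $(\bot,\top)^c_B$ for each agent $B$ is the set of all $\Sigma$-structures expanding $I_o$, while the liberal bound $(\bot,\top)^l_B$ is empty. Applying Definition \ref{def:conlibvaluation} to a non-nested modal subformula $K_t\varphi$ of $\T_A$, this forces the value to sit in $\{\Tr,\Un\}$ whenever $t^I\in\A$: the $\Fa$-clause requires a witness in the empty liberal bound (and thus can only fire when $t^I\notin\A$), while the $\Tr$-clause requires $\varphi$ to evaluate to $\Tr$ in every $J$ in the (huge) conservative bound. Importantly, for the modal subformulas typically occurring in $\T_A$ (with $t$ a $\Sigma_o$-term denoting an agent), this value is determined by $(\bot,\top)$ alone and does not depend on $I$.

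Next I would define a two-valued substitution $v$ on the non-nested modal subformulas of $\T_A$: set $v(\mu)=\ltrue$ when $\mu^{(\bot,\top)}=\Tr$ and $v(\mu)=\lfalse$ otherwise (covering both $\Un$ and the rare $\Fa$ cases). Permaconsistency of $\T$ then yields a classical model $I$ of the substituted theory $(\T_A)_v$ that expands $I_o$. The final step is to lift classical satisfaction of $(\T_A)_v$ in $I$ to Kleene satisfaction of $\T_A$ in $I$ under $(\bot,\top)$, by structural induction on subformulas: the induction exploits that whenever $v$ assigns $\ltrue$, the actual Kleene value is already $\Tr$; and whenever $v$ assigns $\lfalse$ in place of a true $\Un$, the classical $\Tr$ verdict must have been obtained from a sibling subformula, which by $\leqp$-monotonicity of the valuation still suffices for Kleene $\Tr$ upstream.

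The hard part is precisely this last lifting step, because Kleene is strictly more restrictive than classical evaluation under all completions (for instance $K_Ap\lor\neg K_Ap$ is Kleene-$\Un$ but classically a tautology). The argument thus cannot simply invoke ``every completion has a model''; it must rely on the full strength of permaconsistency, which disallows the pathological substitution patterns — e.g.\ those that would make a formula's classical truth hinge on inconsistently instantiating distinct occurrences of the same modal — and ensures that when a $\Un$-valued modal has been collapsed to $\lfalse$, the $\Tr$-verdict of the substituted formula comes from a sub-branch whose Kleene value is already definite. Formalising this compositional argument and treating the quantifier cases uniformly (using that $I_o$ fixes the interpretations of $\Sigma_o$-symbols so that the modal values are $I$-independent on the relevant subterms) will be the principal technical burden of the proof.
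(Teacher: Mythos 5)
Your overall strategy (use permaconsistency to get a classical model of a substituted theory, then conclude that $\T_A$ evaluates to $\Tr$ at $(\bot,\top)$ in that model) is the right one, but the substitution you chose breaks the argument, and the ``lifting step'' you defer to the end is in fact false as stated. You substitute by \emph{value}: every non-nested modal subformula that is not $\Tr$ at $(\bot,\top)$ --- in particular every $\Un$ one --- is replaced by $\lfalse$, regardless of the polarity of its occurrence. Under a negation this rounds $\Un$ \emph{up} to classical truth, so classical satisfaction of $(\T_A)_v$ no longer bounds the Kleene value of $\T_A$ from below. Concrete counterexample: let $\T_A=\{\neg K_A p \lor q\}$ with a single agent $A$ and $p,q\in\Sigma_s$. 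This theory is permaconsistent (both substitutions of $K_Ap$ give a consistent theory), and $(K_Ap)^{(\bot,\top)}=\Un$, so your $v$ sends $K_Ap$ to $\lfalse$ and $(\T_A)_v$ becomes a tautology; the structure $I$ with $q$ false is then a model of $(\T_A)_v$ expanding $I_o$, yet $(\neg K_Ap\lor q)^{(\bot,\top),I}=\Un\lor\Fa=\Un\neq\Tr$. So picking an arbitrary model of $(\T_A)_v$ does not witness non-emptiness of the liberal component, and permaconsistency does \emph{not} ``disallow the pathological substitution patterns'' --- it only says every substitution instance is consistent; it says nothing about which subformula witnesses truth in a particular model. (There are also smaller well-definedness issues with $v$: for $K_t\varphi$ with $t$ a $\Sigma_s$-term, or with free variables bound by outer quantifiers, ``the value of the modal at $(\bot,\top)$'' depends on $I$ and the assignment, so $v$ is not determined by $(\bot,\top)$ alone.)

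The fix --- and this is what the paper does --- is to substitute by \emph{polarity} rather than by value: replace each non-nested modal occurrence by $\lfalse$ if it occurs positively and by $\ltrue$ if it occurs negatively (odd number of negations). Call the result $\T_A'$. Since at any belief pair the actual value $v(\mu)$ of each such occurrence satisfies $\lfalse\leqt v(\mu)\leqt\ltrue$, and the Kleene connectives and $\glb_{\leqt}$ over the quantifier are $\leqt$-monotone in positive positions and antitone in negative ones, one gets $(\T_A')^{I}\leqt (\T_A)^{(\bot,\top),I}$ for \emph{every} structure $I$ --- no case analysis on where the classical $\Tr$ ``comes from'' is needed, and no knowledge of the actual modal values is required. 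Permaconsistency then supplies a model $I_A$ of $\T_A'$ expanding $I_o$, whence $(\T_A)^{(\bot,\top),I_A}=\Tr$, so the liberal component of $D^*_\T(\bot,\top)$ is non-empty for every agent. In your counterexample this substitution yields $\T_A'\equiv\{q\}$, whose models indeed make the original formula $\Tr$.
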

\aproof{\begin{proof}
         If $\T$ is permaconsistent, then for each agent $A\in \A$ and each theory $T'$ that can be constructed from $\T_A$ by replacing all non-nested occurrences of modal literals by $\ltrue$ or $\lfalse$ is consistent. 
         Now, for each agent $A$, let $\T_A'$ the theory constructed from $\T_A$ by replacing all non-nested occurrences of modal literals by $\ltrue$ if they occur in a negative context (under an odd number of negations) and by $\lfalse$ otherwise. This theory is clearly stronger than $\T_A$. 
         Since $\T$ is permaconsistent, $\T_A'$ is satisfiable, let $I_A$ be a model of $\T_A'$. 
         In this case, it holds that 
         $\T_A^{(\bot,\top),I_A} = \ltrue$ (since $\T_A$ is weaker than $\T_A'$). 
         
         From this, we find that for each agent $A$, $\{I\mid \T_A^{(\bot,\top), I}\}$ is non-empty and thus that $D^*(\bot,\top)$ is indeed universally consistent.
        \end{proof}
}

\begin{proof}[Proof of Theorem \ref{thm:link}]
 Suppose $\T$ is permaconsistent. By Lemma \ref{lem:perm}, $D^*_\T(\bot,\top)$ is universally consistent. It follows directly from the definitions in AFT that each model of $\T$ (under any of the semantics), is more precise than $D^*_\T(\bot,\top)$. Furthermore, if $\ubp'\geqp\ubp$ and $\ubp$ is universally consistent, then so is $\ubp'$.
\end{proof}

\begin{definition}
Given a $\Sigma'$-structure $J$ and a $\Sigma'$-term $t$, we write $J_t$ for the $\Sigma$-structure defined by $s^{J_t}(d_1,\dots,d_n) := s^{J}(d_1,\dots,d_n,t^J)$ for every $s \in \Sigma$.
\end{definition}

The following lemma states that for an indexed family of structures, the mapping does not discard any information, i.e., after applying the mapping, we can recover each agent's structure. 
\begin{lemma}
 \label{lem:J_A}
 Let $(I_A)_{A\in\A}$ be an indexed family of $\Sigma$-structures, and let $J = \taustruc((I_A)_{A\in\A})$. Then $J_A = I_A$.
\end{lemma}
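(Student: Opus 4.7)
The statement is a direct unfolding of definitions, so the plan is to verify that $J_A$ and $I_A$ agree on the interpretation of every symbol $s\in\Sigma$, splitting into cases on whether $s\in\Sigma_s$ or $s\in\Sigma_o$ (and subcases for function vs.\ relation symbols, which are completely parallel). First I would note the basic observation that powers the argument: because the agent constant $A$ lies in $\Sigma_o$ and is interpreted as $A$ in $I_o$, and because $\taustruc$ is defined so as to preserve the $\Sigma_o$-reduct, we have $A^J = A^{I_o} = A\in\A$.

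Second, for a subjective symbol $s\in\Sigma_s$ of arity $n$, I would compute, using the definition of $J_t$ for $t = A$ and then the defining clause of $\taustruc$ on $\Sigma_s$-symbols,
\[
 s^{J_A}(d_1,\dots,d_n) \;=\; s^J(d_1,\dots,d_n,A^J) \;=\; s^J(d_1,\dots,d_n,A) \;=\; s^{I_A}(d_1,\dots,d_n),
\]
where the last equality uses $A\in\A$ to pick the first branch of the case distinction in the definition of $\taustruc$. The relation-symbol case for $R\in\Sigma_s$ is entirely analogous, using the predicate clause of $\taustruc$.

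Third, for an objective symbol $s\in\Sigma_o$, the term $s^{J_A}(d_1,\dots,d_n)$ unfolds to $s^J(d_1,\dots,d_n)$ (since $\Sigma_o$-symbols carry no extra argument in $\Sigma'$), which by the $\Sigma_o$-clause of $\taustruc$ equals $s^{I_o}(d_1,\dots,d_n)$. To conclude equality with $s^{I_A}(d_1,\dots,d_n)$ I would invoke the standing convention from the preliminaries that every structure appearing in a possible world structure is a $\Sigma$-structure that coincides with $I_o$ on all symbols of $\Sigma_o$, so that $s^{I_A} = s^{I_o}$. Again the relation case is identical.

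The main (and only) subtle point is that the $\Sigma_o$-case relies on the implicit hypothesis that each $I_A$ agrees with the fixed shared structure $I_o$ on $\Sigma_o$; this is precisely the global assumption built into the definition of possible world structures, so no extra work is required beyond citing it. Apart from this, the proof is bookkeeping: three or four lines per case, all decomposing the symbol $s$ according to which fragment of $\Sigma$ it belongs to.
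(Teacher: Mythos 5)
Your proof is correct and takes essentially the same route as the paper's: a direct unfolding of the definitions of $J_t$ and $\taustruc$ at $t=A$, using $A^J=A\in\A$ to select the right clause of $\taustruc$. You are in fact somewhat more careful than the paper's one-line computation, which does not separate the $\Sigma_o$-case (where the arity is unchanged and one needs that each $I_A$ coincides with $I_o$ on $\Sigma_o$); your explicit handling of that case is a minor but legitimate tightening, not a different approach.
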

\begin{proof}
 Let $s \in \Sigma$. Then $s^{J_{A}}(d_1,\ldots,d_n)=s^{J}(d_1,\ldots,d_n,A)=s^{I_{A}}(d_1,\ldots,d_n)$. 
\end{proof}

The following lemma generalizes Lemma \ref{lem:J_A} to DPWSs. 

\begin{lemma}
	\label{DAEL:lem:transBP}
	Let $\upws$ be a universally consistent DPWS, and let $A \in \A$. Then 
	$$\{J_{A}\mid J\in \taupws(\upws)\}=\upws_{A}$$ for each $A\in\A$.
\end{lemma}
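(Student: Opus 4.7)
The plan is to prove the two set inclusions separately, with universal consistency being essential only for one direction.

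For the inclusion $\{J_A \mid J \in \taupws(\upws)\} \subseteq \upws_A$, I would take an arbitrary $J \in \taupws(\upws)$. By the definition of $\taupws$ (Definition \ref{def:tau_Q}), there must exist an indexed family $(I_B)_{B \in \A}$ with $I_B \in \upws_B$ for every $B \in \A$, such that $J = \taustruc((I_B)_{B \in \A})$. Applying Lemma \ref{lem:J_A} immediately yields $J_A = I_A$, and since $I_A \in \upws_A$ by assumption, we conclude $J_A \in \upws_A$. This direction requires no extra hypotheses on $\upws$.

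For the reverse inclusion $\upws_A \subseteq \{J_A \mid J \in \taupws(\upws)\}$, I would take an arbitrary $I \in \upws_A$ and construct a witness $J \in \taupws(\upws)$ with $J_A = I$. This is the step where universal consistency is used: since $\upws_B \neq \emptyset$ for each $B \in \A$, I can choose some $I_B \in \upws_B$ for every $B \in \A \setminus \{A\}$, and set $I_A := I$. Define $J := \taustruc((I_B)_{B \in \A})$. Then by construction $J \in \taupws(\upws)$, and by Lemma \ref{lem:J_A} we get $J_A = I_A = I$, as required.

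The main (mild) obstacle is recognising where universal consistency enters: without it, some $\upws_B$ could be empty, making the family $(I_B)_{B \in \A}$ impossible to form, so that $\taupws(\upws)$ would itself be empty and the inclusion $\upws_A \subseteq \{J_A \mid J \in \taupws(\upws)\}$ would fail whenever $\upws_A$ happens to be nonempty for this particular $A$. Otherwise the proof is a straightforward unpacking of definitions, applying Lemma \ref{lem:J_A} in both directions.
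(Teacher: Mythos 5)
Your proof is correct and follows essentially the same route as the paper's: both inclusions are established exactly as in the paper's argument, unpacking Definition \ref{def:tau_Q}, invoking Lemma \ref{lem:J_A} in each direction, and using universal consistency only to build the witnessing family for the inclusion $\upws_A \subseteq \{J_A \mid J \in \taupws(\upws)\}$. No gaps to report.
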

\begin{proof}
	We prove the equality by proving the subset relation in both directions.
	
	Let $J\in \taupws(\upws)$. Then there is an indexed family $(I_{A'})_{A'\in \A}$ s.t. $I_{A'}\in \upws_{A'}$ for each $A'\in \A$ and $J=\taustruc((I_{A'})_{A'\in \A})$.
	Then by Lemma \ref{lem:J_A}, $J_{A}=I_{A}\in \upws_{A}$, as required.
	
	To prove the other direction, let $I\in \upws_{A}$. Since $\upws$ is universally consistent, there is some indexed family $(I_{A'})_{A'\in \A}$ s.t. $I_{A}=I$ and $I_{A'}\in \upws_{A'}$ for all $A'\in \A$.
	Define $J:=\taustruc((I_{A'})_{A'\in\A})$. Note that $J \in \taupws(\upws)$. Now $J_A=I_A$ by Lemma \ref{lem:J_A}, so $J_A=I$, as required.
\end{proof}

The following lemma says that the mapping is faithful to the valuations of AEL and dAEL formulas:

\begin{lemma}
\label{lem:mapval}
For a $\Sigma'$-term $t$, a formula $\phi \in \fo_d^\Sigma$, a universally consistent distributed belief pair $\ubp$ and a $\Sigma'$-strucutre $J$, 
$$\tauform(t,\phi)^{\taubp(\ubp),J} = \phi^{\ubp,J_t}.$$
\end{lemma}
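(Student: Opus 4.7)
The plan is to proceed by structural induction on $\phi$, tracking an explicit variable assignment $a$ on both sides (the $J_t$ in the statement should be read as depending on $a$ via $t^{J,a}$; this extension is necessary because the modal case of $\tauform$ introduces a fresh bound variable). Before the main induction, I would establish a short term-level auxiliary: for every $\Sigma$-term $r$, $\Sigma'$-term $t$, structure $J$ and assignment $a$, one has $r_t^{J,a} = r^{J_t, a}$. This goes by induction on $r$; the variable case is immediate from the clause $x_t := x$, and for $f(r_1,\ldots,r_n)$ the definitions of $(\cdot)_t$ and of $J_t$ on $f$, combined with the induction hypothesis on each $r_i$, produce the required equality (the $\Sigma_s$ case uses the extra argument $t^{J,a}$, the $\Sigma_o$ case is even simpler).

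The atomic case and the cases $\lnot$, $\land$, and $\forall x$ of the main induction are routine: atoms reduce via the term lemma, and the remaining connectives are translated homomorphically, so both sides are computed by identical three-valued clauses applied to subterms whose values coincide by the induction hypothesis. In the $\forall x$ case one may assume, by renaming if necessary, that $x$ does not clash with $t$, and then the assignment $a[x:d]$ is carried through identically on both sides.

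The essential case is the modal one, $\phi = K_s \psi$, where $\tauform(t, K_s \psi) = \exists x \, (x = s_t \land \Apred(x) \land K\,\tauform(x,\psi))$ for a fresh $x$. I would evaluate the left-hand side step by step: the conjunct $x = s_t$ forces the existential witness to be $d := s_t^{J,a} = s^{J_t, a}$ (second equality: the term lemma); the conjunct $\Apred(x)$ contributes $\Tr$ iff $d \in \A$; and $K\,\tauform(x,\psi)^{\taubp(\ubp), J, a[x:d]}$ is computed by scanning $\tauform(x,\psi)$ over $J' \in \taupws(\ubp^c)$ (for the $\Tr$ clause) and $J' \in \taupws(\ubp^l)$ (for the $\Fa$ clause). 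Applying the induction hypothesis to $\psi$ with the term $x$ rewrites each such value as $\psi^{\ubp, J'_x, a}$ (since $x$ does not occur in $\psi$, $a[x:d]$ and $a$ agree on its free variables). Finally, Lemma~\ref{DAEL:lem:transBP} applied with the agent $d$ identifies $\{J'_x : J' \in \taupws(\ubp^c)\}$ with $\ubp^c_d$ and similarly for $\ubp^l$. The resulting three-case analysis then matches the dAEL clause for $(K_s \psi)^{\ubp, J_t, a}$ exactly, including the sub-case $d \notin \A$, where the AEL existential is unsatisfiable (giving $\Fa$) and the dAEL $K_s$ is false by definition.

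The main obstacle is the modal case: carefully juggling the fresh variable $x$, the term $s_t$, and the derived agent value $d$, and recognising that universal consistency of $\ubp$ (whence of both $\ubp^c$ and $\ubp^l$, since $\ubp^l \subseteq \ubp^c$ pointwise) is exactly the hypothesis needed in order to invoke Lemma~\ref{DAEL:lem:transBP} and reduce the AEL quantification over $\taupws(\ubp^c)$ (resp.\ $\taupws(\ubp^l)$) back to the agent-indexed $\ubp^c_d$ (resp.\ $\ubp^l_d$).
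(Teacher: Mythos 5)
Your proof is correct and follows essentially the same route as the paper's: structural induction on $\phi$ with the routine connective cases, and in the modal case using the forced witness $d=s_t^{J}=s^{J_t}$ together with universal consistency and Lemma~\ref{DAEL:lem:transBP} to convert the quantification over $\taupws(\ubp^c)$ (resp.\ $\taupws(\ubp^l)$) into the agent-indexed sets $\ubp^c_d$ (resp.\ $\ubp^l_d$). Your explicit variable-assignment bookkeeping and the term-level auxiliary $r_t^{J,a}=r^{J_t,a}$ merely spell out what the paper uses implicitly via the remark $s_t^J=s^{J_t}$.
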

\aproof{\begin{proof}We prove the lemma by induction over the structure of $\phi$. \\
		\noindent \underline{$\phi = P(t_1,\dots,t_n)$.} Then $\tauform(t,\phi)^{\taubp(\ubp),J} = \Tr$ 
		
		iff $P({t_1}_t,\dots,{t_n}_t,t)^{\taubp(\ubp),J} = \Tr$
		
		iff $({t_1}_t^{J},\dots,{t_n}_t^{J},t^J) \in P^{J}$
		
		iff $(t_1^{J_t},\dots,t_n^{J_t},t^J) \in P^{J_t}$
		
		iff $\phi^{\ubp,J_t} = \Tr$.
		
		\noindent
		Analogously, we find $\tauform(t,\phi)^{\taubp(\ubp),J} = \Fa$ iff $\phi^{\ubp,J_t} =\Fa$.
		
		\noindent \underline{$\phi = \neg \psi$.} Then $\tauform(t,\phi)^{\taubp(\ubp),J} = \lnot \tauform(t,\psi)^{\taubp(\ubp),J}$ 
		and the result follows by the induction hypothesis.
		% 
		% iff $\neg \tauform(A,\psi)^{\taubp(\ubp),J,+} = \Tr$
		% 
		% iff $\tauform(A,\psi)^{\taubp(\ubp),J,-} = \Fa$
		% 
		% iff $\psi^{\ubp,J_A,-} = \Fa$ (by inductive hypothesis)
		% 
		% iff $\phi^{\ubp,J_A,+} = \Tr$.
		
		\noindent \underline{$\phi = \phi_1 \land \phi_2$.} Similarly.
		
		\noindent \underline{$\phi = \forall x : \psi$.} Similarly.
		
		\noindent \underline{$\phi = K_s \psi$.} Then $\tauform(t,\phi)^{\taubp(\ubp),J} = \Tr$ 
		
		iff $\exists x : (x=s_t \land \Apred(x) \land K \tauform(x,\psi))^{\taubp(\ubp),J} = \Tr$
		
		iff there is a $d \in \A$  with $d=s_t^J$ such that  for each $J' \in \taupws(\ubp^c)$, $\tauform(d,\psi)^{\taubp(\ubp),J'} = \Tr$ (by Definition \ref{def:AEL3val})
		
		iff $s^{J_t} \in \A$ and for each $J' \in \taupws(\ubp^c)$, $\tauform(s^{J_t},\psi)^{\taubp(\ubp),J'} = \Tr$ (since $s_t^J=s^{J_t}$)
		
		iff $s^{J_t} \in \A$ and for each $J' \in \taupws(\ubp^c)$, $\psi^{\ubp,J'_{s^{J_t}}} = \Tr$ (by the induction hypothesis)
		
		iff $s^{J_t} \in \A$ and for each $I \in \ubp^c_{s^{J_t}}$, $\psi^{\ubp,I} = \Tr$ (since $\ubp$ is universally consistent, using Lemma \ref{DAEL:lem:transBP})
		
		iff $\phi^{\ubp,J_t} = \Tr$ (by Definition \ref{def:conlibvaluation}).
		
		\noindent Analogously, we find $\tauform(t,\phi)^{\taubp(\ubp),J} = \Fa$ iff $\phi^{\ubp,J_t} =\Fa$.
		
\end{proof}}

The following lemma states that the dAEL approximator $\ubprevision$ is mapped to the AEL approximator $\ubprev{\tautheo(\Tt)}$, when restricted to universally consistent distributed belief pairs:

\begin{lemma}
\label{lem:ubprev}
 For every distributed theory $\Tt$ and every universally consistent distributed belief pair $\ubp$, 
 $$\taubp(\ubprev{\Tt}(\ubp)) = \ubprev{\tautheo(\Tt)}(\taubp(\ubp)).$$
\end{lemma}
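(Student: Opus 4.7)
The plan is to prove equality of the two components (conservative and liberal bounds) separately, since $\taubp$ and $\ubprev{\cdot}$ both decompose coordinate-wise. I would treat only the conservative case explicitly; the liberal case is symmetric, replacing $\neq \Fa$ by $= \Tr$ throughout. Unfolding the definitions, the target equality becomes
\begin{align*}
\{\taustruc((I_A)_{A\in\A}) \mid I_A \in \ubprev{\Tt}^c(\ubp)_A \text{ for all } A\}
 = \{J \mid \tautheo(\Tt)^{\taubp(\ubp),J} \neq \Fa\},
\end{align*}
and, after substituting the definition of $\ubprev{\Tt}^c$ and of $\tautheo$ as $\bigcup_{A\in\A}\tauform(A,\T_A)$, both sides become characterizations in terms of the values $\phi^{\ubp,\cdot}$ (resp.\ $\tauform(A,\phi)^{\taubp(\ubp),\cdot}$) for $\phi\in\T_A$.

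The workhorse will be Lemma \ref{lem:mapval}, which bridges these two valuations via $\tauform(A,\phi)^{\taubp(\ubp),J} = \phi^{\ubp,J_A}$. For the $\subseteq$-direction, I would pick $J := \taustruc((I_A)_{A\in\A})$ with each $\T_A^{\ubp,I_A}\neq\Fa$, use Lemma \ref{lem:J_A} to conclude $J_A = I_A$, and then apply Lemma \ref{lem:mapval} to every $\phi\in\T_A$ to obtain $\tauform(A,\phi)^{\taubp(\ubp),J} = \phi^{\ubp,I_A}\neq\Fa$; since this holds for every $A$ and $\phi$, we get $J$ in the right-hand side. For the $\supseteq$-direction, I would start from $J$ in the right-hand side, apply Lemma \ref{lem:mapval} in the opposite direction to conclude $\phi^{\ubp,J_A}\neq\Fa$ for each $A$ and $\phi\in\T_A$, so that $J_A \in \ubprev{\Tt}^c(\ubp)_A$ for every $A$, and then rebuild $J$ as $\taustruc((J_A)_{A\in\A})$.

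The main obstacle is the last rebuilding step: showing that an arbitrary $J$ in the AEL revision actually coincides with $\taustruc((J_A)_{A\in\A})$. By Lemma \ref{lem:J_A} the two structures agree on every "agent slice," and they agree on $\Sigma_o$ by the standing assumption on possible world structures, but $\taustruc$ additionally fixes symbols of $\Sigma_s$ to the default values ($\dummy$ for functions, $\Fa$ for relations) on tuples whose last argument is not in $\A$. Since $\tautheo(\Tt)$ only ever evaluates $\Sigma_s$-symbols at tuples whose last argument is some fixed $A\in\A$ (this is a syntactic property of $\tauform$, verifiable by a straightforward induction on $\phi$), its value at $J$ depends only on the family $(J_A)_{A\in\A}$; so the two structures are interchangeable for the revision operator, and the equality holds either literally (once we adopt the convention that AEL possible world structures over $\Sigma'$ respect these defaults on non-agent tuples, which is the natural reading given the construction of $\taustruc$) or modulo this non-agent-slice equivalence. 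I would state this convention explicitly at the start of the proof so that the set-level equality becomes unambiguous.
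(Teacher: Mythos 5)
Your proposal is correct and follows essentially the same route as the paper's proof: decompose componentwise, unfold the definitions of $\taupws$ and $\ubprevision$, and pass between the two valuations via Lemma \ref{lem:J_A} and Lemma \ref{lem:mapval}. The only difference is that you make explicit the convention on non-agent slices of $\Sigma_s$-symbols (default values as in $\taustruc$) needed for the rebuilding direction, a point the paper's chain of equalities leaves implicit when it invokes Definition \ref{def:tau_Q} and Lemma \ref{lem:J_A}.
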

\aproof{\begin{proof}
\begin{align*}
\taubp(\ubprev{\T}(\ubp))& = (\taupws((\{I \mid \phi^{\ubp,I} = \Tr \textnormal{ for each } \phi \in \T_A\})_{A\in\A}),\\
& \hspace{6.1mm} \taupws((\{I \mid \phi^{\ubp,I} \neq \Fa \textnormal{ for each } \phi \in \T_A\})_{A\in\A}))\\
& = (\{J \mid \phi^{\ubp,J_A} = \Tr \textnormal{ for each } A\in\A \textnormal{ and } \phi \in \T_A\},\\
& \hspace{6.1mm} \{J \mid \phi^{\ubp,J_A} \neq \Fa \textnormal{ for each } A\in\A \textnormal{ and } \phi \in \T_A\})\\
& \hspace{6.1mm} \textnormal{(by Definition \ref{def:tau_Q} and Lemma \ref{lem:J_A})}\\
& = (\{J \mid \tauform(A,\phi)^{\taubp(\ubp),J} = \Tr \textnormal{ for each } A\in\A\\
& \hspace{11.9mm} \textnormal{and } \phi \in \T_A\},\\
& \hspace{6.1mm} \{J \mid \tauform(A,\phi)^{\taubp(\ubp),J} \neq \Fa \textnormal{ for each } A\in\A \\
& \hspace{11.9mm} \textnormal{and } \phi \in \T_A\}) \textnormal{ (by Lemma \ref{lem:mapval})}\\
& = (\{J \mid \varphi^{\taubp(\ubp),J} = \Tr \textnormal{ for each } \varphi \in \tautheo(\Tt)\},\\
& \hspace{6.1mm} \{J \mid \varphi^{\taubp(\ubp),J} \neq \Fa \textnormal{ for each } \varphi \in \tautheo(\Tt)\})\\
%& = (\ubprevL{\tautheo(\Tt)}(\taubp(\ubp)),\ubprevU{\tautheo(\Tt)}(\taubp(\ubp))) 
& = \ubprev{\tautheo(\Tt)}(\taubp(\ubp))
\end{align*}\qedhere
\end{proof}}

% The following lemma is a restriction of Lemma \ref{lem:ubprev} to the conservative operators $\ubprevisionL$ and $\mc{D}_T^c$:
% 
% \begin{lemma}
% \label{lem:ubprevL}
%  For every distributed theory $\Tt$ and all universally consistent DPWSs $\upws$ and $\upws'$, 
%  $$\taupws(\ubprevL{\Tt}(\upws',\upws)) = \ubprevL{\tautheo(\Tt)}(\taupws(\upws'),\taupws(\upws)).$$
% \end{lemma}
% \aproof{\begin{proof}
%  Trivial by Lemma \ref{lem:ubprev}.
% \end{proof}}

The mapping maps the dAEL knowledge revision operator $\upwsrevision$ to the corresponding AEL knowledge revision operator $\mc{D}_T$:
\begin{lemma}
\label{lem:upwsrev}
  For every distributed theory $\Tt$ and every DPWS $\upws$, 
 $$\taupws(\upwsrev{\Tt}(\upws)) = \upwsrev{\tautheo(\Tt)}(\taupws(\upws)).$$
\end{lemma}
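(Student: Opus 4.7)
The plan is to follow the pattern of the proof of Lemma \ref{lem:ubprev}, specialised to the two-valued setting. First I will establish the two-valued analogue of Lemma \ref{lem:mapval}, namely that for any $\Sigma'$-term $t$, any $\phi \in \dael$, any DPWS $\upws$, and any $\Sigma'$-structure $J$,
\[
\tauform(t,\phi)^{\taupws(\upws),J} = \phi^{\upws,J_t}.
\]
This follows immediately from Lemma \ref{lem:mapval} applied to the exact distributed belief pair $\ubp=(\upws,\upws)$, using that $\taubp((\upws,\upws)) = (\taupws(\upws),\taupws(\upws))$ and that the three-valued valuations on belief pairs collapse to the two-valued ones on exact pairs.

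Next I will unfold both sides. By Definitions \ref{def:beliefrevisionupws} and \ref{def:tau_Q},
\begin{align*}
\taupws(\upwsrev{\T}(\upws)) &= \taupws\bigl((\{I\mid (\T_A)^{\upws,I}=\ltrue\})_{A\in\A}\bigr)\\
&= \{\taustruc((I_A)_{A\in\A}) \mid (\T_A)^{\upws,I_A}=\ltrue \text{ for every } A\in\A\}.
\end{align*}
On the other side, unfolding the AEL revision operator on the theory $\tautheo(\T) = \bigcup_{A\in\A} \tauform(A,\T_A)$ and applying the two-valued counterpart of Lemma \ref{lem:mapval} yields
\[
\upwsrev{\tautheo(\T)}(\taupws(\upws)) = \{J \mid \phi^{\upws,J_A}=\ltrue \text{ for all } A\in\A \text{ and } \phi\in\T_A\}.
\]
The identification of these two sets then proceeds by invoking Lemma \ref{lem:J_A}, which gives $J_A = I_A$ whenever $J = \taustruc((I_A)_{A\in\A})$, so that the defining conditions on both sides line up under the correspondence $J \leftrightarrow (J_A)_{A\in\A}$.

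The step I expect to be the main obstacle is precisely this final set-theoretic matching: a priori the image of $\taustruc$ only contains $\Sigma'$-structures that are forced to take the dummy value $\dummy$ on $\Sigma_s$ symbols applied to non-agent arguments, whereas the AEL side ranges over all $\Sigma'$-structures agreeing with $I_o$ on $\Sigma_o$. The key observation resolving this is that every $K$-subformula produced by $\tauform$ is guarded by $\Apred$, so the value of $\tautheo(\T)$ at $J$ depends only on the agent-indexed slices $J_A$. Thus possible world structures are effectively identified with their behaviour on agent-indexed arguments, and Lemma \ref{lem:J_A} provides the bijection that closes the proof.
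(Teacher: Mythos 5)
Your proof is correct in substance and at the paper's own level of rigor, but it takes a different and more laborious route. The paper proves this lemma in one line: it applies Lemma \ref{lem:ubprev} to the exact pair $(\upws,\upws)$ and combines it with the identities $\ubprev{\T}(\upws,\upws)=(\upwsrev{\T}(\upws),\upwsrev{\T}(\upws))$ (and the analogous collapse for $\tautheo(\T)$) and $\taubp((\upws,\upws))=(\taupws(\upws),\taupws(\upws))$; no operator is unfolded at all. You instead specialise at the level of valuations, extracting a two-valued version of Lemma \ref{lem:mapval} from the exact-pair collapse, and then repeat the set-level unfolding from the proof of Lemma \ref{lem:ubprev} in the two-valued setting. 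Both routes rest on the same ingredients, but the paper's choice buys something concrete: the step you yourself single out as the main obstacle --- showing that every $J$ with $\tautheo(\T)^{\taupws(\upws),J}=\ltrue$ lies in the image of $\taustruc$ --- is exactly the step already absorbed into Lemma \ref{lem:ubprev} (the equality justified there ``by Definition \ref{def:tau_Q} and Lemma \ref{lem:J_A}''), so citing that lemma spares you from re-arguing it. Moreover, your proposed resolution of that step is not quite on target: the $\Apred$-guards do show that the value of $\tautheo(\T)$ at $J$ depends only on the slices $J_A$, but the inclusion that actually needs justification --- that every satisfying $J$ takes the value $\dummy$ (and empty $\Sigma_s$-relations) on non-agent-indexed arguments and hence equals $\taustruc((J_A)_{A\in\A})$ --- does not follow from slice-determinacy alone; it is the same tacit identification of structures with their agent-indexed slices that the paper makes inside the proof of Lemma \ref{lem:ubprev}, so you are no less rigorous than the paper, but you should either invoke that lemma directly or state the identification explicitly rather than present it as a consequence of the guards. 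Finally, your appeal to Lemma \ref{lem:mapval} at $(\upws,\upws)$ presupposes that $\upws$ is universally consistent; the paper's appeal to Lemma \ref{lem:ubprev} carries the same implicit restriction, so this caveat is shared rather than a defect specific to your argument.
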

\aproof{\begin{proof}
Follows from Lemma \ref{lem:ubprev} and the fact that $ D^*_\theory(\upws,\upws)=(D_\theory(\upws),D_\theory(\upws))$ for each $\upws$.
% 
% \begin{align*}
%   \taupws(\upwsrev{\Tt}(\upws)) & = \taupws(\ubprevL{\Tt}(\upws,\upws)) \\
%  & = \ubprevL{\Tt}(\taupws(\upws),\taupws(\upws)) \textnormal{ (by Lemma \ref{lem:ubprev})}\\
%  & = \upwsrev{\tautheo(\Tt)}(\taupws(\upws))
% \end{align*}
\end{proof}}

The mapping is faithful to the (universal) consistency of (distributed) possible world structures:

\begin{lemma}
\label{lem:cons}
  A DPWS $\upws$ is universally consistent iff $\taupws(\upws) \neq \emptyset$.
\end{lemma}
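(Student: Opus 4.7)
The plan is to prove both implications directly from the definitions of universal consistency and of \taupws, with essentially no extra machinery.

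For the forward direction, suppose $\upws$ is universally consistent, i.e., $\upws_A \neq \emptyset$ for every $A \in \A$. I would select, for each $A \in \A$, some $I_A \in \upws_A$, forming an indexed family $(I_A)_{A\in\A}$ with $I_A \in \upws_A$ for every $A$. By Definition \ref{def:tau_Q}, $\taustruc((I_A)_{A\in\A}) \in \taupws(\upws)$, so $\taupws(\upws) \neq \emptyset$, as required.

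For the converse, I would argue by contrapositive. Suppose $\upws$ is not universally consistent, so that there is some $A_0 \in \A$ with $\upws_{A_0} = \emptyset$. Then no indexed family $(I_A)_{A\in\A}$ can simultaneously satisfy $I_A \in \upws_A$ for every $A \in \A$, since no choice of $I_{A_0}$ is possible. By Definition \ref{def:tau_Q}, $\taupws(\upws)$ is precisely the set of structures of the form $\taustruc((I_A)_{A\in\A})$ for such families, hence $\taupws(\upws) = \emptyset$.

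There is no significant obstacle here: the lemma follows immediately from unpacking definitions, the only subtle point being that the forward direction appeals to a choice of representatives from the (possibly infinitely many) nonempty sets $\upws_A$, which is justified in the set-theoretic background we are implicitly assuming throughout the paper.
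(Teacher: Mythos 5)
Your proof is correct and follows the only natural route, namely unpacking Definition \ref{def:tau_Q} in both directions; the paper simply dismisses the lemma as trivial, so your argument is the explicit version of the paper's intended (unstated) reasoning.
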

\aproof{\begin{proof}
 Trivial.
\end{proof}}

The following lemma states that the restriction of $\taupws$ to universally consistent DPWSs is injective:

\begin{lemma}
\label{lem:inj}
 If $\upws$ and $\upws'$ are DPWSs such that $\upws$ is universally consistent and $\taupws(\upws) = \taupws(\upws')$, then $\upws = \upws'$.
\end{lemma}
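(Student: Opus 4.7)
The plan is to reduce the claim directly to Lemma \ref{DAEL:lem:transBP}, which tells us how to recover each agent's possible world structure $\upws_A$ from the AEL-translation $\taupws(\upws)$, provided that $\upws$ is universally consistent. The only subtlety is that Lemma \ref{DAEL:lem:transBP} requires universal consistency of the DPWS it is applied to, and we are only given this for $\upws$, not for $\upws'$.

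First, I would show that $\upws'$ must also be universally consistent. By Lemma \ref{lem:cons}, universal consistency of $\upws$ gives $\taupws(\upws) \neq \emptyset$. Since by assumption $\taupws(\upws') = \taupws(\upws)$, we also get $\taupws(\upws') \neq \emptyset$, so applying Lemma \ref{lem:cons} in the other direction yields that $\upws'$ is universally consistent.

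With both DPWSs now known to be universally consistent, I would apply Lemma \ref{DAEL:lem:transBP} to each of them: for every $A \in \A$,
\[
\upws_A = \{J_A \mid J \in \taupws(\upws)\} = \{J_A \mid J \in \taupws(\upws')\} = \upws'_A,
\]
where the middle equality uses the hypothesis $\taupws(\upws) = \taupws(\upws')$. Since this holds for every agent $A$, we conclude $\upws = \upws'$.

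The only real obstacle here is the asymmetry in the hypothesis (only $\upws$ is assumed universally consistent), but this is dissolved immediately by Lemma \ref{lem:cons}, so the proof is very short and mostly a bookkeeping exercise on top of the recovery lemma \ref{DAEL:lem:transBP}.
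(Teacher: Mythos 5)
Your proof is correct, and there is no circularity in invoking Lemma \ref{DAEL:lem:transBP}, since that lemma is proved earlier and relies only on Lemma \ref{lem:J_A}. Your route differs slightly from the paper's: the paper does not cite Lemma \ref{DAEL:lem:transBP} at all, but instead argues element-wise --- after establishing (exactly as you do, via Lemma \ref{lem:cons}) that $\upws'$ is also universally consistent, it notes that by symmetry it suffices to show $\upws_A \subseteq \upws'_A$, picks $I_A \in \upws_A$, uses universal consistency of $\upws$ to extend it to an indexed family, and observes that $\taustruc$ of that family lies in $\taupws(\upws) = \taupws(\upws')$, whence $I_A \in \upws'_A$ (implicitly recovering the component via Lemma \ref{lem:J_A}). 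Your version outsources precisely this recovery step to Lemma \ref{DAEL:lem:transBP}, which gives both inclusions simultaneously as the identity $\upws_A = \{J_A \mid J \in \taupws(\upws)\}$; this is a cleaner decomposition that avoids redoing work already done in the recovery lemma, at the cost of requiring universal consistency of \emph{both} DPWSs (which, as you correctly note, is free from Lemma \ref{lem:cons}). The underlying mechanism --- recoverability of each agent's possible world structure from the translated structure together with universal consistency --- is the same in both arguments.
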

\aproof{\begin{proof}
 By Lemma \ref{lem:cons}, $\upws'$ is universally consistent too. By symmetry, it is enough to show that $\upws_A \subseteq \upws'_A$ for all $A \in \A$. 
 
 So let $I_A \in \upws_A$. Given that $\upws$ is universally consistent, we can choose an $I_B \in \upws_B$ for every $B \in \A \setminus \{A\}$. Then $\taustruc((I_A)_{A\in\A}) \in \taupws(\upws) = \taupws(\upws')$, so $I_A \in \upws'_A$, as required.
\end{proof}}

The following lemma makes an analogous statement of injectivity for $\taubp$:

\begin{lemma}
\label{lem:injB}
 If $\ubp$ and $\ubp'$ are universal belief pairs such that $\ubp$ is universally consistent and $\taubp(\ubp) = \taubp(\ubp')$, then $\ubp = \ubp'$.
\end{lemma}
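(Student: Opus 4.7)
The plan is to reduce this statement to Lemma \ref{lem:inj} applied componentwise to the conservative and liberal bounds of the distributed belief pair.

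First, I would unpack the definition of $\taubp$: by Definition \ref{def:tau_B}, $\taubp(\ubp) = (\taupws(\ubp^c), \taupws(\ubp^l))$ and similarly for $\ubp'$. Hence the assumption $\taubp(\ubp) = \taubp(\ubp')$ yields the two componentwise equalities $\taupws(\ubp^c) = \taupws(\ubp'^c)$ and $\taupws(\ubp^l) = \taupws(\ubp'^l)$.

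Next, I need to verify that both $\ubp^c$ and $\ubp^l$ are universally consistent DPWSs, so that Lemma \ref{lem:inj} applies. By definition, $\ubp$ being universally consistent means exactly that $\ubp^l$ is universally consistent. Since $\ubp^l \subseteq \ubp^c$ (as noted in the remark immediately following the definition of universal consistency of distributed belief pairs), the non-emptiness of each $\ubp^l_A$ transfers to $\ubp^c_A$, so $\ubp^c$ is universally consistent as well.

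Finally, applying Lemma \ref{lem:inj} to each component gives $\ubp^c = \ubp'^c$ and $\ubp^l = \ubp'^l$, and hence $\ubp = \ubp'$. There is no real obstacle here; the work has already been done in Lemma \ref{lem:inj}, and the only thing to check carefully is the propagation of universal consistency from the liberal to the conservative bound, which is immediate from the inclusion $\ubp^l \subseteq \ubp^c$.
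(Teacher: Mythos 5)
Your proof is correct and matches the paper's intent exactly: the paper simply states that the lemma ``follows trivially from Lemma \ref{lem:inj}'', and your argument spells out precisely that reduction, including the only point needing care, namely that universal consistency of $\ubp^l$ propagates to $\ubp^c$ via $\ubp^l \subseteq \ubp^c$ so that Lemma \ref{lem:inj} applies to both components.
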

\aproof{\begin{proof}
 Follows trivially from Lemma \ref{lem:inj}.
\end{proof}}
 
The mapping is faithful to the knowledge order:

\begin{lemma}
\label{lem:order}
 If $\upws \leq_K \upws'$, then $\taupws(\upws) \leq_K \taupws(\upws')$. 
\end{lemma}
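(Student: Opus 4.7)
The plan is to unfold the definitions and observe that the result is almost immediate. Recall that for possible world structures the knowledge order is the reverse of inclusion: $\pws_1 \leq_K \pws_2$ iff $\pws_2 \subseteq \pws_1$. Lifted pointwise to DPWSs, $\upws \leq_K \upws'$ therefore means $\upws'_A \subseteq \upws_A$ for every $A \in \A$. For possible world structures of $\Sigma'$-structures (which is what $\taupws(\upws)$ is), the same relationship holds: $\taupws(\upws) \leq_K \taupws(\upws')$ iff $\taupws(\upws') \subseteq \taupws(\upws)$. So the task reduces to proving the set inclusion $\taupws(\upws') \subseteq \taupws(\upws)$.

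To show this inclusion, I would take an arbitrary $J \in \taupws(\upws')$. By Definition \ref{def:tau_Q}, there exists an indexed family $(I_A)_{A\in\A}$ with $I_A \in \upws'_A$ for every $A \in \A$, such that $J = \taustruc((I_A)_{A\in\A})$. From the hypothesis $\upws \leq_K \upws'$, i.e.\ $\upws'_A \subseteq \upws_A$ for every $A$, we get $I_A \in \upws_A$ for every $A$. Applying Definition \ref{def:tau_Q} again to $\upws$, this witnesses $J \in \taupws(\upws)$. Hence $\taupws(\upws') \subseteq \taupws(\upws)$, which is exactly $\taupws(\upws) \leq_K \taupws(\upws')$.

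There is no real obstacle here: the lemma is essentially a monotonicity statement for the pointwise construction $\taupws$, which is built as an image of a family-indexed product of the $\upws_A$'s under $\taustruc$. Since enlarging each coordinate of such a product enlarges the image, monotonicity is automatic. The only subtle point worth double-checking is the direction of the order on possible world structures, which flips between the lattice order $\leq_K$ and set-theoretic inclusion; I would make this flip explicit once at the start of the proof to avoid confusion.
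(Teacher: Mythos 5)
Your proof is correct and follows essentially the same route as the paper's: unfold the hypothesis to the pointwise reverse inclusions $\upws'_A \subseteq \upws_A$, take $J \in \taupws(\upws')$ with a witnessing family $(I_A)_{A\in\A}$, and observe that the same family witnesses $J \in \taupws(\upws)$, giving $\taupws(\upws') \subseteq \taupws(\upws)$, i.e.\ $\taupws(\upws) \leq_K \taupws(\upws')$. Your explicit handling of the order-versus-inclusion flip is a welcome clarification of the paper's terser argument (which phrases the same step via $J_A \in \upws'_A$ using Lemma \ref{lem:J_A}).
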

\aproof{\begin{proof}
 Let $J \in \taupws(\upws')$, i.e. for every $A \in \A$, $J_A \in \upws'_A$, i.e. $J_A \in \upws$. So $J \in \taupws(\upws)$.
\end{proof}}

The following lemma states that the mapping is faithful to $\leq_K$-least upper bounds and greatest lower bounds:

\begin{lemma}
\label{lem:klub}
 For a set $\mc{S}$ of DPWSs, $\taupws(\lub_{\leq_K}(\mc{S})) = lub_{\leq_K}(\taupws(\mc{S}))$ and $\taupws(\glb_{\leq_K}(\mc{S})) = \glb_{\leq_K}(\taupws(\mc{S}))$.
\end{lemma}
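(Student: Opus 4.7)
The plan is to unfold $\taupws$ and reduce both equalities to set-theoretic identities, using the explicit formulas for $\lub_{\leq_K}$ and $\glb_{\leq_K}$ in the DPWS and PWS lattices. Recall that $\leq_K$ on PWSs is reverse set inclusion, so for a set $\mc{S}$ of DPWSs
\[
\lub_{\leq_K}(\mc{S}) = \Bigl(\bigcap_{\upws\in\mc{S}}\upws_A\Bigr)_{A\in\A},\qquad
\glb_{\leq_K}(\mc{S}) = \Bigl(\bigcup_{\upws\in\mc{S}}\upws_A\Bigr)_{A\in\A},
\]
(the first identity is already noted just before Definition~\ref{def:conlibvaluation}), while on the PWS side $\lub_{\leq_K}$ is intersection and $\glb_{\leq_K}$ is union of sets of $\Sigma'$-structures.

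For the $\lub$ identity, unfolding $\taupws$ on both sides reduces the claim to
\[
\{\taustruc((I_A)_{A\in\A}) \mid I_A \in \textstyle\bigcap_{\upws\in\mc{S}}\upws_A \text{ for every }A\}
\;=\; \bigcap_{\upws\in\mc{S}}\taupws(\upws).
\]
The inclusion $\subseteq$ is immediate: a family whose components lie in the pointwise intersection lies in every individual $\upws_A$, so its $\taustruc$-image lies in every $\taupws(\upws)$. For $\supseteq$, suppose $J$ lies in $\taupws(\upws)$ for every $\upws\in\mc{S}$; then for each $\upws$ there is a witnessing family $(I_A^{\upws})_A$ with $J = \taustruc((I_A^{\upws})_A)$ and $I_A^{\upws}\in\upws_A$. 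The key step is Lemma~\ref{lem:J_A}, which forces $I_A^{\upws} = J_A$ independently of $\upws$. Hence $J_A$ lies in $\upws_A$ for \emph{every} $\upws\in\mc{S}$, so $J_A\in\bigcap_{\upws}\upws_A$ and $J=\taustruc((J_A)_A)$ belongs to $\taupws(\lub_{\leq_K}(\mc{S}))$.

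For the $\glb$ identity the same unfolding reduces the claim to the analogous equation with $\cap$ replaced by $\cup$. One direction is again immediate: any $J\in\taupws(\upws)$ for some $\upws\in\mc{S}$ has a witnessing family lying in $\upws_A\subseteq\bigcup_{\upws'}\upws'_A$ for each $A$, placing $J$ in $\taupws(\glb_{\leq_K}(\mc{S}))$. For the reverse direction, given $J=\taustruc((I_A)_A)$ with $I_A\in\upws^A_A$ for some (possibly different) $\upws^A\in\mc{S}$ per agent, one must produce a single $\upws^\star\in\mc{S}$ such that $I_A\in\upws^\star_A$ for every $A\in\A$. This is the main obstacle I expect: it is not automatic from mere membership in the union, and handling it relies on extra structure of $\mc{S}$. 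In the intended AFT uses the lemma is applied to $\leq_K$-directed families (in particular the ascending chains that appear in Kripke--Kleene and well-founded constructions), and there one extracts $\upws^\star$ as a $\leq_K$-upper bound of the finite collection $\{\upws^A\}_{A\in\A}$ inside $\mc{S}$, after which Lemma~\ref{lem:J_A} (again used to identify $I_A$ with $J_A$) concludes the argument exactly as in the $\lub$ case.
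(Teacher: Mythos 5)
Your treatment of the $\lub$ identity is correct and is in substance the paper's own proof: the paper shows that $\taupws(\lub_{\leq_K}(\mc{S}))$ is an upper bound of $\taupws(\mc{S})$ via the monotonicity Lemma~\ref{lem:order}, and then shows it lies below every upper bound by exactly the computation you perform, using Lemma~\ref{lem:J_A} to identify the witnessing families with $(J_A)_{A\in\A}$; after unfolding definitions both arguments establish the same set identity $\taupws\bigl((\bigcap_{\upws\in\mc{S}}\upws_A)_{A\in\A}\bigr)=\bigcap_{\upws\in\mc{S}}\taupws(\upws)$ (both, incidentally, tacitly assume $\mc{S}\neq\emptyset$).

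For the $\glb$ identity the paper offers no argument at all (``the second one can be proven similarly''), and here your caution is justified rather than a defect: the obstruction you name is real, and in fact the unrestricted identity is false. With two agents, take $\mc{S}$ to consist of the two DPWSs whose components are all $\{I_1\}$, respectively all $\{I_2\}$, where $I_1,I_2$ differ on a symbol of $\Sigma_s$; then $\taupws(\glb_{\leq_K}(\mc{S}))$ contains the mixed structure $\taustruc((I_1,I_2))$, which by Lemma~\ref{lem:J_A} lies in neither $\taupws$-image, while $\glb_{\leq_K}(\taupws(\mc{S}))$ is just the union of the two images; so the dual of the paper's argument does not go through. Your repaired version---proving the identity only for suitably directed $\mc{S}$---has the right shape and covers the only place the $\glb$ half is actually needed, namely Lemma~\ref{lem:plub} applied to the $\leq_p$-increasing sequences arising in the limit stages of Theorem~\ref{thm:mapping:perm}. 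Two small corrections to your sketch: the member of $\mc{S}$ you must extract is a $\leq_K$-\emph{lower} bound of the witnesses $\upws^A$ (i.e.\ a pointwise $\subseteq$-superset of them), not an upper bound; and the extraction uses that $\A$ is finite, since with infinitely many agents even a chain can have its witnesses spread cofinally so that no single member of $\mc{S}$ works. In short: your first half coincides with the paper's proof; your second half does not prove the statement as written, but the gap there lies in the statement and the paper's one-line dismissal, not in your reasoning.
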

\aproof{\begin{proof}
 We prove the first equality; the second one can be proven similarly.
 
 First we show that $\taupws(\lub(\mc{S}))$ is an upper bound of $\taupws(\mc{S})$: Let $Q \in \taupws(\mc{S})$. Then there is a DPWS $\upws \in \mc{S}$ such that $Q = \taupws(\upws)$. Since $\upws \leq_K \lub \mc{S}$, Lemma \ref{lem:order} implies that $Q \leq_K \taupws(\lub(\mc{S}))$.
  
  Now we show that for each upper bound $Q'$ of $\taupws(\mc{S})$, $\taupws(\lub(\mc{S})) \leq_K Q'$: Suppose that for every $Q \in \taupws(\mc{S})$, $Q \leq_K Q'$, i.e. $Q' \subseteq Q$. We need to show that $\taupws(\lub(\mc{S})) \leq_K Q'$, i.e. that $Q' \subseteq \taupws(\lub(\mc{S}))$. So let $J \in Q'$. Let $\upws \in \mc{S}$. Then $\taupws(\upws) \in \taupws(\mc{S})$, so $Q' \subseteq \taupws(\upws)$. Hence $J \in \taupws(\upws)$, i.e. $J \mid _A \in \upws_A$ for each $A\in\A$. Given that $\upws$ was an arbitrary element of $\mc{S}$, we have that $J \mid _A \in \bigcap \{Q \mid \textnormal{for some } \upws\in\mc{S}, Q=\upws_A\}$. So $J \in \taupws((\bigcap \{Q \mid \textnormal{for some } \upws\in\mc{S}, Q=\upws_A\})_{A\in\A}) = \taupws(\lub(\mc{S}))$, as required.
\end{proof}}

The mapping is faithful to $\leq_p$-least upper bounds:

\begin{lemma}
\label{lem:plub}
 For a set $\mc{S}$ of distributed belief pairs, $\taubp(\lub_{\leq_p}(\mc{S})) = \lub_{\leq_p}(\taubp(\mc{S}))$. 
\end{lemma}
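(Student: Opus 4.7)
The plan is to unfold the definition of $\lub_{\leq_p}$ on both sides and then reduce to Lemma \ref{lem:klub}. Recall that on the approximation lattice of distributed belief pairs, $\ubp^1 \leq_p \ubp^2$ holds iff, for each agent $A$, $(\ubp^1)^c_A \leq_K (\ubp^2)^c_A$ and $(\ubp^2)^l_A \leq_K (\ubp^1)^l_A$. Hence the $\leq_p$-least upper bound of a set $\mc{S}$ of distributed belief pairs is computed componentwise: its conservative bound is the $\leq_K$-lub of the conservative bounds of members of $\mc{S}$, and its liberal bound is the $\leq_K$-glb of the liberal bounds. The same componentwise description applies for $\leq_p$-lubs of AEL belief pairs.

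First I would write
\[
\lub_{\leq_p}(\mc{S}) = \bigl(\lub_{\leq_K}\{\ubp^c \mid \ubp \in \mc{S}\},\ \glb_{\leq_K}\{\ubp^l \mid \ubp \in \mc{S}\}\bigr),
\]
and then apply the definition of $\taubp$ together with Lemma \ref{lem:klub} to each component, obtaining
\[
\taubp(\lub_{\leq_p}(\mc{S})) = \bigl(\lub_{\leq_K}\{\taupws(\ubp^c) \mid \ubp \in \mc{S}\},\ \glb_{\leq_K}\{\taupws(\ubp^l) \mid \ubp \in \mc{S}\}\bigr).
\]

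On the other hand, unfolding $\lub_{\leq_p}$ on the image $\taubp(\mc{S})$ in the AEL bilattice and again using the componentwise description yields
\[
\lub_{\leq_p}(\taubp(\mc{S})) = \bigl(\lub_{\leq_K}\{\taupws(\ubp^c) \mid \ubp \in \mc{S}\},\ \glb_{\leq_K}\{\taupws(\ubp^l) \mid \ubp \in \mc{S}\}\bigr),
\]
which is exactly the same tuple. Thus the two sides coincide. The only non-trivial ingredient is the preservation of $\leq_K$-lubs and $\leq_K$-glbs by $\taupws$, which is Lemma \ref{lem:klub}; everything else is a straightforward unfolding of definitions, so no real obstacle is expected.
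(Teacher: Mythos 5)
Your proof is correct and follows essentially the same route as the paper: the paper's own argument is just a terser version, noting that $\leq_p$ decomposes componentwise into $\leq_K$ on conservative bounds and $\geq_K$ on liberal bounds, so the claim reduces immediately to Lemma \ref{lem:klub}, exactly as you spell out explicitly.
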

\aproof{\begin{proof}
This follows immediately from Lemma \ref{lem:klub} since 
\begin{align*}
 (\mc{P},\mc{S}) \leqp (\mc{P}',\mc{S}')
\end{align*}
if and only if
\[\mc{P}\leq_K \mc{P}'\text{ and } \mc{S}\geq_K \mc{S}'.\]
\end{proof}}

The following states that the stable revision of an element of the image of $\taupws$ is itself in the image of $\taupws$:

\begin{lemma}
\label{lem:image}
 For any DPWS $\upws$, there is a DPWS $\upws'$ such that $\taupws(\upws') = \upwsrevst{\tautheo(\T)}(\taupws(\upws))$.
\end{lemma}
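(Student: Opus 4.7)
The natural candidate for $\upws'$ is $\upws' := \upwsstrevision_{\T}(\upws)$, the stable revision of $\upws$ computed in dAEL, and my plan is to establish the stronger identity $\taupws(\upwsstrevision_{\T}(\upws)) = \upwsrevst{\tautheo(\T)}(\taupws(\upws))$. Both sides are defined as least fixpoints of monotone operators on their respective lattices: $\upwsstrevision_{\T}(\upws) = \lfp(\lambda P.\, \ubprevisionL((P,\upws)))$ in the DPWS lattice, and analogously on the AEL side with $\ubprevL{\tautheo(\T)}(\cdot,\taupws(\upws))$ in the PWS lattice. Each least fixpoint is reached by transfinite iteration from the bottom element.

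I would define the parallel iteration sequences $P^d_0 := \bot_{\text{DPWS}}$, $P^d_{\alpha+1} := \ubprevisionL((P^d_\alpha, \upws))$, $P^d_\lambda := \lub_{\leq_K}\{P^d_\alpha\}_{\alpha<\lambda}$, and correspondingly $P^a_\alpha$ using $\ubprevL{\tautheo(\T)}(\cdot,\taupws(\upws))$, and prove by transfinite induction that $\taupws(P^d_\alpha) = P^a_\alpha$ for every ordinal $\alpha$. The base case is a direct check that $\taupws(\bot_{\text{DPWS}}) = \bot_{\text{PWS}}$, since every $\Sigma'$-structure agreeing with $I_o$ decomposes into agent-indexed $\Sigma$-structures agreeing with $I_o$, and vice versa. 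The limit case is immediate from Lemma \ref{lem:klub}, which says $\taupws$ commutes with $\leq_K$-least upper bounds. The successor case is the crucial step and follows from Lemma \ref{lem:ubprev}: by the inductive hypothesis $\taubp((P^d_\alpha,\upws)) = (\taupws(P^d_\alpha), \taupws(\upws)) = (P^a_\alpha, \taupws(\upws))$, so taking first components of the identity $\taubp(\ubprevision((P^d_\alpha,\upws))) = \ubprev{\tautheo(\T)}(\taubp((P^d_\alpha,\upws)))$ yields $\taupws(P^d_{\alpha+1}) = P^a_{\alpha+1}$ as needed. Passing to the limit along the iteration gives the claimed equality.

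The hard part will be discharging the universal-consistency hypothesis of Lemma \ref{lem:ubprev} at every successor step. When $\upws$ is universally consistent the belief pair $(P^d_\alpha, \upws)$ inherits universal consistency from its liberal bound $\upws$, so the lemma applies directly. When $\upws$ is not universally consistent, or when some $\T_A$ is inconsistent so that a component of $P^d_\alpha$ collapses to $\emptyset$, the hypothesis fails and Lemma \ref{lem:ubprev} cannot be invoked verbatim. My plan is to handle these degenerate cases separately: observe that once a component of $P^d_\alpha$ or $\upws$ becomes empty, the resulting $\taupws$-image is empty and stays empty, and that the AEL iteration collapses in parallel (since the corresponding translated sentences become unsatisfiable), so the equality $\taupws(P^d_\alpha) = P^a_\alpha$ continues to hold by a direct comparison of the defining conditions rather than via the full correspondence lemma.
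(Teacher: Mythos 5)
Your induction machinery (base element, successor step via Lemma \ref{lem:ubprev}, limit step via Lemma \ref{lem:klub}) is the same as in the paper's proof, but you set out to prove a strictly stronger statement than Lemma \ref{lem:image} asserts, namely that the witness is exactly the dAEL stable revision, i.e.\ $\taupws(\upwsstrevision(\upws)) = \upwsrevst{\tautheo(\T)}(\taupws(\upws))$. That identity is essentially Lemma \ref{lem:ubprevst} (which the paper derives \emph{from} Lemma \ref{lem:image} by a least-fixpoint comparison, and only under an additional consistency hypothesis), and it is false for arbitrary $\upws$ --- precisely in the degenerate cases your last paragraph tries to patch. Concretely, take two agents with $\upws_A=\emptyset$, $\upws_B=\bot$, $\T_A=\{\lnot K_B q \Rightarrow s\}$ and $\T_B=\emptyset$. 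On the dAEL side $K_B q$ is false with respect to the liberal bound $\upws_B$, so the stable revision gives agent $A$ the component $\{I\mid s\in I\}$ and agent $B$ the component $\bot$. On the AEL side the liberal bound is $\taupws(\upws)=\emptyset$, so no modal subformula is ever false, the translated rule stays non-false everywhere, and $\upwsrevst{\tautheo(\T)}(\taupws(\upws))=\bot$, which the paper identifies with $\taupws((\bot)_{A\in\A})$ --- the image of a \emph{different} DPWS. So the existential statement of Lemma \ref{lem:image} survives, but your pointwise correspondence $\taupws(P^d_\alpha)=P^a_\alpha$ already fails at $\alpha=1$, and so does your target identity.

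The proposed repair of the degenerate branch is the genuine gap: it is simply not true that once a component of $\upws$ or $P^d_\alpha$ is empty ``the AEL iteration collapses in parallel'' because the translated sentences become unsatisfiable --- in the example above (and already for empty theories) the AEL iterates remain $\bot$, not $\emptyset$. The real source of divergence is the phenomenon the paper exploits in Section \ref{sec:faulty}: when one agent's component is empty while another's is not, dAEL confines the trivialisation to that agent, whereas on the AEL side the single possible world structure $\taupws(\cdot)$ is globally empty and trivialises \emph{every} modal subformula, so the two valuations genuinely come apart rather than both collapsing. The paper's proof sidesteps all of this by claiming, at each stage, only that the iterate of $\upwsrevL{\tautheo(\T)}(\cdot,\taupws(\upws))$ has \emph{some} $\taupws$-preimage (obtained by applying $\ubprevisionL(\cdot,\upws)$ to a preimage of the previous stage), never that this preimage is the dAEL stable iterate; the identification of the two stable revisions is postponed to Lemma \ref{lem:ubprevst}, whose empty case is handled by showing that no universally consistent DPWS can be a fixpoint, not by tracking the two iterations in parallel. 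On the branch where $\upws$ is universally consistent your argument matches the paper's (which invokes Lemma \ref{lem:ubprev} there with the same level of care as you do, and your base-case identification of $\taupws((\bot)_{A\in\A})$ with $\bot$ is the same one the paper makes); the non-universally-consistent branch, and the stronger claim you hang on it, is where the proposal breaks.
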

\aproof{\begin{proof}
We know that $\upwsrevst{\tautheo(\T)}(\taupws(\upws)) = \lfp \upwsrevL{\tautheo(\T)}(\cdot,\taupws(\upws))$. By induction, it is enough to show that for each ordinal number $\alpha$, there is a DPWS $\upws'$ such that $\taupws(\upws') = \upwsrevL{\tautheo(\T)}(\cdot,\taupws(\upws))^\alpha(\bot)$. 

For $\alpha = 0$, let $\upws':= (\bot)_{A\in\A}$. Then $\taupws(\upws') = (\bot) = \upwsrevL{\tautheo(\T)}(\cdot,\taupws(\upws))^0(\bot)$.

Suppose the result holds for $\alpha$, i.e.\ there is a DPWS $\upws'$ such that $\taupws(\upws') = \upwsrevL{\tautheo(\T)}(\cdot,\taupws(\upws))^\alpha(\bot)$. By Lemma \ref{lem:ubprev}, $\taupws(\upwsrevL{\T}(\upws',\upws)) = \upwsrevL{\tautheo(\T)}(\taupws(\upws'),\taupws(\upws)) = \upwsrevL{\tautheo(\T)}(\cdot,\taupws(\upws))^{\alpha+1}(\bot)$.

Let $\lambda$ be a limit ordinal such that the result holds for every $\alpha < \lambda$. Define $\mc{S} := \{\upws' \mid \taupws(\upws') = \upwsrevL{\tautheo(\T)}(\cdot,\taupws(\upws))^\alpha(\bot) \textnormal{ for some } \alpha < \lambda\}$. By Lemma \ref{lem:klub}, $\taupws(\lub(\mc{S})) = \lub(\taupws(\mc{S})) = \upwsrevL{\tautheo(\T)}(\cdot,\taupws(\upws))^\lambda(\bot)$. 
\end{proof}}

The following lemma states that the mapping maps the stable dAEL knowledge revision operator $\upwsrevst{\T}$ to the stable AEL knowledge revision operator $S_{\mc{D}_T^*}$:

\begin{lemma}
\label{lem:ubprevst}
  For every universally consistent distributed theory $\Tt$ and every DPWS $\upws$, 
  $$\upwsrevst{\tautheo(\T)}(\taupws(\upws)) = \taupws(\upwsrevst{\T}(\upws)).$$
\end{lemma}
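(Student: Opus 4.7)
The plan is to establish the equality by transfinite induction, tracking how the iterative construction of the least fixpoint that defines the stable operator is transported between the two sides by $\taupws$. Concretely, unfold both sides as $\upwsrevst{\T}(\upws)=\lfp(P \mapsto \ubprevL{\T}(P,\upws))$ and $\upwsrevst{\tautheo(\T)}(\taupws(\upws))=\lfp(Q \mapsto \ubprevL{\tautheo(\T)}(Q,\taupws(\upws)))$, each constructed from $\bot$ by iterating the respective operator through all ordinals. Let $(x_\alpha)_\alpha$ denote the dAEL iterates and $(y_\alpha)_\alpha$ denote the AEL iterates. The goal reduces to proving $\taupws(x_\alpha)=y_\alpha$ for every ordinal $\alpha$.

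For the base case, I would observe that $\bot_{\text{dAEL}}$ is the DPWS assigning to each agent the set of \emph{all} $\Sigma$-structures agreeing with $I_o$ on $\Sigma_o$. Since every $\Sigma'$-structure $J$ compatible with $I_o$ arises as $\taustruc((J_A)_{A\in\A})$ for some indexed family of $\Sigma$-structures (each $J_A$ being unrestricted), we obtain $\taupws(\bot_{\text{dAEL}})=\bot_{\text{AEL}}$. For the successor step, the key tool is Lemma \ref{lem:ubprev}: applied to the belief pair $(x_\alpha,\upws)$, it yields $\taubp(\ubprev{\T}(x_\alpha,\upws)) = \ubprev{\tautheo(\T)}(\taubp(x_\alpha,\upws))$; extracting the first (conservative) component gives $\taupws(x_{\alpha+1})=\taupws(\ubprevL{\T}(x_\alpha,\upws))=\ubprevL{\tautheo(\T)}(\taupws(x_\alpha),\taupws(\upws))=y_{\alpha+1}$, using the induction hypothesis to rewrite $\taupws(x_\alpha)$ as $y_\alpha$. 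For limit ordinals $\lambda$, I would invoke Lemma \ref{lem:klub}, which tells us that $\taupws$ commutes with $\leq_K$-least upper bounds, and hence $\taupws(x_\lambda)=\taupws(\lub_{\leqk}\{x_\alpha \mid \alpha<\lambda\})=\lub_{\leqk}\{\taupws(x_\alpha)\mid \alpha<\lambda\}=\lub_{\leqk}\{y_\alpha\mid \alpha<\lambda\}=y_\lambda$ by the induction hypothesis.

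Since both iterations eventually stabilise at their respective least fixpoints, and the iterates correspond under $\taupws$ at every stage (including the stabilisation ordinal), the conclusion $\taupws(\upwsrevst{\T}(\upws))=\upwsrevst{\tautheo(\T)}(\taupws(\upws))$ follows.

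The main obstacle I anticipate is the bookkeeping around consistency hypotheses: Lemma \ref{lem:ubprev} requires the distributed belief pair $(x_\alpha,\upws)$ to be universally consistent, which amounts to $\upws$ being universally consistent (and is preserved upward by the iteration since $x_\alpha \leq_K \upws$ throughout). In the intended application, this will be guaranteed by the assumption that $\T$ is universally consistent under the semantics for which the lemma is being used, since $\upws$ will be (the liberal bound of) a model of $\T$. One must also verify that the $(x_\alpha,\upws)$ remain consistent distributed belief pairs throughout the iteration, which follows from the fact that $\ubprev{\T}$ is a symmetric approximator and hence preserves consistency on $L^c$, ensuring the stable iteration never leaves the consistent region where Lemma \ref{lem:ubprev} applies.
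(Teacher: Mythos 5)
Your proof takes a genuinely different route from the paper's. The paper does not attempt a stage-by-stage correspondence of the two stable iterations for this lemma: it sets $Q = \upwsrevst{\tautheo(\T)}(\taupws(\upws))$ and splits on whether $Q$ is empty. If $Q=\emptyset$ it shows that no universally consistent DPWS can be a fixpoint of $\ubprevL{\T}(\cdot,\upws)$, so the dAEL least fixpoint is not universally consistent and its $\taupws$-image is $\emptyset=Q$; if $Q\neq\emptyset$ it invokes Lemma \ref{lem:image} (every iterate of the AEL-side operator has a $\taupws$-preimage) to pull $Q$ back to a dAEL fixpoint, and then identifies that preimage with the dAEL least fixpoint by a sandwich argument using Lemmas \ref{lem:order}, \ref{lem:inj} and \ref{lem:cons}. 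The point of this organization is that Lemma \ref{lem:ubprev} is only ever applied at pairs whose conservative component is known to be universally consistent.

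That is where your argument has a genuine gap. Your successor step applies Lemma \ref{lem:ubprev} to $(x_\alpha,\upws)$, but that lemma (via Lemma \ref{lem:mapval} and Lemma \ref{DAEL:lem:transBP}) needs the possible world structures involved to be universally consistent, and the conservative iterates $x_\alpha$ can lose universal consistency: if some agent's theory becomes inconsistent relative to the liberal bound, that agent's component of $x_\alpha$ becomes empty --- precisely the inconsistency-containment scenario this whole section is about. Your consistency justification does not repair this: consistency of a belief pair in the AFT sense ($x\leq y$) is a different property from universal consistency (nonemptiness of each agent's set), so the fact that $\ubprev{\T}$ is a symmetric approximator preserving $L^c$ is beside the point; moreover $x_\alpha \leq_K \upws$ is not true in general (it would require $\upws$ to be a prefixpoint of $\ubprevL{\T}(\cdot,\upws)$). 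Once some $x_\alpha$ has an empty component, $\taupws$ collapses it to the inconsistent possible world structure and the cited lemmas no longer license $\taupws(x_{\alpha+1})=y_{\alpha+1}$. Your strategy can be salvaged --- at the first such stage the images of both chains equal $\emptyset$, and since both iteration sequences are $\leq_K$-increasing they remain $\emptyset$ from then on, so the limits still correspond --- but that extra case analysis, which is essentially the paper's Case 1, is missing from your write-up, and it is the real content of the lemma. (Your identification of $\taupws$ applied to the bottom DPWS with the AEL bottom, and the reliance on $\upws$ itself being universally consistent, are glosses that the paper's own appendix also makes, so I do not count them against you.)
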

\aproof{\begin{proof}
Let $Q$ denote $\upwsrevst{\tautheo(\T)}(\taupws(\upws))$. 

First suppose $Q = \emptyset$. We need to show that $\taupws(\upwsrevst{\T}(\upws)) = \emptyset$, i.e. that $\upwsrevst{\T}(\upws) = \lfp(\upwsrevL{\T}(\cdot,\upws))$ is not universally consistent. For this it is enough to show that every universally consistent DPWS is not a fixpoint of $\upwsrevL{\T}(\cdot,\upws)$. So suppose $\upws'$ is universally consistent. Then $\taupws(\upws') \neq \emptyset$, i.e. $\taupws(\upws') <_K Q$. Since $Q$ is the least fixpoint of $\upwsrevL{\tautheo(\T)}(\cdot,\taupws(\upws))$, $\upwsrevL{\tautheo(\T)}(\taupws(\upws'),\taupws(\upws)) \neq \taupws(\upws')$. So by Lemma \ref{lem:ubprev}, $\taupws(\upwsrevL{\T}(\upws',\upws)) \neq \taupws(\upws')$, i.e.\ $\upwsrevL{\T}(\upws',\upws) \neq \upws'$, as required.

Now suppose $Q \neq \emptyset$. By Lemma \ref{lem:image}, there is a DPWS $\upws'$ such that $\taupws(\upws') = Q$. Note that by Lemma \ref{lem:cons}, $\upws$ is universally consistent. $Q = \taupws(\upws')$ is a fixpoint of $\upwsrevL{\tautheo(\T)}(\cdot,\taupws(\upws))$, i.e.\ $\upwsrevL{\tautheo(\T)}(\taupws(\upws'),\taupws(\upws)) \neq \taupws(\upws')$. By Lemma \ref{lem:ubprev}, $\taupws(\upwsrevL{\T}(\upws',\upws)) = \taupws(\upws')$. By Lemma \ref{lem:inj}, $\upwsrevL{\T}(\upws',\upws) = \upws'$, i.e.\ $\upws'$ is a fixpoint of $\upwsrevL{\T}(\cdot,\upws)$. Let $\upws''$ denote the least fixpoint of $\upwsrevL{\T}(\cdot,\upws)$. Then $\upws'' \leq_K \upws'$, so by Lemma \ref{lem:order}, $\taupws(\upws'') \leq_K \taupws(\upws')$. Additionally, $\upwsrevL{\T}(\upws'',\upws) = \upws''$, so $\taupws(\upwsrevL{\T}(\upws'',\upws)) = \taupws(\upws'')$, so by Lemma \ref{lem:ubprev}, $\taupws(\upws'')$ is a fixpoint of $\upwsrevL{\tautheo(\T)}(\cdot,\taupws(\upws))$. Since $Q=\taupws(\upws')$ is the least fixpoint of $\upwsrevL{\tautheo(\T)}(\cdot,\taupws(\upws))$, $\taupws(\upws') \leq_K \taupws(\upws'')$. Combining the two inequalities, we get $\taupws(\upws') = \taupws(\upws'')$, so by Lemma \ref{lem:inj}, $\upws' = \upws''$. So $\upws' = \lfp(\upwsrevL{\T}(\cdot,\upws)) = \upwsrevst{\T}(\upws)$, i.e.\ $Q = \taupws(\upws') = \taupws(\upwsrevst{\T}(\upws))$, as required.
\end{proof}}

We are now ready to present the proofs of Theorems \ref{thm:mapping:perm} and \ref{thm:mapping}.

\begin{proof}[Proof of Theorem \ref{thm:mapping}]\ 

\noindent \underline{Case 1: $\sigma = \textsf{Sup}$:} Suppose $\upws$ is a universally consistent DPWS. $\upws$ is a $\textsf{Sup}$-model of $\T$\\
iff $\upwsrev(\upws)=\upws$\\
iff $\taupws(\upwsrev(\upws)) = \taupws(\upws)$ by Lemma \ref{lem:inj}\\
iff $\upwsrev{\tautheo(\T)}(\taupws(\upws))$ by Lemma \ref{lem:upwsrev}\\
iff $\taupws(\upws)$ is a \textsf{Sup}-model of $\tautheo(\T)$.

\noindent \underline{Case 2: $\sigma = \textsf{PSt}$:} Similar to Case 1, but using Lemma \ref{lem:ubprevst} instead of Lemma \ref{lem:upwsrev}.

\noindent \underline{Case 3: $\sigma = \textsf{St}$:} follows from Case 2 since \textsf{St}-models are two-valued \textsf{PSt}-models.
% Let $\upws$ be a universally consistent DPWS. Then $\upws$ is a \textsf{St}-model of $\T$\\
% iff $(\upws,\upws)$ is a \textsf{PSt}-model of $\T$\\
% iff $\taubp(\upws,\upws)$ is a \textsf{PSt}-model of $\tautheo(\T)$ (by Case 3)\\
% iff $\taupws(\upws)$ is a \textsf{St}-model of $\tautheo(\T)$.
% 
\end{proof}

\begin{proof}[Proof of Theorem \ref{thm:mapping:perm}]\ 

\noindent \underline{Case 1: $\sigma \in \{\textsf{Sup},\textsf{PSt},\textsf{St}\}$:} follows by combining Theorems \ref{thm:mapping} and \ref{thm:link}.

 \noindent \underline{Case 2: $\sigma = \textsf{KK}$:} The \textsf{KK}-model of $\T$ is the $\leq_p$-least fixpoint of $\ubprev{\T}$ and the \textsf{KK}-model of $\tau_\T(\T)$ is the  $\leq_p$-least fixpoint of $\ubprev{\tau_\T(\T)}$. So by Lemma \ref{lem:injB}, it is enough to show that for each ordinal number $\alpha>0$,
 ${\ubprev{\T}}^\alpha((\bot,\top)_{A\in\A})$ is universally consistent and
 $\taubp({\ubprev{\T}}^\alpha((\bot,\top)_{A\in\A}) = {\ubprev{\tautheo(\T)}}^\alpha((\bot,\top))$. We prove this by transfinite induction.

%  , so there is an ordinal number $\alpha$ such that for every $\alpha' \geq \alpha$, ${\ubprev{\T}}^{\alpha'}((\bot,\top)_{A\in\A})$ is the \textsf{KK}-model of $\T$. Similarly, there is an ordinal number $\beta$ such that for every $\beta' \geq \beta$, ${\ubprev{\tautheo(\T)}}^{\beta'}((\bot,\top))$ is the \textsf{KK}-model of $\tautheo (\T)$. 

For $\alpha = 1$, this is follows from Lemma \ref{lem:perm}.

Suppose it is true for $\alpha$. Then 
\begin{align*}
 \taubp(&{\ubprev{\T}}^{\alpha+1}((\bot,\top)_{A\in\A}) = \taubp(\ubprev{\T}({\ubprev{\T}}^{\alpha}((\bot,\top)_{A\in\A})) \\
 &= \ubprev{\tautheo(\T)}(\taubp({\ubprev{\T}}^{\alpha}((\bot,\top)_{A\in\A})) \textnormal{ by Lemma \ref{lem:ubprev}}\\
 &= {\ubprev{\tautheo(\T)}}^{\alpha+1}((\bot,\top)) \textnormal{ by assumption about } \alpha.
\end{align*}

Now suppose it is true for all $\alpha < \lambda$. Then
\begin{align*}
 \tau&_B({\ubprev{\T}}^\lambda((\bot,\top)_{A\in\A})) \\
 &= \taubp(\lub(\{{\ubprev{\T}}^\alpha((\bot,\top)_{A\in\A}) \mid \alpha<\lambda\}))\\
 &= \lub(\taubp[\{{\ubprev{\T}}^\alpha((\bot,\top)_{A\in\A}) \mid \alpha<\lambda\}]) \textnormal{ by Lemma \ref{lem:plub}}\\
 &= \lub(\{{\ubprev{\tautheo(\T)}}^\alpha((\bot,\top)) \mid \alpha<\lambda\})\\
 &= {\ubprev{\tautheo(\T)}}^\lambda((\bot,\top))
\end{align*}

\noindent \underline{Case 3: $\sigma = \textsf{WF}$:} Similar to Case 2, but using Lemma \ref{lem:ubprevst} instead of Lemma \ref{lem:ubprev}.
\end{proof}

\end{document}